\newtheorem{theorem}{Theorem}[]
\pgfplotsset{compat=newest}
\Crefname{equation}{Eq.\!}{Eqs.\!}
\Crefname{figure}{Fig.\!}{Figs.\!}
\Crefname{tabular}{Tab.\!}{Tabs.\!}
\Crefname{section}{Section\!}{Sections.\!}
\newcommand{%
  \immediate\write18{texcount -1 -sum -merge -q .tex output.bbl > -words.sum }%
  \input{-words.sum} words%
}[1]{%
  \immediate\write18{texcount -1 -sum -merge -q #1.tex output.bbl > #1-words.sum }%
  \input{#1-words.sum} words%
}
\newcommand{%
  \immediate\write18{texcount -1 -sum -merge -char -q .tex output.bbl > -chars.sum }%
  \input{-chars.sum} characters (not including spaces)%
}[1]{%
  \immediate\write18{texcount -1 -sum -merge -char -q #1.tex output.bbl > #1-chars.sum }%
  \input{#1-chars.sum} characters (not including spaces)%
}
\definecolor{Linen}{rgb}{0.9803,0.9411,0.9019}
\definecolor{White}{rgb}{1,1,1}
\definecolor{Lightred}{rgb}{1,0.3803,0.3803}
\definecolor{Coral}{rgb}{1,0.4980,0.3137}
\definecolor{Grayblue}{rgb}{0.9411,0.9411,0.9803}
\definecolor{DarkLinen}{rgb}{0.729,0.7176,0.635}
\begin{document}

\newcommand{\Nb} {N_\mathsf{B}}
\newcommand{\Nm} {N_\mathsf{M}}
\newcommand{\Nl} {N_\mathsf{R}}
\newcommand{\Ns} {N}

\newcommand{\Hm} {\mathbf{H}}
\newcommand{\y} {\mathbf{y}}
\newcommand{\x} {\mathbf{x}}
\newcommand{\n} {\boldsymbol{\omega}}
\newcommand{\Hbm} {\mathbf{H}_\mathsf{BM}}
\newcommand{\Hbl} {\mathbf{H}_\mathsf{BR}}
\newcommand{\Hlm} {\mathbf{H}_\mathsf{RM}}
\newcommand{\bmOmega} {\bm{\Omega}}
\newcommand{\alphar} {\bm{\alpha}_\mathsf{r}}
\newcommand{\alphat} {\bm{\alpha}_\mathsf{t}}

\newcommand{\rhobm} {\rho_\mathsf{BM}}
\newcommand{\rhobl} {\rho_\mathsf{BR}}
\newcommand{\rholm} {\rho_\mathsf{RM}}
\newcommand{\rhoblm} {\rho_\mathsf{BRM}}

\newcommand{\tbm} {\tau_{bm}}
\newcommand{\tbl} {\tau_{br}}
\newcommand{\tlm} {\tau_{rm}}
\newcommand{\tdm} {\tau_{sm}}

\newcommand{\taubm} {\tau_\mathsf{BM}}
\newcommand{\taubl} {\tau_\mathsf{BR}}
\newcommand{\taulm} {\tau_\mathsf{RM}}
\newcommand{\taudm} {\tau_\mathsf{DM}}

\newcommand{\dbmo} {d_\mathsf{BM}}
\newcommand{\ddmo} {d_\mathsf{SM}}
\newcommand{\dblo} {d_\mathsf{BR}}
\newcommand{\dlmo} {d_\mathsf{RM}}

\newcommand{\xilm} {\xi_{\mathsf{RM}}}
\newcommand{\etal} {\eta_r}
\newcommand{\etam} {\eta_m}

\newcommand{\xibm} {\xi_{\mathsf{BM}}}

\newcommand{\chibm}{\chi_{\mathsf{B}\mathsf{M}}}
\newcommand{\chiblm}{\chi_{\mathsf{B}\mathsf{R}\mathsf{M}}}

\newcommand{\phibm} {\phi_\mathsf{BM}}
\newcommand{\phibl} {\phi_\mathsf{BR}}
\newcommand{\philm} {\phi_\mathsf{RM}}
\newcommand{\phiL} {\phi_{\mathsf{R}}}
\newcommand{\phidm} {\phi_\mathsf{SM}}

\newcommand{\thetadm} {\theta_\mathsf{SM}}
\newcommand{\thetabm} {\theta_\mathsf{BM}}
\newcommand{\thetabl} {\theta_\mathsf{BR}}
\newcommand{\thetalm} {\theta_\mathsf{RM}}

\newcommand{\thetam} {\theta_m}
\newcommand{\thetab} {\theta_b}
\newcommand{\thetal} {\theta_l}

\newcommand{\xms} {x_{\mathsf{M}}}
\newcommand{\yms} {y_{\mathsf{M}}}
\newcommand{\zms} {z_{\mathsf{M}}}

\newcommand{\phim} {\phi_m}
\newcommand{\phib} {\phi_b}
\newcommand{\phil} {\phi_r}

\newcommand{\thetaL} {\theta_{\mathsf{R}}}

\newcommand{\dm} {d_{m}}
\newcommand{\dl} {d_{r}}
\newcommand{\db} {d_{b}}
\newcommand{\dbm} {d_{bm}}
\newcommand{\ddm} {d_{sm}}
\newcommand{\dbl} {d_{br}}
\newcommand{\dlm} {d_{rm}}
\newcommand{\fn} {f_n}

\newcommand{\xb}{x_b}
\newcommand{\yb}{y_b}
\newcommand{\nb}{w}
\newcommand{\zb}{z_b}

\newcommand{\xm}{x_m}
\newcommand{\ym}{y_m}
\newcommand{\zm}{z_m}

\newcommand{\xd}{x_s}
\newcommand{\yd}{y_s}
\newcommand{\zd}{z_s}

\newcommand{\Cbrm}{C_{brm}}
\newcommand{\BCr}{\bar{C}_{r}}
\newcommand{\BCk}{\bar{C}_{k}}
\newcommand{\Cbkm}{C_{bkm}}
\newcommand{\betam}{\beta_{m}}
\newcommand{\fc}{f_{0}}
\newcommand{\thetar}{\theta_r}
\newcommand{\thetak}{\theta_k}

\newcommand{\xdi}{x_d^{(0)}}
\newcommand{\ydi}{y_d^{(0)}}
\newcommand{\zdi}{z_d^{(0)}}

\newcommand{\xmi}{x_m^{(0)}}
\newcommand{\ymi}{y_m^{(0)}}
\newcommand{\zmi}{z_m^{(0)}}

\newcommand{\cosam}{c_{\alpha_{\mathsf{M}}}}
\newcommand{\sinam}{s_{\alpha_{\mathsf{M}}}}
\newcommand{\cosbm}{c_{\beta_{\mathsf{M}}}}
\newcommand{\sinbm}{s_{\beta_{\mathsf{M}}}}

\newcommand{\cosgm}{c_{\gamma_{\mathsf{M}}}}
\newcommand{\singm}{s_{\gamma_{\mathsf{M}}}}

\newcommand{\xl}{x_r}
\newcommand{\yl}{y_r}
\newcommand{\zl}{z_r}
\newcommand{\godm}{\mathsf{G}^{(1)}_{sm}}
\newcommand{\gobm}{\mathsf{G}^{(1)}_{bm}}
\newcommand{\gobl}{\mathsf{G}^{(1)}_{bl}}
\newcommand{\gtbm}{\mathsf{G}^{(2)}_{bm}}
\newcommand{\gtdm}{\mathsf{G}^{(2)}_{sm}}
\newcommand{\gtBM}{G_{\mathsf{B,M}}^{(b,m)}}
\newcommand{\gtDM}{G_{\mathsf{D,M}}^{(d,m)}}

\newcommand{\gtbl}{\mathsf{G}^{(2)}_{br}}
\newcommand{\gtlm}{\mathsf{G}^{(2)}_{rm}}
\newcommand{\gtLM}{G_{\mathsf{R,M}}^{(r,m)}}

\newcommand{\gtBL}{G_{\mathsf{B,R}}^{(b,r)}}

\newcommand{\xmo}{x_{\mathsf{M}}}
\newcommand{\ymo}{y_{\mathsf{M}}} 
\newcommand{\zmo}{z_{\mathsf{M}}} 
\newcommand{\pmo}{\mathbf{p}_{\mathsf{M}}}

\newcommand{\pdo}{\mathbf{p}_{\mathsf{D}}}
\newcommand{\ado}{\boldsymbol{\Phi}_{\mathsf{D}}}
\newcommand{\amo}{\boldsymbol{\phi}_{\mathsf{M}}}
\newcommand{\abo}{\boldsymbol{\Phi}_{\mathsf{B}}}
\newcommand{\alo}{\boldsymbol{\Phi}_{\mathsf{R}}}

\newcommand{\am}{{a}_{\mathsf{M}}}

\newcommand{\xbo}{x_{\mathsf{B}}}
\newcommand{\ybo}{y_{\mathsf{B}}} 
\newcommand{\zbo}{z_{\mathsf{B}}} 
\newcommand{\pbo}{\mathbf{p}_{\mathsf{B}}}

\newcommand{\xlo}{x_{\mathsf{R}}}
\newcommand{\ylo}{y_{\mathsf{R}}} 
\newcommand{\zlo}{z_{\mathsf{R}}} 
\newcommand{\plo}{\mathbf{p}_{\mathsf{R}}}

\newcommand{\hbm}{h_{bm}}
\newcommand{\hbl}{h_{br}}
\newcommand{\hlm}{h_{rm}}
\newcommand{\omegal}{\omega_{r}}

\newcommand{\xbs}{{x}_{\mathsf{B}}}
\newcommand{\ybs}{{y}_{\mathsf{B}}}
\newcommand{\zbs}{{z}_{\mathsf{B}}}
\newcommand{\xds}{{x}_{\mathsf{D}}}
\newcommand{\yds}{{y}_{\mathsf{D}}}
\newcommand{\zds}{{z}_{\mathsf{D}}}
\newcommand{\xls}{{x}_{\mathsf{R}}}
\newcommand{\yls}{{y}_{\mathsf{R}}}
\newcommand{\zls}{{z}_{\mathsf{R}}}
\newcommand{\dant}{{d}_{\mathsf{ant}}}
\newcommand{\muBM} {\mu_{b,\mathsf{BM}}[n]}
\newcommand{\muBLM} {\mu_{b,\mathsf{BRM}}[n]}
\newcommand{\muBMc} {\mu^{*}_\mathsf{BM}[n]}
\newcommand{\muBLMc} {\mu^{*}_\mathsf{BRM}[n]}

\newcommand{\mubm} {\mu_{bm}[n]}
\newcommand{\cmubmp} {\mu_\mathsf{bm'}^*[n]}

\newcommand{\mublm} {\mu_{brm}[n]}
\newcommand{\mubmc} {\mu^{*}_\mathsf{bm}[n]}
\newcommand{\mublmc} {\mu^{*}_\mathsf{brm}[n]}

\newcommand{\alphamo}{\alpha_{\mathsf{M}}}
\newcommand{\betamo}{\beta_{\mathsf{M}}}
\newcommand{\gammamo}{\gamma_{\mathsf{M}}}

\newcommand{\Phim}{\Phi_{\mathsf M}}

\newcommand{\PEB}{\mathsf{PEB}}
\newcommand{\OEB}{\mathsf{OEB}}
\title{The Rendezvous Between Extreme Value Theory and Next-generation Networks}
\author{\IEEEauthorblockN{Srinivas Sagar, Athira Subhash, Chen-Feng Liu, Ahmed Elzanaty,\\Yazan H. Al-Badarneh, Sheetal Kalyani, Mohamed-Slim Alouini, Mehdi Bennis, and Lajos Hanzo}
\thanks{Srinivas Sagar and Sheetal Kalyani are with the Dept. of Electrical Engg., IIT Madras, India. (email: \{ee21d051@smail, 
skalyani@ee\}.iitm.ac.in)}
\thanks{Athira Subhash (email: \{athira3003@gmail.com\})}
\thanks{Chen-Feng Liu is with the Department of Informatics, New Jersey Institute of Technology, Newark, NJ 07102 USA (e-mail: chenfeng.liu@njit.edu).}
\thanks{A. Elzanaty is with the 5GIC \& 6GIC, Institute for Communication Systems (ICS), University of Surrey, Guildford, GU2 7XH, United Kingdom (e-mail: a.elzanaty@surrey.ac.uk)}
\thanks{Y. H. Al-Badarneh is with the Department of Electrical Engineering, The University of Jordan, Amman, 11942 (email: yalbadarneh@ju.edu.jo).}
\thanks{M.-S. Alouini is with King Abdullah University of Science and Technology (KAUST), Thuwal, Makkah Province, Saudi
Arabia. (e-mail:slim.alouini@kaust.edu.sa).}
\thanks{Mehdi Bennis is with the Centre for Wireless Communications, University of Oulu, 90014 Oulu, Finland (e-mail: mehdi.bennis@oulu.fi).}
\thanks{L. Hanzo is with the School of Electronics and Computer Science, University of Southampton, Southampton SO17 1BJ, U.K. (e-mail:lh@ecs.soton.ac.uk).}
}
\maketitle

	\acresetall 
	
\tableofcontents

\begin{acronym}[WPT-NOMA]
\acro{CLT}{central limit theorem}
\acro{RVs}{random variables}
\acro{EVT}{extreme value theory}
\acro{EVDs}{extreme value distributions}
\acro{CR}{cognitive radio}
\acro{RIS}{reconfigurable intelligent surface}
\acro{MIMO}{multiple-input multiple-output}
\acro{EMF}{electromagnetic fields}
\acro{SNR}{signal-to-noise ratio}
\acro{URLLC}{ultra-reliable and low-latency communication}
\acro{GPD}{generalized Pareto distribution}
\acro{ML}{machine learning}
\acro{OFDM}{orthogonal frequency-division multiplexing}
\acro{i.i.d.}{independent and identically distributed}
\acro{i.n.i.d.}{independent and non-identically distributed}
\acro{CDF}{cumulative distribution function}
\acro{AP}{access point}
\acro{FSO}{free-space optical communication}
\acro{SIR}{signal-to-interference ratio}
\acro{WPS}{wireless powered communication system}
\acro{BEP}{bit error probability}
\acro{BER}{bit error rate}
\acro{SU}{secondary user}
\acro{PT}{primary transmitter}
\acro{PR}{primary receiver}
\acro{ST}{secondary transmitter}
\acro{SR}{secondary receiver}
\acro{ESC}{ergodic secrecy capacity}
\acro{SINR}{signal-to-interference-plus-noise ratio}
\acro{MGF}{moment-generating function}
\acro{PDF}{probability density function}
\acro{MRC}{maximum-ratio combining}
\acro{TAS}{transmit antenna selection}
\acro{CSI}{channel state information}
\acro{QoS}{quality-of-service}
\acro{AF}{amplify-and-forward}
\acro{DF}{decode-and-forward}
\acro{SISO}{single-input single-output}
\acro{BS}{base station}
\acro{MS}{mobile stations}
\acro{MISO}{multiple-input single-output}
\acro{SC}{selection combining}
\acro{OC}{optimal combining}
\acro{ZFBF}{zero-forcing beamforming}
\acro{RBF}{random beamforming}
\acro{SDMA}{space division multiple access}
\acro{SNDR}{signal-to-noise-plus-distortion ratio}
\acro{E2E}{end-to-end}
\acro{AoI}{age of information}
\acro{V2V}{vehicle-to-vehicle}
\acro{IoT}{Internet of Things}
\acro{GEV}{generalized extreme value}
\acro{MEC}{multi-access edge computing}
\acro{RSU}{roadside unit}
\acro{VUEs}{vehicular user equipments}
\acro{UAV}{unmanned aerial vehicle}
\acro{OSR}{open set recognition}
\acro{MAB}{multi-armed bandits}
\acro{SVM}{support vector machine}
\acro{EVM}{extreme value machine}
\acro{NN}{neural network}
\acro{CAP}{compact abating probability}
\acro{SRC}{sparse representation-based classification}
\acro{AVs}{activation vectors}
\acro{MAV}{mean activation vector}
\acro{DDRN}{deep discriminative representation network}
\acro{CIP}{class-inclusion probability}
\acro{OSFD}{open set fault diagnosis}
\acro{SOSFD}{shared-domain OSFD}
\acro{COSFD}{cross-domain OSFD}
\acro{MC}{multi-task based feature extractor and classifiers}
\acro{GAN}{generative adversarial network}
\acro{CG}{counterfactual generative adversarial network}
\acro{OSC}{open-set classifier}
\acro{CNN}{convolutional neural network}
\acro{OSmIL}{open set model with incremental learning}
\acro{PAC}{probably approximately correct}
\acro{WSN}{wireless sensor networks}
\acro{PMEPR}{peak-to-mean envelope power ratio}
\acro{PAPR}{peak-to-average power ratio}
\acro{MRT}{maximum ratio transmission}
\acro{EGT}{equal gain transmission}
\acro{RRO}{resource redistributive opportunistic}
\acro{OMS}{opportunistic multicast scheduling}
\acro{DC}{direct current}
\acro{WPT-NOMA}{WPT-assisted non-orthogonal multiple access}
\acro{BAC-NOMA}{backscatter communication-assisted non-orthogonal multiple access}
\acro{CFAR}{constant false alarm rate}
\acro{PER}{packet error rate}
\acro{AWGN}{additive white Gaussian noise}
\acro{SWIPT}{simultaneous wireless information and power transfer}
\acro{NG}{next-generation}
\acro{ISAC}{integrated sensing and communication}
\acro{AI}{artificial intelligence}
\acro{3GPP}{3rd Generation Partnership Project}
\end{acronym}

\begin{abstract}
	\acresetall 

Promising technologies such as massive multiple-input and multiple-output, reconfigurable intelligent reflecting surfaces, non-terrestrial networks, millimetre wave communication, ultra-reliable low-latency communication are envisioned as the enablers for next-generation (NG)  networks. In contrast to conventional communication systems meeting specific average performance requirements, NG systems are expected to meet quality-of-service requirements in extreme scenarios as well. In this regard, extreme value theory (EVT) provides a powerful framework for the design of communication systems. In this paper, we provide a comprehensive survey of advances in communication that utilize EVT to characterize the extreme order statistics of interest. We first give an overview of the history of EVT and then elaborate on the fundamental theorems. Subsequently, we discuss different problems of interest in NG communication systems and how EVT can be utilized for their analysis. We finally point out the open challenges and future directions of EVT in NG communication systems.
\end{abstract}

\section{Introduction}
\subsection{Overview of EVT}
The statistics of extreme events has been of interest in various domains like climate studies, risk analysis in finance, 
strength determination in mechanical systems, component failure in the electronics industry, best/worst signal power determination in wireless communications, etc. \ac{EVT} is a branch of statistics that deals with the statistics of the extremes in sequences of \ac{RVs}. The field of \ac{EVT} was pioneered by Leonard Tippett (1902–1985), who worked at the British Cotton Industry Research Association. During his study of the strengths of cotton threads, he observed that their strength was determined by the weakest fibres within it. Collaborating with R. A. Fisher, Tippett derived three asymptotic limits that describe the distributions of the extremes of sequences of \ac{RVs}, assuming independence among the variables. This result was later further augmented by Gumbel in his work ``Statistics of Extremes'' in 1958 \cite{coles2001introduction}. 
\par
Like the \ac{CLT} that unifies the distribution of the average of sequences of \ac{RVs}, EVT claims that the distribution of the maximum/minimum of a sequence of \ac{i.i.d.} \ac{RVs} will always be one of the three \ac{EVDs}: Gumbel, Frechet, and Weibull, under certain uniformity assumptions. This result holds for a wide range of \ac{RVs}, even if the original distribution function is not an \ac{EVDs}. The versatility of the \ac{CLT} has led to its extensive use in deriving statistical inferences in numerous domains.
While the \ac{CLT} deals with central order statistics, events related to low probabilities fall within the realm of extreme order statistics.  
Similar to the \ac{CLT}, once the limiting distribution and the corresponding normalization in \ac{EVT} are established, we can employ the properties of the \ac{EVDs} to derive statistical inferences
\begin{figure*}[t!]
        \centering
\includegraphics[scale=0.6]{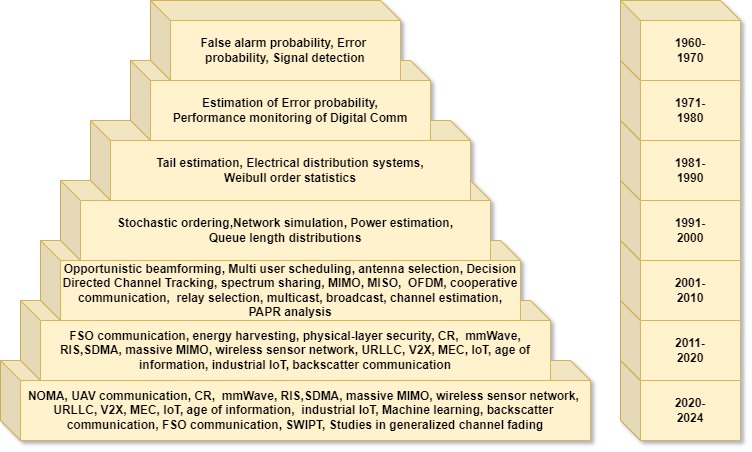}
        \caption{\ac{EVT} evolution and Communication Pyramid}
        \label{fig:pyramid}
    \end{figure*}
\subsection{Overview of \ac{EVT} in Communications}
The study of rare and extreme events holds great importance in the realm of communication systems due to their potential to inflict significant disruptions in the communication channel. Thus, comprehending the behavior and impact of these events on system performance becomes essential.
As a branch of probability theory, \ac{EVT} offers a mathematical framework for modeling and analyzing the statistics of such extreme events. This makes \ac{EVT} eminently suitable for system design and performance analysis in the rare and extreme scenarios encountered in wireless communications. 

\par By modeling the distribution of extreme events and conducting statistical inferences, \ac{EVT} facilitates predictions regarding the likelihood of future extreme events. This crucial information aids decision-making in system design and operational optimization within wireless communication systems. Furthermore, \ac{EVT} provides a means to model the relationship between communication system parameters and the occurrence of extreme events. This modeling information, furnished by \ac{EVT}, proves indispensable in optimizing the design and operation of communication systems. Overall, the incorporation of \ac{EVT} in wireless communications enhances our understanding and management of system behaviour under extreme conditions, resulting in reliable and robust designs.

\subsection{History of \ac{EVT} in Communications}\label{Sec: EVT history}
    The history of \ac{EVT} in communications dates back to the late 20-th century. In the early stages, \ac{EVT} was used to detect binary signals in the presence of \ac{AWGN} \cite{fine1962extreme},\cite{milstein1969robust}. First, \ac{AWGN} was passed through the signal detector to find the distribution of the output signal. Considering the output samples as \ac{i.i.d.} \ac{RVs}, the distribution of maximum was modeled as Gumbel distribution using \ac{EVT}. Furthermore, false alarm probability was derived to detect the presence of a signal using the Gumbel distribution \cite{fine1962extreme}. It was also used in deriving the error probabilities of binary communication receivers \cite{ashlock1966error}. Over the years, as shown in Fig.~\ref{fig:pyramid}, \ac{EVT} has evolved from being a tool for analyzing the reliability of communication systems to a framework for characterizing the performance of communication systems under extreme conditions in an ever-increasing plethora of applications.
\par {The use of \ac{EVT} in wireless communications has led to advances in various areas, including performance evaluation \cite{pun2010performance,kalyani2012analysis, 7883953, biswas2015performance, subhash2019asymptotic, sagar2023multi, subhash2022asymptotic}, channel estimation \cite{4151173, kalyani2006extreme, kalyani2007mitigation, kalyani2006leverage}, system optimization \cite{low2010optimized, abdel2019optimized, subhash2020transmit}, channel modeling \cite{mehrnia2021wireless, mehrnia2023multivariate}, etc. \ac{EVT} has been applied to a wide range of topics in communication engineering, including multi-user diversity and antenna selection  \cite{narasimhamurthy2011multi,kalyani2012analysis, al2017asymptotic,subhash2019asymptotic}, anomaly 
 detection \cite{vignotto2020extreme} , decision directed channel tracking \cite{kalyani2006extreme, kalyani2007mitigation, kalyani2006leverage}, selection diversity in \ac{CR} \cite{xia2013spectrum,hong2011throughput,subhash2020transmit}, relays and \ac{RIS} \cite{subhash2021cooperative,oyman2007opportunism,oyman2008power,biswas2015performance,dimitropoulou2020k, sagar2023multi}, \ac{MIMO} and cell-free massive \ac{MIMO} systems \cite{pun2010performance,moon2011sum,kazemi2020analysis, gao2017massive,park2009outage}, \ac{SWIPT}  \cite{subhash2021cooperative, liu2019cooperative},  \ac{URLLC} \cite{mouradian2016extreme,mehrnia2021wireless,mehrnia2022extreme, LiuBenPoo17,liu2019dynamic,zhou2020learning, zhu2021reliability, samarakoon2019distributed, HsuLiuWeiBen22}, and \ac{ML} \cite{scheirer2014probability,zhang2016sparse,bendale2016towards,oza2019c2ae,yang2021conditional,ni2021open,yu2021deep,rudd2017extreme,vignotto2020extreme,chen2023open}. In the following sub-sections, we explore the key contributions in the major applications of \ac{EVT}.} 
 
    \textbf{Multi-User Diversity and Antenna Selection:} \ac{EVT} has found practical employment in addressing multi-user diversity  \cite{song2006asymptotic, narasimhamurthy2011multi,al2017asymptotic} and antenna selection \cite{gao2017massive,subhash2019asymptotic} problems, which are crucial aspects of wireless communication systems. \ac{EVT} is exploited to characterize the best channel link for multi-user diversity and to characterize the performance under selection combining. It also provides a valuable framework for examining the distribution of the maximum/minimum received signal strength in wireless communications.

    \textbf{Selection Diversity in \ac{CR}:} \ac{EVT} has also been used in \ac{CR} systems \cite{xia2013spectrum,hong2011throughput,subhash2020transmit}, \cite{al2018asymptotic, al2019asymptotic} to analyze the performance of spectrum sensing and selection diversity  \cite{duan2019asymptotic}. Note that in CR applications characterizing the worst primary user/\ac{SU} channel and the strongest \ac{SU} channel are important aspects of system design and \ac{EVT} has proved to be an efficient tool in this. 
    
    \textbf{Relays and \ac{RIS}s:} {\ac{EVT} has been used in the analysis of relay-assisted communication systems \cite{oyman2007opportunism,oyman2008power,biswas2015performance,dimitropoulou2020k, liu2019cooperative, subhash2021cooperative} and \ac{RIS}s \cite{sagar2023multi}.} It has been shown that \ac{EVT} can provide a useful framework for analyzing the distribution of the received signal strength under the best relay/\ac{RIS} selection scheme in multi-relay/\ac{RIS} assisted communication systems. Such analyses are useful for practical system design exemplified by deciding on the optimal placement of relays and \ac{RIS}s, transmit power, etc.
    
    \textbf{Massive \ac{MIMO} and Cell-Free Massive \ac{MIMO}:} \ac{EVT} has also been used in the analysis of massive \ac{MIMO} \cite{pun2010performance,moon2011sum,kazemi2020analysis, gao2017massive,park2009outage} and cell-free massive \ac{MIMO} systems. Furthermore it can be used in characterizing the best beamformer, best antenna, and best combining method to evaluate the system's performance.

   \textbf{\ac{URLLC} Applications:} \ac{EVT} has proven to be highly valuable in analyzing the performance of \ac{URLLC} applications \cite{liu2018ultra}. In \ac{URLLC} systems \cite{mouradian2016extreme,mehrnia2021wireless,mehrnia2022extreme, LiuBenPoo17,liu2019dynamic,zhou2020learning, zhu2021reliability, samarakoon2019distributed, HsuLiuWeiBen22}, meeting stringent delay requirements presents significant challenges for system design and performance analysis. \ac{EVT} offers a powerful tool of characterizing the tail distribution of delay and outage probability. By harvesting \ac{EVT}, researchers and engineers can gain insights into the extreme events associated with delays and outages in \ac{URLLC} systems. This characterization is crucial for accurately assessing system's performance and ensuring that the stringent latency requirements are met. Furthermore, \ac{EVT} facilitates the development of robust \ac{URLLC} system designs and efficient resource allocation strategies to achieve reliable and low-latency communications.

   \textbf{Decision Directed Channel Tracking and Channel Estimation:} \ac{EVT} has also been used for mitigating the error propagation in the systems employing decision directed channel tracking \cite{4151173, kalyani2006extreme, kalyani2006leverage, kalyani2007mitigation}. In the above works decision directed channel estimation has been treated as an outlier contaminated estimation problem and EVT was employed to detect the outliers and hence mitigate error propagation thus enabling robust channel tracking/estimation. 
      
   \textbf{\ac{ML}:} As a further advance \ac{EVT} has been used to analyze communication systems with \ac{ML} \cite{yang2021conditional,ni2021open,yu2021deep}. It has been shown that \ac{EVT} can provide a potent framework for analyzing the distribution of learning errors in \ac{ML} systems, allowing for optimizing learning algorithms. The application of \ac{EVT} in \ac{ML} allows for identifying outliers, anomalies, or rare events that may significantly impact the learning process. This knowledge helps in developing robust and efficient learning algorithms that can handle diverse challenging communication scenarios. The combination of \ac{EVT} and \ac{ML} techniques offers great potential for enhancing the performance and reliability of communication systems.

    {The authors of \cite{kalyani2012asymptotic} harnessed \ac{EVT} to find the asymptotic \ac{PDF} of the maxima of $n$ \ac{i.i.d.} sums of \ac{i.n.i.d.} or correlated Gamma \ac{RVs} and proved that it follows the Gumbel \ac{CDF}.
\par In addition to the significant applications mentioned earlier, \ac{EVT} has also found favours in various other areas within wireless communications. For instance, \ac{EVT} has been employed in the design of energy-efficient networks, enabling researchers to analyze and optimize the energy consumption in wireless communication systems \cite{bahl2015asymptotic, liu2019cooperative, subhash2021cooperative}.  \ac{EVT} has also been utilized in signal detection \cite{milstein1969robust}, where it aids in identifying and characterizing rare or extreme signals buried in noise. In the context of \ac{OFDM}, \ac{EVT} has been used for  the optimization of system performance \cite{wei2002modern, jiang2008derivation, kalyani2006extreme, kalyani2007mitigation, 7883953}. Furthermore, it has been applied in user scheduling to determine the optimal scheduling \cite{al2012asymptotic, kalyani2012analysis, liu2019cooperative, kalyani2012asymptotic} strategy that maximizes system capacity or fairness. \ac{EVT} has also been instrumental in overhead threshold selection \cite{m2021efficient}, providing a framework for selecting appropriate thresholds in various wireless communications algorithms. \ac{EVT} has also been utilized in wireless localization \cite{YanGuoGuoZhaZha20}, aiding in the accurate estimation of mobile device positions by analyzing the statistical distribution of received signal strength. These are just a few \textcolor{black}{representative} examples of the diverse range of applications, where \ac{EVT} has demonstrated its effectiveness in enhancing wireless communication systems.
\begin{table*}   
        \centering
         \caption{Positioning of this work against other relevant papers surveying \ac{EVT}. }
 	  \label{tab:comp_surveys}    
        \resizebox{\textwidth}{!}{%
        \begin{tabular}{|l|c|c|ccc|c|}
        \hline
        \rowcolor[HTML]{FD6864} 
        \multicolumn{1}{|c|}{\cellcolor[HTML]{FD6864}} &
          \cellcolor[HTML]{FD6864} &
          \cellcolor[HTML]{FD6864} &
          \multicolumn{3}{c|}{\cellcolor[HTML]{FD6864}\textbf{Perspective}} &
          \cellcolor[HTML]{FD6864} \\ \cline{4-6}
        \rowcolor[HTML]{FD6864} 
        \multicolumn{1}{|c|}{\multirow{-2}{*}{\cellcolor[HTML]{FD6864}\textbf{Paper}}} &
          \multirow{-2}{*}{\cellcolor[HTML]{FD6864}\textbf{Year}} &
          \multirow{-2}{*}{\cellcolor[HTML]{FD6864}\textbf{\begin{tabular}[c]{@{}c@{}}Introduction to \\  EVT theory\end{tabular}}} &
          \multicolumn{1}{c|}{\cellcolor[HTML]{FD6864}\textbf{Mathematical}} &
          \multicolumn{1}{c|}{\cellcolor[HTML]{FD6864}\textbf{Finance}} &
          \textbf{\begin{tabular}[c]{@{}c@{}}Communication\\   Engineering\end{tabular}} &
          \multirow{-2}{*}{\cellcolor[HTML]{FD6864}\textbf{\begin{tabular}[c]{@{}c@{}}Limitations and future\\  directions in communications\end{tabular}}} \\ \hline
        \rowcolor[HTML]{F5E6E6} 
        \cite{rocco2014extreme} &
          2014 &
          Covered &
          \multicolumn{1}{c|}{\cellcolor[HTML]{F5E6E6}Yes} &
          \multicolumn{1}{c|}{\cellcolor[HTML]{F5E6E6}Yes} &
          No &
          No \\ \hline
        \rowcolor[HTML]{F5E6E6} 
       \cite{gomes2015extreme} &
          2015 &
          Covered &
          \multicolumn{1}{c|}{\cellcolor[HTML]{F5E6E6}Yes} &
          \multicolumn{1}{c|}{\cellcolor[HTML]{F5E6E6}No} &
          No &
          No \\ \hline
        \rowcolor[HTML]{F5E6E6} 
       \cite{hansen2020three} &
          2020 &
          \begin{tabular}[c]{@{}c@{}}Partially\\ Covered\end{tabular} &
          \multicolumn{1}{c|}{\cellcolor[HTML]{F5E6E6}Yes} &
          \multicolumn{1}{c|}{\cellcolor[HTML]{F5E6E6}No} &
          No &
          No \\ \hline
        \rowcolor[HTML]{F5E6E6} 
        This work &
          2024 &
          Covered &
          \multicolumn{1}{c|}{\cellcolor[HTML]{F5E6E6}Yes} &
          \multicolumn{1}{c|}{\cellcolor[HTML]{F5E6E6}No} &
          \begin{tabular}[c]{@{}c@{}}Present the detailed review\\ in the context of\\ communications applications\\ like diversity, relays,\\ spectrum sharing,\\ MIMO, URLLC, and ML.\end{tabular} &
          Yes \\ \hline
        \end{tabular}%
        }
\end{table*}
   \par \ac{EVT} can also be used to statistically characterize the peaks exceeding a threshold as a rare event. Such an analysis is usualy referred to as the peak-over-threshold approach and the authors of \cite{kumar2017modeling} used it for characterizing the statistical properties of the peaks of an \ac{OFDM} signal over a given threshold. This characterization is then used to derive approximate expressions for the symbol error probability (SEP) in \ac{OFDM} systems in the presence of clipping effects due to \textcolor{black}{insufficient} power amplifier back-off and has been demonstrated to be significantly more accurate and tighter with respect to other approximations available in the literature. 
  \color{black}
    
    {Despite its wide range of applications, \ac{EVT} has its limitations, when it comes to modeling and analyzing communication systems. Note that \ac{EVT} may not be suitable for analyzing systems having complex dependencies between various components, if these lead to correlated/dependent \ac{RVs} as most EVT results assume independent \ac{RVs}. Additionally, \ac{EVT} may not be accurate for characterizing maximum/minimum order statistics over very small sample sizes, since all the results in EVT are asymptotic in nature.}
    
    The future directions of using \ac{EVT} in \ac{NG} systems are likely to involve system performance analysis and optimization. This could include the development of new \ac{EVT}-based algorithms for wireless resource allocation, energy efficiency, optimization, and wireless localization. Additionally, \ac{EVT} could be used to analyze the performance of new \ac{NG} communication technologies, such as terahertz communication, holographic \ac{MIMO}, and autonomous networks. Finally, there is scope for further research into the limitations of \ac{EVT} and the development of new methods for overcoming these limitations, to make \ac{EVT} an even more powerful tool for the analysis and optimization of \ac{NG} systems.

\subsection{Related Review Articles}   
\begin{table*}[b]  
\centering
 \caption{List of acronyms}\label{Table: Acronyms}
\begin{tabular}{|l| l| r| r|}
\hline
 Acronym & Definition &  Acronym & Definition  
\\\hline
\hline
AF&Amplify-and-forward
&AI & Artificial intelligence 
\\\hline
AoI&Age of information
&AP&Access point
\\\hline
AV&Activation vector
&AWGN & Additive white Gaussian noise
\\\hline
BAC-NOMA&Backscatter communication-assisted non-orthogonal multiple access
&BEP&Bit error probability
\\\hline
BER&Bit error rate
&BS&Base station
\\\hline
CAP&Compact abating probability
& CDF & Cumulative distribution function
\\\hline
CFAR&Constant false alarm rate
&CIP&Class-inclusion probability
\\\hline
CLT & Central limit theorem 
&CNN&Convolutional neural network
\\\hline
COSFD&Cross-domain OSFD
&CR & Cognitive radio
\\\hline
CSI&Channel state information
&DC&Direct current
\\\hline
DDRN&Deep discriminative representation network
&DF&Decode-and-forward
\\\hline
EGT&Equal gain transmission
& EMF & Electromagnetic field
\\\hline
ESC&Ergodic secrecy capacity
&EVD & Extreme value distribution
\\\hline
EVM&Extreme value machine
& EVT & Extreme value theory
\\\hline
E2E&End-to-end
&FSO&Free-space optical communication
\\\hline
GAN&Generative adversarial network
&GEV&Generalized extreme value
\\\hline
GPD & Generalized Pareto distribution
& i.i.d. & Independent and identically distributed
\\\hline
i.n.i.d. & Independent and non-identically distributed
&IoT&Internet of things
\\\hline
ISAC& Integrated sensing and communication 
&MAB&Multi-armed bandits
\\\hline
MAV&Mean activation vector
&MC&Multi-task based feature extractor and classifiers
\\\hline
MEC&Multi-access edge computing
&MGF&Moment-generating function
\\\hline
MIMO & Multiple-input multiple-output
&MISO&Multiple-input single-output
\\\hline
ML & Machine learning
&MRC&Maximum-ratio combining
\\\hline
MRT&Maximum ratio transmission
&MS&Mobile station
\\\hline
NG & Next-generation
&NN&Neural network
\\\hline
OC&Optimal combining
&OFDM & Orthogonal frequency-division multiplexing
\\\hline
OMS&Opportunistic multicast scheduling
&OSC&Open-set classifier
\\\hline
OSFD&Open set fault diagnosis
&OSmIL&Open set model with incremental learning
\\\hline
OSR&Open set recognition
&PAC&Probably approximately correct
\\\hline
PAPR&Peak-to-average power ratio
&PDF&Probability density function
\\\hline
PER & Packet error rate
&PMEPR&Peak-to-mean envelope power ratio
\\\hline
PR&Primary receiver
&PT&Primary transmitter
\\\hline
QoS&Quality-of-service
&RBF&Random beamforming
\\\hline
RIS& Reconfigurable intelligent surface
&RRO&Resource redistributive opportunistic
\\\hline
RSU&Roadside unit
& RV & Random variable 
\\\hline
SC&Selection combining
&SDMA&Space division multiple access
\\\hline
SINR&Signal-to-interference-plus-noise ratio
&SIR&Signal-to-interference ratio
\\\hline
SISO&Single-input single-output
&SNDR&Signal-to-noise-plus-distortion ratio
\\\hline
SNR & Signal-to-noise ratio
&SOSFD&Shared-domain OSFD
\\\hline
SR&Secondary receiver
&SRC&Sparse representation-based classification
\\\hline
ST&Secondary transmitter
&SU&Secondary user
\\\hline
SVM&Support vector machine
&SWIPT&Simultaneous wireless information and power transfer
\\\hline
TAS&Transmit antenna selection
&UAV&Unmanned aerial vehicle
\\\hline
URLLC & Ultra-reliable and low-latency communication
&VUE&Vehicular user equipment
\\\hline
V2V&Vehicle-to-vehicle
&WPS&Wireless powered communication system
\\\hline
 WPT-NOMA&WPT-assisted non-orthogonal multiple access
&ZFBF&Zero-forcing beamforming
\\\hline
3GPP & 3rd Generation Partnership Project &&
\\ \hline
 \end{tabular} 
\end {table*}
There are only a few articles that review \ac{EVT} and survey its applications, namely \cite{hansen2020three,gomes2015extreme,rocco2014extreme}. A concise introductory review of \ac{EVT} can be found in \cite{hansen2020three}, while \cite{gomes2015extreme} presents a review specifically focused on univariate extremes. Rocco's work \cite{rocco2014extreme} surveys the applications of \ac{EVT} in finance. 
However, to the best of the authors' knowledge, there is currently no comprehensive article providing a consolidated survey on the application of \ac{EVT} in wireless communications, despite its notable importance in this field, as summarized before. 

Motivated by this knowledge gap, the primary objective of this survey paper is to provide an overview of how \ac{EVT} results can be applied to various applications, such as multi-user diversity, \ac{URLLC}, \ac{MIMO}, and \ac{ML} for communication systems. In this paper, we aim to address several research questions and provide insights into the crucial performance metrics of communication systems, such as outage probability, sum-rate, scaling laws, delay outage, and throughput analysis. Furthermore, we explore the question of designing efficient communication systems while considering these performance metrics.
In contrast to the existing survey articles \cite{hansen2020three,gomes2015extreme,rocco2014extreme}, Table \ref{tab:comp_surveys} highlights the unique contributions of our paper. 

\par The contributions of this paper are as follows.
\begin{enumerate}
    \item 
    We present an overview of the history of applying \ac{EVT} in wireless communications. We highlight the importance of \ac{EVT} in characterizing the extreme order statistics. Starting with the analogy with \ac{CLT}, we present intuitions behind the application of \ac{EVT} in diverse communications fields. We also summarize the recent articles related to \ac{EVT}.
     \item
    We present the mathematical preliminaries required for understanding the \ac{EVT}. First, we introduce order statistics followed by the fundamental theorems in \ac{EVT} for  \ac{i.i.d.} \ac{RVs}. We also present how the scenarios differ in the case of  \ac{i.n.i.d.} \ac{RVs} and explain the fundamental theorems. We also emphasize the need for peak over threshold using \ac{EVT} and provide the mathematical setting in this case.

    \item
   We survey the applications of \ac{EVT} in different fields of wireless communications like multi-user diversity, selection diversity in \ac{CR}, relays and \ac{RIS}s, \ac{MIMO}, \ac{URLLC}, and \ac{ML}. For each field, we describe the system model, highlight the advantages of utilizing \ac{EVT}, showcase the applicability of \ac{EVT}, and review the state-of-the-art literature. We also briefly cover the application of \ac{EVT} in miscellaneous topics like energy efficiency, signal detection, \ac{OFDM}, user scheduling, threshold selection, and localization .

    \item
   We provide numerical results to validate the effectiveness of applying \ac{EVT} in the communication fields discussed.
   \item \textcolor{black}{Finally, we present the limitations and future directions of \ac{EVT} research as the take-away message.}
\end{enumerate}
\subsection{Notations}
\par The following notations are used in the paper. If $X$ is a random variable, then $f_{X}(.)$,  $F_{x}(.)$, and $\mathbb{E}[X]$ represent the \ac{PDF}, \ac{CDF}, and expectation of $X$.    The probability of event $A$ is denoted by $\mathbb{P}[A]$. Exponential functions are represented either by $e^{(.)}$ or $\exp(.)$. The $\ell_2$-norm of vector $\mathbf{y}$ and   transpose of $\mathbf{y}$ are denoted by $\left \| \mathbf{y} \right \|$, and $\mathbf{y}^{T}$ respectively. The maximum of $\left \{ X_1, X_2,\cdots,X_K \right \}$ is denoted by
 $\max \left \{ X_1, X_2,\cdots,X_K \right \}$ and $\arg \max \left \{ X_1, X_2,\cdots,X_K \right \}$ represents the index of the maximum. Also, $\left | i \right |$ denotes absolute value of $i$. {Furthermore, for the sake of readability, all acronyms and their definitions in this paper are listed in Table \ref{Table: Acronyms}.} 
\subsection{Organization}
In this work, we commence with an \ac{EVT} overview of various fields and also present the overview of \ac{EVT} in communications. After that, we present the detailed history of \ac{EVT} in wireless communications. In Section II, we present the mathematical preliminaries required for understanding \ac{EVT} and also the procedure of establishing order statistics with \ac{EVT}. Section III deals with the potential applications of \ac{EVT} in various fields of communications. Each application is described by a common system model and presented with the state of the art literature. Finally, its limitations and future directions are presented in Section IV.

\section{Mathematics behind Extreme Value Theory}
 \begin{figure*}
        \centering
        \tikzset{every picture/.style={line width=0.75pt}} 

        \begin{tikzpicture}[x=0.75pt,y=0.75pt,yscale=-1,xscale=1]
        
        \draw  [fill={rgb, 255:red, 240; green, 217; blue, 217 }  ,fill opacity=1 ] (60.67,85.33) -- (429,85.33) -- (429,125.33) -- (60.67,125.33) -- cycle ;
        \draw  [fill={rgb, 255:red, 219; green, 215; blue, 215 }  ,fill opacity=1 ] (61,175.83) -- (150,175.83) -- (150,234.83) -- (61,234.83) -- cycle ;
        \draw  [fill={rgb, 255:red, 207; green, 204; blue, 204 }  ,fill opacity=1 ] (183.67,175.33) -- (272.67,175.33) -- (272.67,234.33) -- (183.67,234.33) -- cycle ;
        \draw  [fill={rgb, 255:red, 201; green, 197; blue, 197 }  ,fill opacity=1 ] (304.67,175) -- (431,175) -- (431,234) -- (304.67,234) -- cycle ;
        \draw  [fill={rgb, 255:red, 218; green, 232; blue, 204 }  ,fill opacity=1 ] (61,285.67) -- (430,285.67) -- (430,325.67) -- (61,325.67) -- cycle ;
        \draw   (94,238.33) -- (101.5,234.33) -- (109,238.33) -- (104,238.33) -- (104,280.33) -- (109,280.33) -- (101.5,284.33) -- (94,280.33) -- (99,280.33) -- (99,238.33) -- cycle ;
        \draw   (216,239.33) -- (223.5,235.33) -- (231,239.33) -- (226,239.33) -- (226,281.33) -- (231,281.33) -- (223.5,285.33) -- (216,281.33) -- (221,281.33) -- (221,239.33) -- cycle ;
        \draw   (353,239.33) -- (360.5,235.33) -- (368,239.33) -- (363,239.33) -- (363,281.33) -- (368,281.33) -- (360.5,285.33) -- (353,281.33) -- (358,281.33) -- (358,239.33) -- cycle ;
        \draw    (215,125) -- (110.81,173.9) ;
        \draw [shift={(109,174.75)}, rotate = 334.86] [color={rgb, 255:red, 0; green, 0; blue, 0 }  ][line width=0.75]    (10.93,-3.29) .. controls (6.95,-1.4) and (3.31,-0.3) .. (0,0) .. controls (3.31,0.3) and (6.95,1.4) .. (10.93,3.29)   ;
        \draw    (235,126.75) -- (234.52,173.25) ;
        \draw [shift={(234.5,175.25)}, rotate = 270.59] [color={rgb, 255:red, 0; green, 0; blue, 0 }  ][line width=0.75]    (10.93,-3.29) .. controls (6.95,-1.4) and (3.31,-0.3) .. (0,0) .. controls (3.31,0.3) and (6.95,1.4) .. (10.93,3.29)   ;
        \draw    (263,124.75) -- (366.19,173.89) ;
        \draw [shift={(368,174.75)}, rotate = 205.46] [color={rgb, 255:red, 0; green, 0; blue, 0 }  ][line width=0.75]    (10.93,-3.29) .. controls (6.95,-1.4) and (3.31,-0.3) .. (0,0) .. controls (3.31,0.3) and (6.95,1.4) .. (10.93,3.29)   ;
        
        \draw (219.5,93.5) node [anchor=north west][inner sep=0.75pt]   [align=left] {\textbf{EVT}};
        \draw (72,181.33) node [anchor=north west][inner sep=0.75pt]   [align=left] { \textbf{ Order}\\\textbf{Statistics}};
        \draw (187.67,179.33) node [anchor=north west][inner sep=0.75pt]   [align=left] { \textbf{Statistical}\\\textbf{ Inference}};
        \draw (305.67,181) node [anchor=north west][inner sep=0.75pt]   [align=left] { \textbf{Communication}\\\textbf{  \ \ \ \ Theory}};
        \draw (98,294.33) node [anchor=north west][inner sep=0.75pt]   [align=left] 
        {\textbf{EVT for Communication  Engineering }};  

        \end{tikzpicture}
        \caption{Various mathematical tools across different disciplines are required to utilize \ac{EVT} for  communications engineering.}
        \label{fig:generalEVTCommTools}
    \end{figure*}
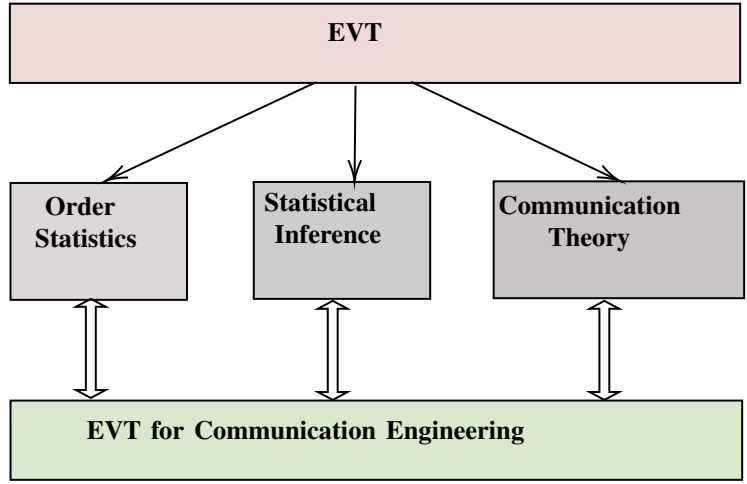

\ac{EVT} is a field of statistics dealing with large deviations from the median/central values. Such values with large deviations from the median are referred to as extreme values. Let $X_1,X_2,\cdots,X_K$ be a sequence of $K$ \ac{RVs}, where $X_{(1:K)} \leq X_{(2:K)}\leq \cdots X_{(K:K)}$ represents the corresponding ordered sequence. Here, $X_{(K-k+1:K)}$ is referred to as the $k$-th order statistic of the sequence of \ac{RVs}. It was observed that the statistical properties of the $k$-th order statistics are very different for different values of $p$ when $\frac{k}{K} \to p$ and $K \to \infty$. EVT deals with the scenario, where $k$ or {$K-k$} is fixed and  $K \to \infty$ \cite{david2004order}. One of the most widely used result from \ac{EVT} states that under an appropriate normalization, the distribution of the maximum of a sequence of \ac{i.i.d.} \ac{RVs} converges to one of three distributions, namely the Gumbel, the Frechet and the Weibull \ac{CDF}s \cite{david2004order}. Mathematically, this result can be stated as follows: Assume that there exist normalizing constants $a_K$ and $b_K$ such that 
\begin{equation}
F^K(a_Kx+b_K) \xrightarrow[K \to \infty]{}  G(x)
\label{limit_1}
\end{equation} for any point of continuity $x$ of $G(.)$, and let $X^K_{max} := \max \{X_1,\cdots,X_K\}$ for $X_i \sim F(.)$. Note that here, $F^K(.)$ is the CDF of the maximum of $K$ \ac{i.i.d.} RVS along with CDF $F(.)$. Then $G(.)$ is up to a location and scale shift an extreme value distribution. The class of distributions
$F$ satisfying (\ref{limit_1}) is termed as the maximum domain of attraction or simply the domain of attraction of $G$. \ac{EVDs} were mathematically characterized by Fisher, Tippet and Gumbel and one of the most widely used result is stated below \cite[Theorem 1.1.3]{de2006extreme}.

\begin{theorem}
The class of EVDs is $G_{\beta}\left(ax+b\right)$ with $a>0$ and $b \in \mathbb{R}$, where
\begin{equation}
G_{\beta}(x) = \exp\left( -\left(1+\beta x \right)^{\frac{-1}{\beta}}\right); \ \ 1+\beta x >0
\end{equation}with $\beta \in \mathbb
R$ and for $\beta=0$, the right-hand side is interpreted as $\exp\left[-\exp\left(-x\right)\right]$. 
\end{theorem}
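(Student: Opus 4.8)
The plan is to establish the classical Fisher--Tippett--Gnedenko characterization by showing that any non-degenerate limit $G$ in \eqref{limit_1} must satisfy a \emph{stability relation}, and then solving that functional equation. First I would observe that if $F^K(a_K x + b_K) \to G(x)$, then for any fixed positive integer $m$, passing to the subsequence $mK$ gives $F^{mK}(a_{mK} x + b_{mK}) \to G(x)$ as well, while $F^{mK}(a_K x + b_K) = \bigl(F^K(a_K x + b_K)\bigr)^m \to G(x)^m$. A convergence-of-types argument (the Khinchin lemma: if $F_n(c_n x + d_n) \to U(x)$ and $F_n(\gamma_n x + \delta_n) \to V(x)$ with $U,V$ non-degenerate, then $c_n/\gamma_n$ and $(d_n-\delta_n)/\gamma_n$ converge to finite limits $A>0$, $B$, and $V(x) = U(Ax+B)$) then forces the existence of constants $A_m > 0$ and $B_m \in \mathbb{R}$ with
\begin{equation}
G(x)^m = G(A_m x + B_m)
\label{stability}
\end{equation}
for every $m \in \mathbb{N}$. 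This max-stability property is the structural heart of the theorem.

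Next I would solve \eqref{stability}. Writing $G(x) = \exp(-e^{-\psi(x)})$ wherever $0 < G(x) < 1$ (legitimate since $G$ is a nondegenerate distribution function, hence continuous and strictly increasing on the interior of its support by a standard argument applied to \eqref{stability}), the relation becomes $\psi(A_m x + B_m) = \psi(x) + \log m$. Extending from integers to rationals and then by monotonicity to all reals $t > 0$, one obtains functions $A(t) > 0$, $B(t)$ with $\psi(A(t) x + B(t)) = \psi(x) + \log t$, and $A(\cdot)$, $B(\cdot)$ satisfy Cauchy-type multiplicative/additive functional equations whose only monotone solutions are power laws and logarithms. Distinguishing the cases $A(t) \equiv 1$ versus $A(t) = t^{-\beta}$ with $\beta \neq 0$ yields, after inverting, exactly the two-parameter family: $\psi(x)$ linear (the Gumbel case, $\beta = 0$) or $\psi(x) = \tfrac{1}{\beta}\log(1 + \beta x)$ on $\{1 + \beta x > 0\}$. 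Re-exponentiating gives $G(x) = \exp\bigl(-(1+\beta x)^{-1/\beta}\bigr)$ up to the affine reparametrization $x \mapsto ax + b$, which is precisely the claimed form $G_\beta(ax+b)$, with the $\beta = 0$ member read off as the limit $\exp(-e^{-x})$.

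The main obstacle I anticipate is the rigorous handling of the convergence-of-types / Khinchin lemma step and the subsequent extension of the discrete stability relation \eqref{stability} to a one-parameter family indexed by a continuous variable: one must carefully control the normalizing sequences $a_K, b_K$ along subsequences and rule out degeneracies (e.g.\ $A_m \to 0$ or $A_m \to \infty$), using only that $G$ is a genuine, non-degenerate CDF. Once \eqref{stability} is in hand for all $m$ and promoted to all real $t>0$, the remaining work is the (routine but slightly delicate) classification of monotone solutions of the associated functional equations. I would lean on \cite[Theorem 1.1.3]{de2006extreme} and \cite{david2004order} for the technical lemmas on convergence of types and on the structure of max-stable laws, citing them rather than reproving them in full, since the excerpt already invokes that source.
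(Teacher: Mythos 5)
Your sketch is essentially correct, but note that the paper itself offers no proof of this theorem at all: it is a survey, and the statement is simply quoted from \cite[Theorem 1.1.3]{de2006extreme}, in the same way that Theorem~\ref{Theorem2} is dispatched with a pointer to \cite{pickands1975statistical,balkema1974residual}. So there is no in-paper argument to compare against; what you have reconstructed is the classical Fisher--Tippett--Gnedenko proof that lives in the cited literature. Your route --- extract the max-stability relation $G^m(x)=G(A_m x+B_m)$ from (\ref{limit_1}) via the Khinchin convergence-of-types lemma, promote it to a continuous one-parameter family, and classify the monotone solutions of the resulting Cauchy-type functional equations to land on $G_\beta$ up to an affine change of variable --- is the standard Gnedenko-style argument (as in Leadbetter \emph{et al.} or Resnick); de Haan and Ferreira actually prove their Theorem~1.1.3 by a somewhat different path, working with the inverse function $U(t)=F^{-1}(1-1/t)$ and regular-variation arguments, which buys them the domain-of-attraction criteria as a by-product, whereas the max-stability route is more self-contained for the characterization alone. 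Two small cautions: citing \cite[Theorem 1.1.3]{de2006extreme} ``for the technical lemmas'' is circular, since that is precisely the statement being proved --- cite the convergence-of-types lemma and the classification of max-stable laws separately; and the continuity/strict monotonicity of $G$ on the interior of its support, which you use to define $\psi$, does deserve the explicit argument you gesture at, since it is exactly where degenerate or discontinuous limits are ruled out.
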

The parameter $\beta$ is the extreme value index. Hence, the EVD, if it exists, forms a simple single-parameter family, apart from the scale and location parameters. Furthermore, after some algebraic manipulations of $G_\beta(x)$, the sub-cases for $\beta>0$, $\beta<0$ and $\beta=0$ can be identified to be the distribution functions of the Frechet, Weibull and Gumbel \ac{RVs} respectively. These distribution functions are given by,  
\begin{align}
\text{Gumbel}: & \quad    G(x)=\exp \left\{-\exp \left(-\left(\frac{x-b}{a}\right)\right)\right\},      \\
\text{Weibull}: &  \quad G(x)= \begin{cases}\exp \left\{-\left(-\left(\frac{x-b}{a}\right)\right)^{\frac{1}{\beta}}\right\}, & x<b, \\ 1, & x \geq b,\end{cases}      \\
\text{Frechet}: & \quad   G(x)= \begin{cases}0, & x \leq b, \\ \exp \left\{-\left(\frac{x-b}{a}\right)^{-\frac{1}{\beta}}\right\}, & x>b,\end{cases}    
\label{gev}
\end{align}  
for $\beta>0$. Fig. \ref{pdf_fig_1} shows the PDF of the EVD for the three choices of $\beta$ discussed above. 
\begin{figure}
        \centering
        \includegraphics[scale=0.6]{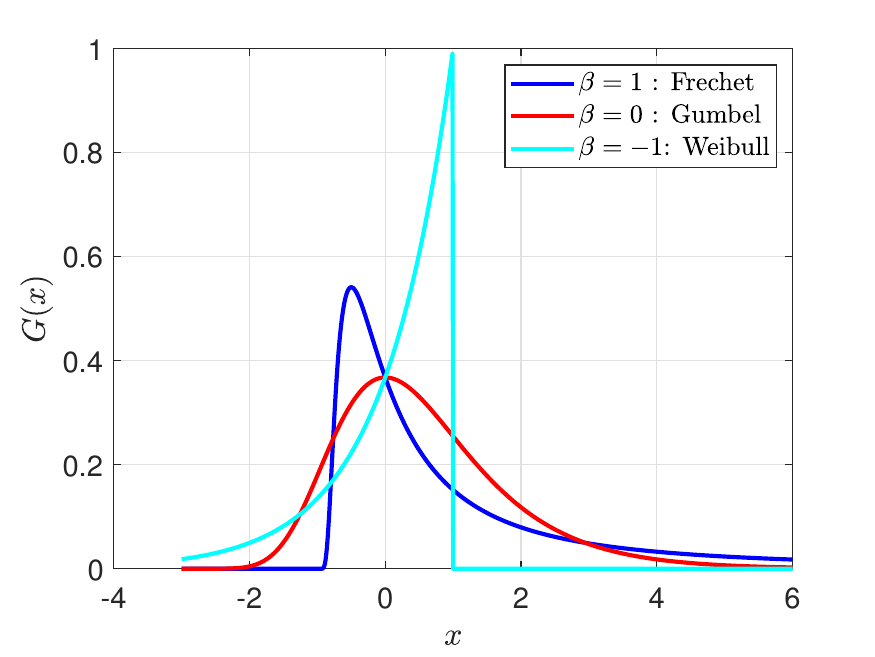}
    \caption{PDF of EVDs for $a=0$, $b=1$.}
    \label{pdf_fig_1}
    \end{figure}

Results from \ac{EVT} state that regardless of the parent distribution, the tail of a large class of distributions behaves alike under appropriate normalization. Recall that the characterization of the exact distribution of the extremes (even the maximum or the minimum) of a sequence of \ac{RVs} is not a trivial task for most of the commonly encountered distributions. For example, the \ac{CDF} of the maximum and the minimum of a sequence of $K$ \ac{RVs} are characterized in terms of the product of the \ac{CDF} of the respective \ac{RVs}, i.e., 
\begin{align}
& \mathbb{P} \left(X_{(K:K)} \leq x \right) = \left( F(x)\right)^K, \ \ \text{and} \\ & 1-\mathbb{P} \left(X_{(1:K))} \leq x \right) = \left( 1- F(x)\right)^K.
\end{align}
Such a product of $K$ functions will not result in a very simple function, especially for large values of $K$. Using results from \ac{EVT}, we can characterize the order statistics for many common distribution functions as one of the three \ac{EVDs} discussed above. There has been much interest in identifying the conditions under which such a normalization exists and in identifying the corresponding normalization constants \cite{falk2010laws}. 

\textcolor{black}{Let us consider a simple example, where we examine the limiting distributions of the maximum and minimum of $K$ \ac{i.i.d.} exponential \ac{RVs}, each with the \ac{CDF} $F\left( x\right)=1-\exp\left( -\frac{x}{a} \right)$. It can be observed that the maximum of $K$ \ac{i.i.d.} exponential \ac{RVs} follows a Gumbel distribution, with normalizing constants $a_K=a\log\left( K \right)$ and $b_K=a$. Fig. \ref{fig:image2} (a) and (b) display the simulated and theoretical limiting \ac{CDF} and \ac{PDF} for the maximum of exponential \ac{RVs}.
Similarly, the minimum of $K$ \ac{i.i.d.} exponential \ac{RVs} follows a Weibull distribution, with normalizing constants $a_K=-a\log\left( 1-\frac{1}{K} \right)$ and $b_K=0$. The corresponding results are presented in Fig. \ref{fig:image2} (c) and (d).}
\begin{figure*}
    \begin{subfigure}{0.5\textwidth}
        \includegraphics[width=0.9\linewidth, height=5cm]{{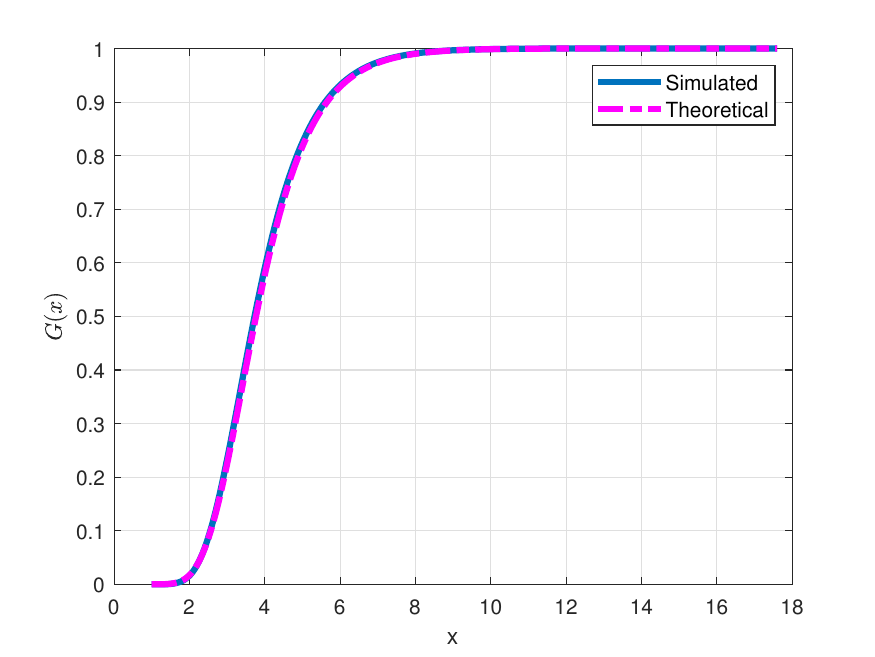}} 
        \caption{Gumbel CDF}
        \label{fig:zero}
    \end{subfigure}
    \begin{subfigure}{0.5\textwidth}
        \includegraphics[width=0.9\linewidth, height=5cm]{{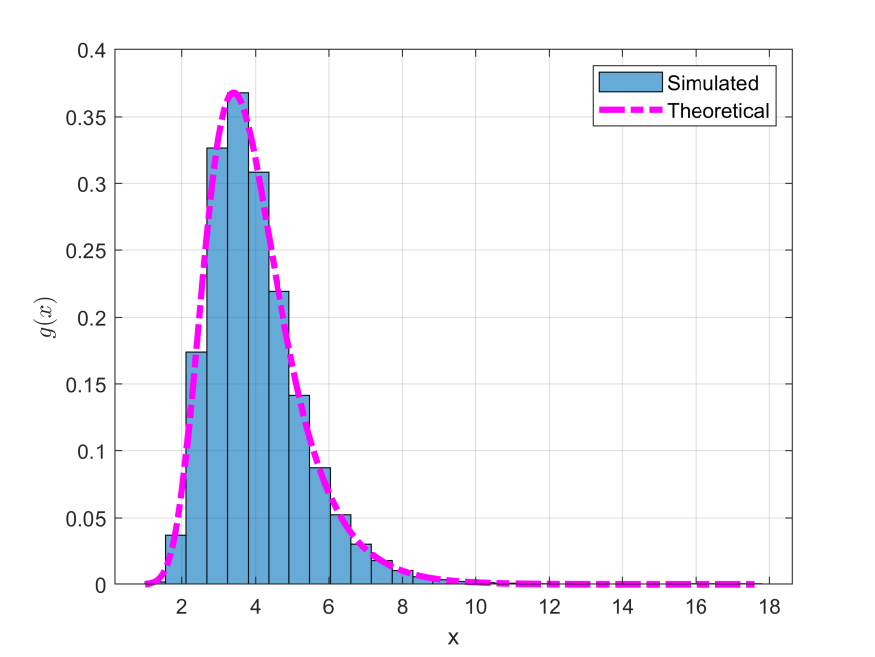}}
        \caption{Gumbel PDF}
        \label{fig:one}
    \end{subfigure}
    \begin{subfigure}{0.5\textwidth}
        \includegraphics[width=0.9\linewidth, height=5cm]{{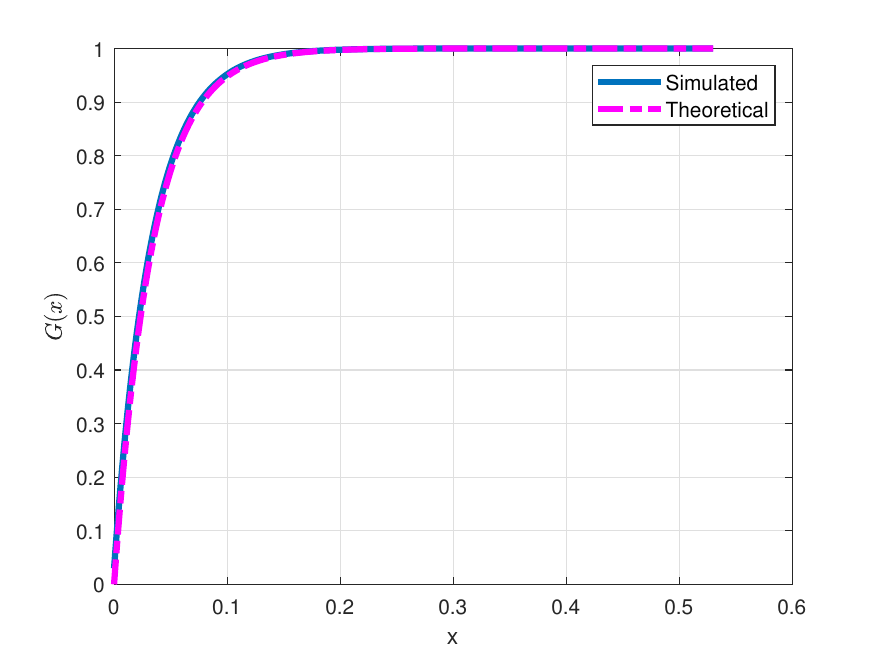}} 
        \caption{Weibull CDF}
        \label{fig:two}
    \end{subfigure}
    \begin{subfigure}{0.5\textwidth}
        \includegraphics[width=0.9\linewidth, height=5cm]{{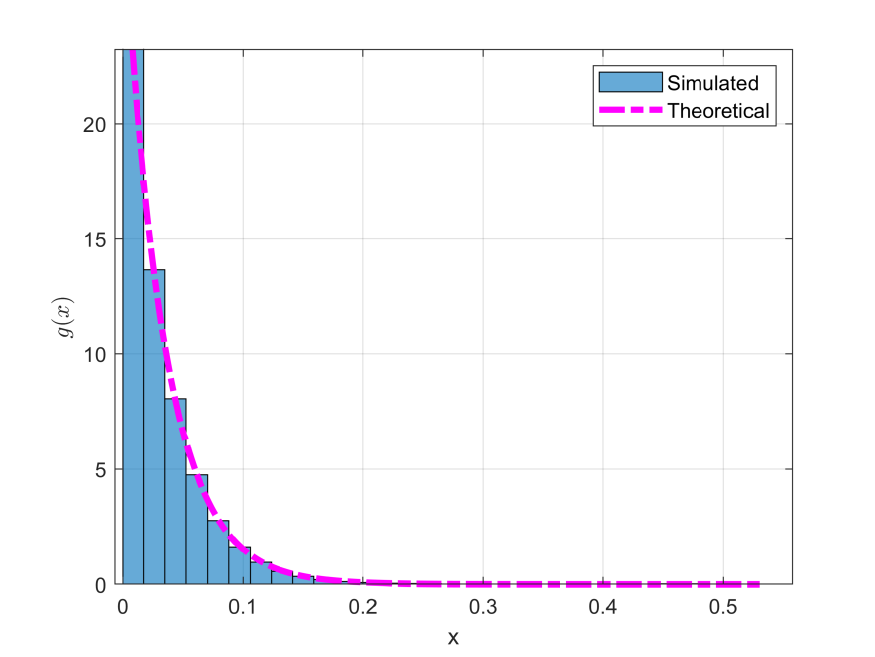}} 
        \caption{Weibull PDF}
        \label{fig:four}
    \end{subfigure}
\caption{\textcolor{black}{llustration of the limiting distributions of the maximum and minimum for an exponential random variable. (a) shows the Gumbel distribution as the limiting distribution for the maximum of $ K$ \ac{i.i.d.} exponential \ac{RVs}, while (b) presents the corresponding \ac{PDF}. (c) displays the Weibull distribution as the limiting distribution for the minimum of $ K$ \ac{i.i.d.} exponential \ac{RVs}, and  (d) shows the corresponding \ac{PDF}.}}
\label{fig:image2}
\end{figure*}

The case of extreme statistics over sequences of \ac{i.i.d.}~\ac{RVs} are very well studied, and we have rules to identify the limiting distribution and the corresponding normalization constants for most of the practical cases. The case of \ac{i.n.i.d.} \ac{RVs} is relatively new in the communication literature, and only a few contributions discuss the limiting distribution of the extremes in such scenarios \cite{subhash2021cooperative,subhash2022asymptotic}. The following sub-sections discuss the key results for the cases of  \ac{i.i.d.} and \ac{i.n.i.d.} \ac{RVs}.

\subsection{\ac{EVT} for \ac{i.i.d.} \ac{RVs}}
Leadbetter {\it et al.} \cite{leadbetter1983exceedances} gave a comprehensive account of the necessary and sufficient conditions for a distribution function $F$ to belong to the domain of attraction $G$ and the characterizations of $a_{K}$ and $b_{K}$ when $G$ is one of the three \ac{EVDs}. The conditions for establishing the limiting distributions and choices of normalization constants that can achieve the asymptotics are discussed in detail in textbooks like \cite{de2006extreme} and \cite{falk2010laws}. Given the expression for the \ac{CDF} $F$ of each of the RV in the sequence, the conditions to identify the domain of attraction can be verified using basic algebra. The mathematical complexities to establish these limits depend upon the distribution function $F$. Once the domain of attraction is established, one can find appropriate choices for $a_K$ and $b_K$ that satisfy the limit in (\ref{limit_1}) for the corresponding distribution. Note that the choices of normalization constants need not be unique, and the expressions in \cite{de2006extreme } and \cite{falk2010laws} serve as a convenient choice to study the asymptotics for any distribution function belonging to the domain of attraction in an EVD. There are also results discussing the convergence of the moments of the asymptotic order statistics \cite{de2007extreme} to the moments of the EVDs.

\subsection{\ac{EVT} for \ac{i.n.i.d.} \ac{RVs}}
In contrast to the case of \ac{i.i.d.} \ac{RVs}, the limiting distribution of the asymptotic order statistics over sequences of \ac{i.n.i.d.} \ac{RVs} is not always one of the three \ac{EVDs} discussed in the above section. Furthermore, we do not have any known general expression for the normalization constants, unlike in the case of \ac{i.i.d.} \ac{RVs}. Mejzler \cite{mejzler1969some} published the first study discussing the order statistics of sequences of \ac{i.n.i.d.} \ac{RVs}, 40 years after the first paper on the  
\ac{i.i.d.} \ac{RVs} by Fisher and Tippett in the year 1928. Many years later, Barakat \textit{et al}. in \cite{barakat2013limit} extended this result to the case of the random maximum of \ac{i.n.i.d.} random vectors, where the sample size is also assumed to be a random quantity. The results in \cite{barakat2013limit} are general and can be used for characterizing the maximum over independent sequences of \ac{RVs} encountered in communication systems.

\subsection{Peak over Threshold Using \ac{EVT}}
Characterizing the statistics of the peaks over a particular threshold is significant in different engineering arenas (characterizing extreme floods, the maximum load on structures, peak network traffic, financial risk studies, etc.). It is very challenging to mathematically characterize the exceedance over a threshold, when this event is rare. However, if the probability of such an event is non-zero, it is very important to consider such scenarios in system planning (to avoid the risks of the extreme event.). There are results in \ac{EVT} that study the statistics of such rare events that can be expressed as the exceedance of a threshold \cite{falk2010laws}. Here, we have results that hold asymptotically (in many practical scenarios, even in the non-asymptotic regime) and allow the characterization of these extreme events in terms of simple and tractable statistics. The key result from \ac{EVT} used for characterizing the peak over the threshold is given below:
\begin{theorem}\label{Theorem2}
   Let $\left(X_1, X_2, \ldots, X_K\right)$ be a sequence of \ac{i.i.d.} \ac{RVs}, and let $F_u$ be their conditional excess distribution function. Then, for a large class of underlying distribution functions $F$, and large $u, F_u$ is well approximated by the \ac{GPD}. That is: $F_u(x) \rightarrow G_{\beta, \delta}(x)$ as $u \rightarrow \infty$, where
$$
G_{\beta, \delta}(x)= \begin{cases}1-e^{-\beta x}, & \text { if } \gamma=0, x>0, \\ 1-(1+\gamma \beta x)^{-1 / \gamma}, & \text { if } \gamma \neq 0, x>0.\end{cases}$$ 
\end{theorem}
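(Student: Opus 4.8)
The plan is to deduce this from the domain-of-attraction framework recalled above: the ``large class of underlying distribution functions'' in the statement is precisely the set of $F$ lying in the maximum domain of attraction of some \ac{EVDs} $G_\beta$, and under that standing assumption the Pickands--Balkema--de~Haan conclusion follows by re-expressing the convergence of the normalised maximum as a statement about the tail ratio $\bar F(u+x)/\bar F(u)$, where $\bar F := 1-F$. First I would fix notation: let $x_F := \sup\{x : F(x) < 1\}$ be the right endpoint of $F$, and for $0 \le x < x_F - u$ define the conditional excess distribution
\[
F_u(x) = \mathbb{P}\bigl(X - u \le x \mid X > u\bigr) = \frac{F(u+x) - F(u)}{\bar F(u)}, \qquad \text{so that} \qquad 1 - F_u(x) = \frac{\bar F(u+x)}{\bar F(u)}.
\]
By Theorem~1 together with the standing hypothesis there exist norming constants $a_K > 0$ and $b_K$ with $F^K(a_K x + b_K) \to G_\beta(x) = \exp\bigl(-(1+\beta x)^{-1/\beta}\bigr)$; taking logarithms and using $-\log F^K = -K\log F \sim K\bar F$ as $F \uparrow 1$ turns this into $K\,\bar F(a_K x + b_K) \to (1+\beta x)^{-1/\beta}$ for every $x$ with $1+\beta x>0$.

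The second step, which I expect to be the crux, is to upgrade this \emph{sequential} tail relation into a \emph{continuous} one, namely to produce a positive auxiliary function $a(\cdot)$ with
\[
\lim_{u \uparrow x_F} \frac{\bar F\bigl(u + t\,a(u)\bigr)}{\bar F(u)} = (1+\beta t)^{-1/\beta} \qquad \text{for all } t \text{ with } 1+\beta t > 0,
\]
the right-hand side being interpreted as $e^{-t}$ when $\beta = 0$. This requires the Karamata/de~Haan theory of regularly varying and $\Pi$-varying functions together with a Polya-type uniform-convergence argument for monotone functions, and it must be carried out separately in the three regimes: for $\beta > 0$ (Frechet domain) $\bar F$ is regularly varying at infinity with index $-1/\beta$ and one may take $a(u) = \beta u$; for $\beta < 0$ (Weibull domain) $x_F < \infty$, $\bar F(x_F - \cdot)$ is regularly varying at $0$, and one takes $a(u) = -\beta(x_F - u)$; for $\beta = 0$ (Gumbel domain) $\bar F$ is $\Pi$-varying and $a(u)$ may be taken as the mean excess function $a(u) = \frac{1}{\bar F(u)}\int_u^{x_F} \bar F(s)\,ds$. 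Monotonicity of $\bar F$ and continuity of the limiting function then promote this convergence to local uniformity in $t$, which is what makes it legitimate to evaluate it at $x = t\,a(u)$.

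The final step is then immediate: substituting $x = t\,a(u)$ into $1 - F_u(x) = \bar F(u+x)/\bar F(u)$ and invoking the continuous limit gives
\[
1 - F_u(x) \longrightarrow \bigl(1 + \beta\,x/a(u)\bigr)^{-1/\beta},
\]
so $F_u(x) \to 1 - \bigl(1 + \beta\,x/a(u)\bigr)^{-1/\beta}$, the generalized Pareto distribution with shape $\beta$ and scale $a(u)$ (read as $1 - e^{-x/a(u)}$ when $\beta = 0$); matching shape and scale with the parameters in the statement yields exactly $G_{\beta,\delta}$, so the argument is complete. The manipulations in this last step are routine; essentially all of the work lies in the second step --- converting the discrete domain-of-attraction condition into the uniform tail-ratio limit and handling the finite-endpoint ($\beta<0$) and slowly varying ($\beta=0$) regimes with their distinct choices of the auxiliary function $a(u)$.
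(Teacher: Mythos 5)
Your proposal is mathematically sound, but it takes a genuinely different route from the paper for the simple reason that the paper offers no argument at all: its ``proof'' of Theorem~\ref{Theorem2} is a pointer to the original references of Pickands and Balkema--de~Haan \cite{pickands1975statistical,balkema1974residual}. What you have written is a correct sketch of the standard modern derivation (the one found in de~Haan--Ferreira or Embrechts et al.): identify the ``large class'' with the max-domain of attraction of Theorem~1, pass from $F^K(a_Kx+b_K)\to G_\beta(x)$ to $K\bar F(a_Kx+b_K)\to(1+\beta x)^{-1/\beta}$ by taking logarithms, upgrade this to the continuous tail-ratio limit $\bar F(u+t\,a(u))/\bar F(u)\to(1+\beta t)^{-1/\beta}$ via Karamata/de~Haan theory with the correct auxiliary functions in the three regimes ($a(u)=\beta u$, $a(u)=-\beta(x_F-u)$, and the mean excess function), and then read off the GPD form of the excess distribution. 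Your regime-by-regime choices of $a(u)$ check out, and you correctly locate the technical weight in the regular-variation/$\Pi$-variation step and the Polya-type uniformity argument. Two small caveats: first, your final display writes a limit whose right-hand side still contains $a(u)$, which strictly should be phrased as a $u$-dependent approximation, i.e.\ $\sup_{0<x<x_F-u}\left|F_u(x)-G_{\beta,a(u)}(x)\right|\to 0$ as $u\uparrow x_F$, which is the form in which Pickands--Balkema--de~Haan is actually stated (the paper's own loose ``$F_u(x)\rightarrow G_{\beta,\delta}(x)$'' hides the same issue, and also uses $\gamma$, $\beta$, $\delta$ inconsistently); second, the domain-of-attraction characterizations you invoke in the three regimes are themselves nontrivial theorems, so your argument is a reduction to known de~Haan theory rather than a self-contained proof. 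As a blind reconstruction, though, your approach buys something the paper's citation does not: it makes explicit how Theorem~\ref{Theorem2} is tied to the Fisher--Tippett--Gnedenko framework of Theorem~1 and where the shape parameter and the threshold-dependent scale come from.
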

\begin{proof}
    Please see \cite{pickands1975statistical,balkema1974residual}.
\end{proof}
\begin{figure*} 
    \centering
   \tikzset{every picture/.style={line width=0.75pt}} 

        \begin{tikzpicture}[x=0.75pt,y=0.75pt,yscale=-1,xscale=1]
        
        \draw  [color={rgb, 255:red, 0; green, 0; blue, 0 }  ,draw opacity=1 ][fill={rgb, 255:red, 237; green, 241; blue, 226 }  ,fill opacity=1 ] (5.76,26) -- (636.49,26) -- (636.49,102.34) -- (5.76,102.34) -- cycle ; \draw  [color={rgb, 255:red, 0; green, 0; blue, 0 }  ,draw opacity=1 ] (17.75,38) -- (624.49,38) -- (624.49,90.34) -- (17.75,90.34) -- cycle ; \draw  [color={rgb, 255:red, 0; green, 0; blue, 0 }  ,draw opacity=1 ] (5.76,26) -- (17.75,38) ; \draw  [color={rgb, 255:red, 0; green, 0; blue, 0 }  ,draw opacity=1 ] (636.49,26) -- (624.49,38) ; \draw  [color={rgb, 255:red, 0; green, 0; blue, 0 }  ,draw opacity=1 ] (636.49,102.34) -- (624.49,90.34) ; \draw  [color={rgb, 255:red, 0; green, 0; blue, 0 }  ,draw opacity=1 ] (5.76,102.34) -- (17.75,90.34) ;
        \draw  [fill={rgb, 255:red, 213; green, 225; blue, 238 }  ,fill opacity=1 ] (4,334.14) .. controls (4,324.31) and (11.97,316.34) .. (21.8,316.34) -- (75.21,316.34) .. controls (85.05,316.34) and (93.02,324.31) .. (93.02,334.14) -- (93.02,447) .. controls (93.02,447) and (93.02,447) .. (93.02,447) -- (4,447) .. controls (4,447) and (4,447) .. (4,447) -- cycle ;
        \draw  [fill={rgb, 255:red, 248; green, 212; blue, 217 }  ,fill opacity=1 ] (114.1,334.62) .. controls (114.1,324.66) and (122.18,316.59) .. (132.14,316.59) -- (186.25,316.59) .. controls (196.21,316.59) and (204.29,324.66) .. (204.29,334.62) -- (204.29,446.5) .. controls (204.29,446.5) and (204.29,446.5) .. (204.29,446.5) -- (114.1,446.5) .. controls (114.1,446.5) and (114.1,446.5) .. (114.1,446.5) -- cycle ;
        \draw  [fill={rgb, 255:red, 241; green, 239; blue, 205 }  ,fill opacity=1 ] (218.34,334.84) .. controls (218.34,324.62) and (226.63,316.34) .. (236.85,316.34) -- (292.37,316.34) .. controls (302.59,316.34) and (310.87,324.62) .. (310.87,334.84) -- (310.87,446.5) .. controls (310.87,446.5) and (310.87,446.5) .. (310.87,446.5) -- (218.34,446.5) .. controls (218.34,446.5) and (218.34,446.5) .. (218.34,446.5) -- cycle ;
        \draw  [fill={rgb, 255:red, 239; green, 220; blue, 243 }  ,fill opacity=1 ] (327.27,335.62) .. controls (327.27,325.66) and (335.35,317.59) .. (345.31,317.59) -- (399.42,317.59) .. controls (409.38,317.59) and (417.46,325.66) .. (417.46,335.62) -- (417.46,446) .. controls (417.46,446) and (417.46,446) .. (417.46,446) -- (327.27,446) .. controls (327.27,446) and (327.27,446) .. (327.27,446) -- cycle ;
        \draw  [fill={rgb, 255:red, 217; green, 235; blue, 198 }  ,fill opacity=1 ] (436.2,335.08) .. controls (436.2,324.73) and (444.59,316.34) .. (454.94,316.34) -- (511.16,316.34) .. controls (521.51,316.34) and (529.9,324.73) .. (529.9,335.08) -- (529.9,447) .. controls (529.9,447) and (529.9,447) .. (529.9,447) -- (436.2,447) .. controls (436.2,447) and (436.2,447) .. (436.2,447) -- cycle ;
        \draw  [fill={rgb, 255:red, 203; green, 241; blue, 233 }  ,fill opacity=1 ] (546.3,335.08) .. controls (546.3,324.73) and (554.69,316.34) .. (565.04,316.34) -- (621.26,316.34) .. controls (631.61,316.34) and (640,324.73) .. (640,335.08) -- (640,446) .. controls (640,446) and (640,446) .. (640,446) -- (546.3,446) .. controls (546.3,446) and (546.3,446) .. (546.3,446) -- cycle ;
        \draw   (34.45,147.6) -- (42.65,147.6) -- (42.65,102.34) -- (54.36,102.34) -- (54.36,147.6) -- (62.56,147.6) -- (48.51,163.66) -- cycle ;
        \draw   (144.55,147.6) -- (152.75,147.6) -- (152.75,102.34) -- (164.46,102.34) -- (164.46,147.6) -- (172.66,147.6) -- (158.61,163.66) -- cycle ;
        \draw   (249.97,148.85) -- (258.17,148.85) -- (258.17,103.59) -- (269.88,103.59) -- (269.88,148.85) -- (278.08,148.85) -- (264.02,164.91) -- cycle ;
        \draw   (357.72,148.85) -- (365.92,148.85) -- (365.92,103.59) -- (377.64,103.59) -- (377.64,148.85) -- (385.83,148.85) -- (371.78,164.91) -- cycle ;
        \draw   (467.82,148.85) -- (476.02,148.85) -- (476.02,103.59) -- (487.73,103.59) -- (487.73,148.85) -- (495.93,148.85) -- (481.88,164.91) -- cycle ;
        \draw   (576.75,148.85) -- (584.95,148.85) -- (584.95,103.59) -- (596.66,103.59) -- (596.66,148.85) -- (604.86,148.85) -- (590.81,164.91) -- cycle ;
        \draw  [fill={rgb, 255:red, 195; green, 216; blue, 240 }  ,fill opacity=1 ] (6.2,208.87) .. controls (6.2,183.9) and (25.14,163.66) .. (48.51,163.66) .. controls (71.88,163.66) and (90.82,183.9) .. (90.82,208.87) .. controls (90.82,233.84) and (71.88,254.08) .. (48.51,254.08) .. controls (25.14,254.08) and (6.2,233.84) .. (6.2,208.87) -- cycle ;
        \draw  [fill={rgb, 255:red, 236; green, 187; blue, 193 }  ,fill opacity=1 ] (116.3,208.87) .. controls (116.3,183.9) and (135.24,163.66) .. (158.61,163.66) .. controls (181.98,163.66) and (200.92,183.9) .. (200.92,208.87) .. controls (200.92,233.84) and (181.98,254.08) .. (158.61,254.08) .. controls (135.24,254.08) and (116.3,233.84) .. (116.3,208.87) -- cycle ;
        \draw  [fill={rgb, 255:red, 221; green, 215; blue, 149 }  ,fill opacity=1 ] (221.71,210.12) .. controls (221.71,185.15) and (240.65,164.91) .. (264.02,164.91) .. controls (287.39,164.91) and (306.33,185.15) .. (306.33,210.12) .. controls (306.33,235.09) and (287.39,255.33) .. (264.02,255.33) .. controls (240.65,255.33) and (221.71,235.09) .. (221.71,210.12) -- cycle ;
        \draw  [fill={rgb, 255:red, 219; green, 156; blue, 230 }  ,fill opacity=1 ] (329.47,210.12) .. controls (329.47,185.15) and (348.41,164.91) .. (371.78,164.91) .. controls (395.15,164.91) and (414.09,185.15) .. (414.09,210.12) .. controls (414.09,235.09) and (395.15,255.33) .. (371.78,255.33) .. controls (348.41,255.33) and (329.47,235.09) .. (329.47,210.12) -- cycle ;
        \draw  [fill={rgb, 255:red, 155; green, 187; blue, 122 }  ,fill opacity=1 ] (439.57,210.12) .. controls (439.57,185.15) and (458.51,164.91) .. (481.88,164.91) .. controls (505.25,164.91) and (524.19,185.15) .. (524.19,210.12) .. controls (524.19,235.09) and (505.25,255.33) .. (481.88,255.33) .. controls (458.51,255.33) and (439.57,235.09) .. (439.57,210.12) -- cycle ;
        \draw  [fill={rgb, 255:red, 137; green, 218; blue, 199 }  ,fill opacity=1 ] (548.49,210.12) .. controls (548.49,185.15) and (567.44,164.91) .. (590.81,164.91) .. controls (614.17,164.91) and (633.12,185.15) .. (633.12,210.12) .. controls (633.12,235.09) and (614.17,255.33) .. (590.81,255.33) .. controls (567.44,255.33) and (548.49,235.09) .. (548.49,210.12) -- cycle ;
        \draw   (34.45,299.34) -- (42.65,299.34) -- (42.65,254.08) -- (54.36,254.08) -- (54.36,299.34) -- (62.56,299.34) -- (48.51,315.4) -- cycle ;
        \draw   (144.55,299.34) -- (152.75,299.34) -- (152.75,254.08) -- (164.46,254.08) -- (164.46,299.34) -- (172.66,299.34) -- (158.61,315.4) -- cycle ;
        \draw   (249.97,300.59) -- (258.17,300.59) -- (258.17,255.33) -- (269.88,255.33) -- (269.88,300.59) -- (278.08,300.59) -- (264.02,316.65) -- cycle ;
        \draw   (357.72,300.59) -- (365.92,300.59) -- (365.92,255.33) -- (377.64,255.33) -- (377.64,300.59) -- (385.83,300.59) -- (371.78,316.65) -- cycle ;
        \draw   (467.82,300.59) -- (476.02,300.59) -- (476.02,255.33) -- (487.73,255.33) -- (487.73,300.59) -- (495.93,300.59) -- (481.88,316.65) -- cycle ;
        \draw   (576.75,300.59) -- (584.95,300.59) -- (584.95,255.33) -- (596.66,255.33) -- (596.66,300.59) -- (604.86,300.59) -- (590.81,316.65) -- cycle ;
        
        \draw (320.24,64.17) node  [font=\normalsize] [align=left] {\textbf{{\Large Order statistics/\ac{EVT} for NG}}};
        \draw (15.94,183.96) node [anchor=north west][inner sep=0.75pt]   [align=left] {
        {\footnotesize \textbf{Multi-User}}\\{\footnotesize \textbf{Diversity}}};
        \draw (130.8,188.21) node [anchor=north west][inner sep=0.75pt]   [align=left] {{\footnotesize \textbf{Spectrum}}\\{\footnotesize \textbf{Sharing}}};
        \draw (241.19,189.21) node [anchor=north west][inner sep=0.75pt]   [align=left] {{\footnotesize \textbf{Relays/}}\\{\footnotesize \textbf{RIS}}};
        \draw (350.52,195.33) node [anchor=north west][inner sep=0.75pt]   [align=left] {{\footnotesize \textbf{MIMO}}};
        \draw (457.79,194.08) node [anchor=north west][inner sep=0.75pt]   [align=left] {{\footnotesize \textbf{URLLC}}};
        \draw (562.89,186.71) node [anchor=north west][inner sep=0.75pt]   [align=left] {{\footnotesize \textbf{Machine}}\\{\footnotesize \textbf{Learning}}};
        \draw (4,334.14) node [anchor=north west][inner sep=0.75pt]   [align=left]{\vspace{0.8mm}\textbf{{\scriptsize 1.Optical Comm. }}\\\textbf{{\scriptsize 2.Opportunistic }}\\\textbf{{\scriptsize  \ \ beamforming}}\\\textbf{{\scriptsize 3.Cognitive }}\textbf{{\scriptsize  Radio}}};
        \draw (113,331.04) node [anchor=north west][inner sep=0.75pt]   [align=left] {\textbf{{\scriptsize 1.Relaying}}\\\textbf{{\scriptsize 2.Antenna }}\textbf{{\scriptsize   selection}}\\\textbf{{\scriptsize 3.Multicasting}}\\\textbf{{\scriptsize 4.Cognitive  Radio}}};
        \draw (555.3,335.08) node [anchor=north west][inner sep=0.75pt]   [align=left] {\textbf{{\scriptsize 1.SVM}}\\\textbf{{\scriptsize 2.Auto encoder}}\\\textbf{{\scriptsize 3. EVM}}\\\textbf{{\scriptsize 4.MAB }}};
        \draw (327.27,335.62) node [anchor=north west][inner sep=0.75pt]   [align=left] {\textbf{{\scriptsize 1.MISO-RBF}}\\\textbf{{\scriptsize 2.MIMO-SDMA}}\\\textbf{{\scriptsize 3.Multicasting}}\\{\scriptsize \textbf{4.Massive-MIMO}}};
        \draw (434,334.04) node [anchor=north west][inner sep=0.75pt]   [align=left] {\textbf{{\scriptsize 1.V2V-Network}}\\\textbf{{\scriptsize 2.Edge computing }}\\\textbf{{\scriptsize 3.AoI}}\\{\scriptsize \textbf{4.Federated learning}}};
        \draw (220,334.04) node [anchor=north west][inner sep=0.75pt]   [align=left] {\textbf{{\scriptsize 1.WPS}}\\\textbf{{\scriptsize 2.Opportunistic }}\\\textbf{{\scriptsize  \ \ scheduling}}\\\textbf{{\scriptsize 3.mmWave net}}};
                
        \end{tikzpicture}
    \caption{Order statistics for NG}
    \label{fig:order6g}
\end{figure*}
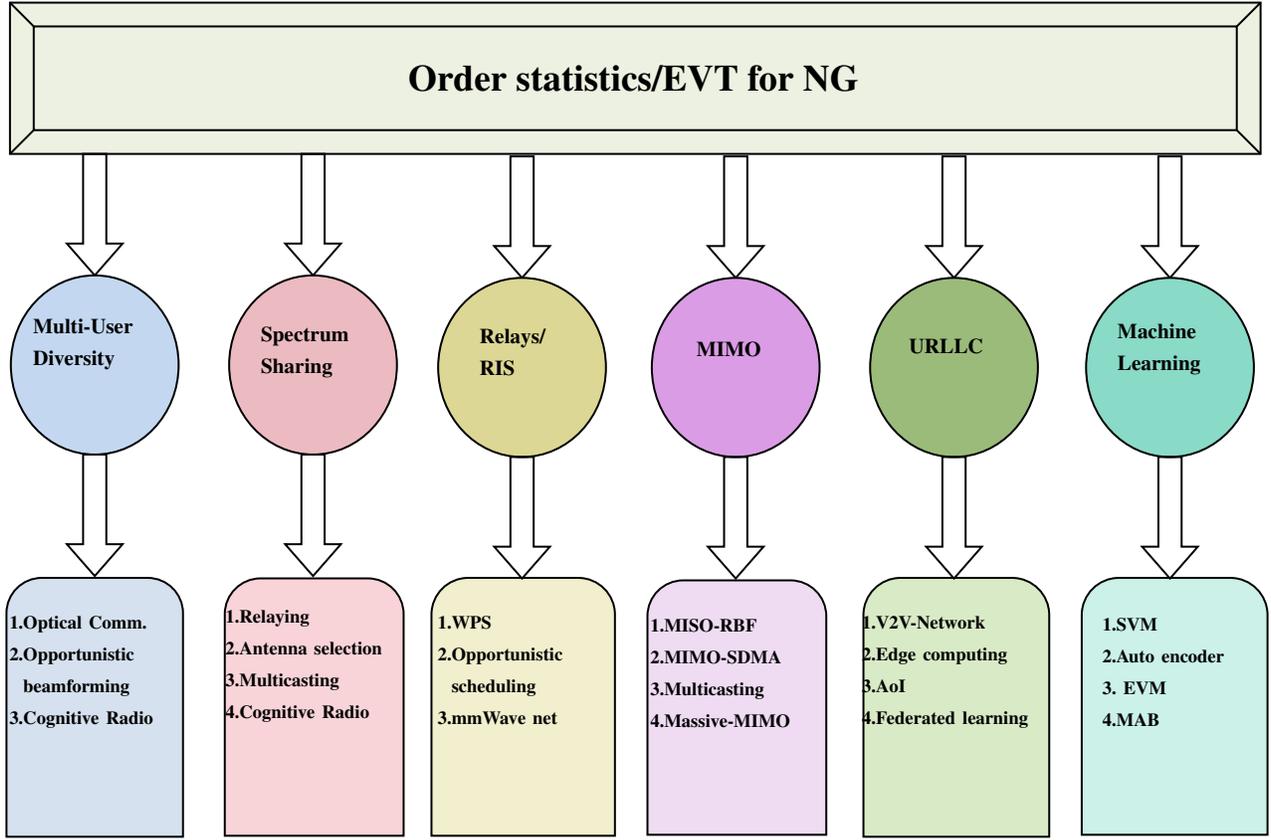
According to this result, if there exists a scaling function $g>0$ such that the conditional distribution of scaled excess value, i.e., $\left\{\left(X-u_{o p t}\right) / g\left(u_{o p t}\right) \mid X>u_{o p t}\right\}$ converges to a non-degenerate distribution, then that distribution is necessarily the \ac{GPD} \cite{kumar2017modeling}, i.e.,
$$
\lim _{u_{o p t} \rightarrow x_f} \mathbb{P}\left(\frac{X-u_{o p t}}{g\left(u_{o p t}\right)} \leq x \mid X>u_{o p t}\right)=G_{\beta, \delta}(x).
$$
Note that the accuracy of the above distribution depends upon the threshold, and hence we have to be careful while using the result to arrive at accurate inferences. In most practical cases, we do not know $g$, and the expertise in the problem domain and sample observations may be used to arrive at the parameters of the \ac{GPD} to model the statistics of interest. For example, $\beta>0$ leads to a bounded tail distribution, and hence for any signal with a bounded tail, we can safely assume that the exceedance can be modeled with $\beta$ positive. More details regarding applying these results can be found in \cite{pickands1975statistical,balkema1974residual}. 

\subsection{Rate of Convergence}
Note that, (\ref{limit_1}) guarantees the convergence of the distribution of $X^{K}_{max}$ to the distribution $G(x)$, but does not discuss the rate of convergence. In other words, it does not discuss how rapidly the distribution of $X^{K}_{max}$ converges to $G(x)$. The rate of convergence is not the same for all distributions in any domain of attraction. It is a function of the initial distribution parameters and depends on the equivalence of the tail of the initial distribution function to the tail of a \ac{GPD} \cite{de2007extreme}. The closer the tail behaviour of the initial distribution to the tail behaviour of a GPD, the faster its rate of convergence. A detailed discussion on the rate of convergence at the point of maximum possible deviation over the entire support of the maximum distribution of sequences of i.i.d. RVs is available in \cite{de2007extreme}. In many practical applications, characterizing the rate of convergence in terms of the parameters of the distributions of the RVs helps us arrive at meaningful inferences regarding the expected rate of convergence and the accuracy of the results for different choices of distribution parameters. Such inferences are particularly useful to appreciate the utility of the results derived using the asymptotic assumptions. 

\section{\ac{EVT} for \ac{NG} Applications}
In this section, we explore the utility of applying \ac{EVT} to the diverse applications found in wireless communications, including multi-user diversity, spectrum sharing, relays, \ac{MIMO} systems, \ac{URLLC}, and \ac{ML}, as depicted in Fig.~\ref{fig:order6g}. For each application, we present a unified structure which comprises a general system description, the advantages of applying \ac{EVT}, the illustrative applicability of \ac{EVT} in the given context, and a state-of-the-art literature review.

In the system description, we offer a brief introduction of the communication scenario to which \ac{EVT} is applied. Subsequently, we highlight the advantages and benefits of employing \ac{EVT} in the considered application, illustrating how \ac{EVT} can simplify the characterization of the system performance and hence help in addressing the unique challenges in the given context. Additionally, we provide practical examples or use cases to demonstrate the applicability and effectiveness of \ac{EVT} in these applications. Lastly, we review the state-of-the-art to showcase the relevant advances and research trends.

\subsection{Multi-User Diversity}
Diversity is widely used in wireless communications for improving link performance. The core concept is having multiple replicas of the same signal at the receiver, which experience different channel conditions. Various diversity techniques are available for combating multipath fading in wireless communications. Among those, the multi-user diversity technique selects one of the best channels among all the available channels for communication.
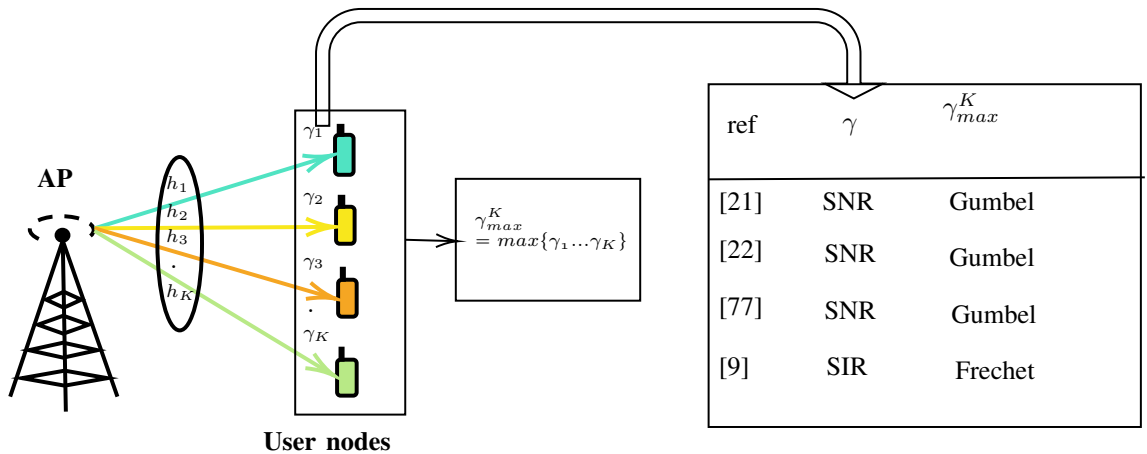
\begin{figure*}[b]
\centering
        \tikzset{every picture/.style={line width=0.75pt}} 
        
        \begin{tikzpicture}[x=0.75pt,y=0.75pt,yscale=-1,xscale=1]
        
        \draw [color={rgb, 255:red, 0; green, 0; blue, 0 }  ,draw opacity=1 ][line width=1.5]    (103.27,131) -- (76.75,209.68) ;
        \draw [color={rgb, 255:red, 0; green, 0; blue, 0 }  ,draw opacity=1 ][line width=1.5]    (103.27,131) -- (130.84,209.68) ;
        \draw [color={rgb, 255:red, 0; green, 0; blue, 0 }  ,draw opacity=1 ][line width=1.5]    (103.27,131) -- (104.83,217.25) ;
        \draw  [color={rgb, 255:red, 0; green, 0; blue, 0 }  ,draw opacity=1 ][line width=1.5]  (103.79,156.73) -- (112.64,160.76) -- (103.79,164.8) -- (94.95,160.76) -- cycle ;
        \draw  [color={rgb, 255:red, 0; green, 0; blue, 0 }  ,draw opacity=1 ][line width=1.5]  (103.53,168.83) -- (116.28,173.37) -- (103.53,177.91) -- (90.79,173.37) -- cycle ;
        \draw  [color={rgb, 255:red, 0; green, 0; blue, 0 }  ,draw opacity=1 ][line width=1.5]  (103.5,182.45) -- (120.75,186.23) -- (103.5,190.01) -- (86.25,186.23) -- cycle ;
        \draw  [color={rgb, 255:red, 0; green, 0; blue, 0 }  ,draw opacity=1 ][line width=1.5]  (104.25,196.57) -- (125.75,200.35) -- (104.25,204.14) -- (82.75,200.35) -- cycle ;
        \draw  [draw opacity=0][dash pattern={on 5.63pt off 4.5pt}][line width=1.5]  (91.18,130.09) .. controls (88.36,128.82) and (86.63,127.11) .. (86.65,125.25) .. controls (86.7,121.47) and (93.96,118.5) .. (102.86,118.61) .. controls (111.76,118.72) and (118.93,121.87) .. (118.88,125.65) .. controls (118.85,127.7) and (116.7,129.52) .. (113.31,130.73) -- (102.77,125.45) -- cycle ; \draw  [color={rgb, 255:red, 0; green, 0; blue, 0 }  ,draw opacity=1 ][dash pattern={on 5.63pt off 4.5pt}][line width=1.5]  (91.18,130.09) .. controls (88.36,128.82) and (86.63,127.11) .. (86.65,125.25) .. controls (86.7,121.47) and (93.96,118.5) .. (102.86,118.61) .. controls (111.76,118.72) and (118.93,121.87) .. (118.88,125.65) .. controls (118.85,127.7) and (116.7,129.52) .. (113.31,130.73) ;  
        \draw  [color={rgb, 255:red, 0; green, 0; blue, 0 }  ,draw opacity=1 ][fill={rgb, 255:red, 0; green, 0; blue, 0 }  ,fill opacity=1 ][line width=1.5]  (100.15,128.23) .. controls (100.15,126.7) and (101.49,125.45) .. (103.14,125.45) .. controls (104.8,125.45) and (106.14,126.7) .. (106.14,128.23) .. controls (106.14,129.76) and (104.8,131) .. (103.14,131) .. controls (101.49,131) and (100.15,129.76) .. (100.15,128.23) -- cycle ;
        
        \draw  [fill={rgb, 255:red, 80; green, 227; blue, 194 }  ,fill opacity=1 ][line width=1.5]  (240.58,79.41) .. controls (240.58,78.38) and (241.42,77.54) .. (242.45,77.54) -- (248.07,77.54) .. controls (249.1,77.54) and (249.94,78.38) .. (249.94,79.41) -- (249.94,95.84) .. controls (249.94,96.88) and (249.1,97.71) .. (248.07,97.71) -- (242.45,97.71) .. controls (241.42,97.71) and (240.58,96.88) .. (240.58,95.84) -- cycle ;
        \draw  [fill={rgb, 255:red, 0; green, 0; blue, 0 }  ,fill opacity=1 ][line width=1.5]  (242.45,73) -- (244.32,73) -- (244.32,77.54) -- (242.45,77.54) -- cycle ;
        
        \draw  [fill={rgb, 255:red, 248; green, 231; blue, 28 }  ,fill opacity=1 ][line width=1.5]  (241.1,115.73) .. controls (241.1,114.69) and (241.94,113.85) .. (242.97,113.85) -- (248.59,113.85) .. controls (249.62,113.85) and (250.46,114.69) .. (250.46,115.73) -- (250.46,130.64) .. controls (250.46,131.68) and (249.62,132.52) .. (248.59,132.52) -- (242.97,132.52) .. controls (241.94,132.52) and (241.1,131.68) .. (241.1,130.64) -- cycle ;
        \draw  [fill={rgb, 255:red, 0; green, 0; blue, 0 }  ,fill opacity=1 ][line width=1.5]  (242.97,108.81) -- (244.74,108.81) -- (244.74,113.85) -- (242.97,113.85) -- cycle ;
        
        \draw  [fill={rgb, 255:red, 245; green, 166; blue, 35 }  ,fill opacity=1 ][line width=1.5]  (242.14,152.55) .. controls (242.14,151.51) and (242.98,150.67) .. (244.01,150.67) -- (249.63,150.67) .. controls (250.66,150.67) and (251.5,151.51) .. (251.5,152.55) -- (251.5,167.46) .. controls (251.5,168.5) and (250.66,169.33) .. (249.63,169.33) -- (244.01,169.33) .. controls (242.98,169.33) and (242.14,168.5) .. (242.14,167.46) -- cycle ;
        \draw  [fill={rgb, 255:red, 0; green, 0; blue, 0 }  ,fill opacity=1 ][line width=1.5]  (244.01,145.13) -- (245.05,145.13) -- (245.05,150.67) -- (244.01,150.67) -- cycle ;
        
        \draw  [fill={rgb, 255:red, 184; green, 233; blue, 134 }  ,fill opacity=1 ][line width=1.5]  (241.62,192.77) .. controls (241.62,191.8) and (242.4,191.02) .. (243.36,191.02) -- (249.76,191.02) .. controls (250.72,191.02) and (251.5,191.8) .. (251.5,192.77) -- (251.5,207.44) .. controls (251.5,208.4) and (250.72,209.18) .. (249.76,209.18) -- (243.36,209.18) .. controls (242.4,209.18) and (241.62,208.4) .. (241.62,207.44) -- cycle ;
        \draw  [fill={rgb, 255:red, 0; green, 0; blue, 0 }  ,fill opacity=1 ][line width=1.5]  (243.36,185.47) -- (244.4,185.47) -- (244.4,191.02) -- (243.36,191.02) -- cycle ;
        
        \draw [color={rgb, 255:red, 80; green, 227; blue, 194 }  ,draw opacity=1 ][line width=1.5]    (119.36,124.71) -- (236.11,88.77) ;
        \draw [shift={(238.98,87.89)}, rotate = 162.89] [color={rgb, 255:red, 80; green, 227; blue, 194 }  ,draw opacity=1 ][line width=1.5]    (14.21,-4.28) .. controls (9.04,-1.82) and (4.3,-0.39) .. (0,0) .. controls (4.3,0.39) and (9.04,1.82) .. (14.21,4.28)   ;
        \draw [color={rgb, 255:red, 248; green, 231; blue, 28 }  ,draw opacity=1 ][line width=1.5]    (119.36,124.71) -- (237.25,124.02) ;
        \draw [shift={(240.25,124)}, rotate = 179.66] [color={rgb, 255:red, 248; green, 231; blue, 28 }  ,draw opacity=1 ][line width=1.5]    (14.21,-4.28) .. controls (9.04,-1.82) and (4.3,-0.39) .. (0,0) .. controls (4.3,0.39) and (9.04,1.82) .. (14.21,4.28)   ;
        \draw [color={rgb, 255:red, 245; green, 166; blue, 35 }  ,draw opacity=1 ][line width=1.5]    (119.36,124.71) -- (237.87,159.41) ;
        \draw [shift={(240.75,160.25)}, rotate = 196.32] [color={rgb, 255:red, 245; green, 166; blue, 35 }  ,draw opacity=1 ][line width=1.5]    (14.21,-4.28) .. controls (9.04,-1.82) and (4.3,-0.39) .. (0,0) .. controls (4.3,0.39) and (9.04,1.82) .. (14.21,4.28)   ;
        \draw [color={rgb, 255:red, 184; green, 233; blue, 134 }  ,draw opacity=1 ][line width=1.5]    (119.25,126) -- (237.43,197.45) ;
        \draw [shift={(240,199)}, rotate = 211.16] [color={rgb, 255:red, 184; green, 233; blue, 134 }  ,draw opacity=1 ][line width=1.5]    (14.21,-4.28) .. controls (9.04,-1.82) and (4.3,-0.39) .. (0,0) .. controls (4.3,0.39) and (9.04,1.82) .. (14.21,4.28)   ;
        \draw  [line width=1.5]  (151,132.5) .. controls (151,108.34) and (156.04,88.75) .. (162.25,88.75) .. controls (168.46,88.75) and (173.5,108.34) .. (173.5,132.5) .. controls (173.5,156.66) and (168.46,176.25) .. (162.25,176.25) .. controls (156.04,176.25) and (151,156.66) .. (151,132.5) -- cycle ;
        \draw   (220.5,65.25) -- (276,65.25) -- (276,218.75) -- (220.5,218.75) -- cycle ;
        \draw   (301.5,99.75) -- (394.5,99.75) -- (394.5,161.25) -- (301.5,161.25) -- cycle ;
        \draw    (276,130.5) -- (300.5,131.19) ;
        \draw [shift={(302.5,131.25)}, rotate = 181.62] [color={rgb, 255:red, 0; green, 0; blue, 0 }  ][line width=0.75]    (10.93,-3.29) .. controls (6.95,-1.4) and (3.31,-0.3) .. (0,0) .. controls (3.31,0.3) and (6.95,1.4) .. (10.93,3.29)   ;
        \draw   (429,52) -- (647,52) -- (647,225) -- (429,225) -- cycle ;
        \draw    (429,100) -- (649,99) ;
        \draw   (231,73) -- (231,36.6) .. controls (231,24.12) and (241.12,14) .. (253.6,14) -- (483.01,14) .. controls (495.49,14) and (505.6,24.12) .. (505.6,36.6) -- (505.6,51.66) -- (517,51.66) -- (502.25,59.19) -- (487.5,51.66) -- (498.9,51.66) -- (498.9,36.6) .. controls (498.9,27.82) and (491.78,20.7) .. (483.01,20.7) -- (253.6,20.7) .. controls (244.82,20.7) and (237.7,27.82) .. (237.7,36.6) -- (237.7,73) -- cycle ;
        
        \draw (154,97) node [anchor=north west][inner sep=0.75pt]  [font=\scriptsize]  {$h_{1}$};
        \draw (154,111) node [anchor=north west][inner sep=0.75pt]  [font=\scriptsize]  {$h_{2}$};
        \draw (154,123.5) node [anchor=north west][inner sep=0.75pt]  [font=\scriptsize]  {$h_{3}$};
        \draw (154,151.5) node [anchor=north west][inner sep=0.75pt]  [font=\scriptsize]  {$h_{K}$};
        \draw (156,142) node [anchor=north west][inner sep=0.75pt]    {$.$};
        \draw (223,72) node [anchor=north west][inner sep=0.75pt]  [font=\scriptsize]  {$\gamma _{1}$};
        \draw (223,105.5) node [anchor=north west][inner sep=0.75pt]  [font=\scriptsize]  {$\gamma _{2}$};
        \draw (223,138) node [anchor=north west][inner sep=0.75pt]  [font=\scriptsize]  {$\gamma _{3}$};
        \draw (223,174.5) node [anchor=north west][inner sep=0.75pt]  [font=\scriptsize]  {$\gamma _{K}$};
        \draw (225,164) node [anchor=north west][inner sep=0.75pt]    {$.$};
        \draw (303,113) node [anchor=north west][inner sep=0.75pt]  [font=\scriptsize]  {$ \begin{array}{l}
        \mathnormal{\gamma _{max}^{K}}\\
        \mathnormal{=max\{\gamma _{1} ...\gamma _{K}\}}
        \end{array}$};
        \draw (485,105) node [anchor=north west][inner sep=0.75pt]   [align=left] {SNR};
        \draw (495,68) node [anchor=north west][inner sep=0.75pt]    {$\gamma $};
        \draw (437,66) node [anchor=north west][inner sep=0.75pt]   [align=left] {ref};
        \draw (511,55) node [anchor=north west][inner sep=0.75pt]    {$ \begin{array}{l}
        \ \ \ \ \ \ \gamma _{max}^{K} 
        \end{array}$};
        \draw (428,105) node [anchor=north west][inner sep=0.75pt]   [align=left] {\cite{narasimhamurthy2011multi}};
        \draw (428,129) node [anchor=north west][inner sep=0.75pt]   [align=left] {\cite{al2017asymptotic}};
        \draw (428,158) node [anchor=north west][inner sep=0.75pt]   [align=left] { \cite{xia2008opportunistic}};
        \draw (428,188) node [anchor=north west][inner sep=0.75pt]   [align=left] { \cite{subhash2019asymptotic}};
        \draw (486,131) node [anchor=north west][inner sep=0.75pt]   [align=left] {SNR};
        \draw (486,160) node [anchor=north west][inner sep=0.75pt]   [align=left] {SNR};
        \draw (487,188) node [anchor=north west][inner sep=0.75pt]   [align=left] {SIR};
        \draw (549,105) node [anchor=north west][inner sep=0.75pt]   [align=left] {Gumbel};
        \draw (549,133) node [anchor=north west][inner sep=0.75pt]   [align=left] {Gumbel};
        \draw (550,162) node [anchor=north west][inner sep=0.75pt]   [align=left] {Gumbel};
        \draw (552,190) node [anchor=north west][inner sep=0.75pt]   [align=left] {Frechet};
        \draw (89,93) node [anchor=north west][inner sep=0.75pt]   [align=left] {\textbf{AP}};
        \draw (203,226) node [anchor=north west][inner sep=0.75pt]   [align=left] {\textbf{User nodes}};                
        \end{tikzpicture}
\caption{Multi-user diversity}
\label{fig:mud}
 \end{figure*}
\subsubsection*{System Description}
    Consider the system shown in Fig.~\ref{fig:mud} having an \ac{AP} and $K$ user nodes, where we assume that each user node and the \ac{AP} have a single antenna. While all user nodes experience independent channel fading, some users may have better channel conditions than others. In this regard, we denote the flat fading channel gains of the $K$ users as $\left | h_{1} \right |^{2}$,$\left | h_{2} \right |^{2}$, $\cdots$, $\left | h_{K} \right |^{2}$, respectively, and the \ac{SNR} of the $i$-th user at the \ac{AP} as $\gamma_{i}=\frac{P}{\sigma^{2}} \left | h_{i} \right |^{2}$. The best channel at \ac{AP} would be the user node link with the highest \ac{SNR}, i.e., $\gamma _{\max}^{K}=\max\left \{\gamma_{1},\gamma_{2},\cdots,\gamma_{K} \right \}$. Similar system models are used in \ac{MIMO} systems, free-space optical communication, opportunistic beamforming, and \ac{CR}.
\subsubsection*{Advantages of \ac{EVT}} 
     The \ac{CDF} of the maximum order statistics, i.e., $\gamma _{\max}^{K}$, involves the multiplication of the individual CDFs of all $\gamma_i$ when the \ac{RVs} are independent. If the \ac{CDF} of $\gamma_i$ itself has a complicated expression, finding the exact \ac{CDF} of $\gamma _{\max}^{K}$ will be a challenge. For example, considering $\kappa-\mu$ shadowed \cite{subhash2019asymptotic} fading environment, the CDF is as shown below
     \begin{align}
         &F_{\gamma }\left ( \gamma  \right )=\frac{\mu ^{\mu -1}m^{m}\left ( 1+\kappa \right )^{\mu }}{\Gamma \left ( \mu  \right )\left ( \mu \kappa+m \right )^m}\left ( \frac{\gamma}{\bar{\gamma}} \right )^{\mu }\nonumber\\
         &\Phi _{2}\left ( \mu -m,m;\mu +1;-\frac{\mu \left ( 1+\kappa \right )\gamma }{\bar{\gamma}},-\frac{\mu \left ( 1+\kappa \right ) }{\bar{\gamma}}\frac{m\gamma }{\mu \kappa+m} \right ),
     \end{align}
     where $\Gamma[.]$ is the gamma function and $\Phi_2(.,.;.;)$ is the bivariate confluent hypergeometric function. 
      Capacity analysis \cite{song2006asymptotic} in multi-user diversity typically involves special functions, and evaluating the \ac{CDF} for $K$ beyond $4$ or $5$ is mathematically intractable. By contrast, asymptotic analysis relying on \ac{EVT} is fairly simple. Similarly, it is shown in  \cite{al2017asymptotic} that finding the exact expression of average throughput for equal gain combining in free-space optical communication is very difficult. To tackle this difficulty, \ac{EVT} is advantageous for the asymptotic analysis instead of the exact analysis when the number of links is high. Deriving performance metrics like the outage probability and ergodic rate is also simpler by using the asymptotic \ac{CDF}. In some situations, the link having maximum \ac{SNR} may not be available, as some other users may occupy it. Hence, the next best link or $k$-th best link must be selected in such scenarios. The asymptotic analysis here involves $k$-th order statistics \cite{al2018asymptotic,al2019asymptotic} for $K-k$ finite and $K\to \infty$. 
\subsubsection*{Applicability Illustration}
   The distributions of the \ac{SNR}/\ac{SIR} have been extensively studied in the literature, starting with the simple Rayleigh fading channels to complex scattering environments associated with complex \ac{SNR} distributions such as $\alpha-\mu$ \cite{al2017asymptotic}, $\kappa-\mu$, $\eta-\mu$ \cite{subhash2019asymptotic}, etc. For different \ac{SNR} distributions such as Rayleigh distribution, $\alpha-\mu$ distribution, and chi-square distribution \cite{xia2008opportunistic} with two degrees of freedom, the distribution of $\gamma _{\max}^{K}$ tends to either the truncated Gumbel or the Gumbel distribution. \\
    \textit{Example}:
    Let us consider the scenario, where the source and interferers undergo $\kappa-\mu$ shadowed fading with parameters ${(\kappa,\mu,m,\bar{x})}$ and ${\{(\kappa_i,\mu_i,m_i,\bar{y_i});i=1,\cdots,N\}}$ respectively, where $N$ is the number of interferers and $\bar{x},\{\bar{y_i};i=1,\cdots,N\}$ are the expectations of the corresponding RVs. Then, the CDF of $\gamma_i$ is given by (\ref{cdf1}),
\begin{figure*}[b]
\hrule
\begin{equation}
\begin{aligned}
	F_{\gamma}(\gamma) = & 1 - K_1  \sum_{p=0}^{\infty}\left[ \frac{(m)_p\left(1-\frac{\theta}{\lambda}\right)^p\Gamma\left[\sum\limits_{i=1}^{N}\mu_i+\mu+p\right]}{(\mu)_pp!} F_D^{(2N)}(1-p-\mu,\mu_1-m_1,\cdots,\mu_N-m_N, \right. \\
	& \left. m_1,\cdots,m_N;1+\sum_{i=1}^{N}\mu_i;\frac{\theta}{\theta+\gamma\theta_1},\cdots,\frac{\theta}{\theta+\gamma\theta_N},\frac{\theta}{\theta+\gamma\lambda_1},\cdots,\frac{\theta}{\theta+\gamma\lambda_N}) \right] 
\end{aligned}
\label{cdf1}
\end{equation}
\end{figure*}
where we have ${K_1=\frac{\prod\limits_{i=1}^{N}\left(\left(\frac{\theta}{\theta+x\theta_i}\right)^{\mu_i-m_i}\left(\frac{\theta}{\theta+x\lambda_i}\right)^{m_i}\right)\theta^m}{\Gamma\left[\sum\limits_{i=1}^{N}\mu_i+1\right]\Gamma[\mu]\lambda^m}}$  \cite{kumar2017outage}. Here, ${(m)_p=\frac{\Gamma[m+p]}{\Gamma[m]}}$ is the Pochhammer symbol, $p!$ is the factorial of $p$, $F_D^{(2N)}(.,.,.;.)$ is the Lauricella function of fourth kind and 
\begin{align}
    &{\theta = \frac{\bar{x}}{\mu_(1+\kappa)}}, \ {\theta_i = \frac{\bar{y_i}}{\mu_i(1+\kappa_i)}}, \nonumber \\ & {\lambda = \frac{(\mu\kappa+m)\bar{x}}{\mu_(1+\kappa_)}, \lambda_i = \frac{(\mu_i\kappa_i+m_i)\bar{y_i}}{\mu_i(1+\kappa_i)m_i}}.
    \label{thetalambda}
\end{align}
 Hence, the distribution of the maximum SNR $\gamma _{max}^{K}$ can be written using order statistics as $F_{\gamma _{max}^{K}}\left ( \gamma \right )=\left ( F\left ( \gamma \right ) \right )^{K}$.
   As we can see, the distribution is complicated and does not provide insights into the system's performance in terms of diversity order. However, applying EVT, where the number of links tends to infinity, the limiting distribution of the maximum SNR is given by,
  \begin{equation}
   {F}_{ {\gamma}_{max}}(\gamma) =  \begin{cases}
		0, & \gamma \leq 0,  \\
		\exp\left(-\left( {\gamma}/{a_K}\right)^{-\beta}\right), & \gamma>0,
		\end{cases} 
		\label{asymp_cdf}
\end{equation} where we have $\beta = \sum\limits_{i=1}^N\mu_i$ and $ a_K = F_{\gamma}^{-1}(1-K^{-1})$. Using results from EVT, we can establish that the limiting distribution of $\gamma _{max}^{K}$ is a Frechet distribution with normalizing constants $b_{K}= 0$ and $a_{K}$, as shown in Fig.~\ref{fig:kms}. 

    \begin{figure}
        \centering
        \includegraphics[scale=0.6]{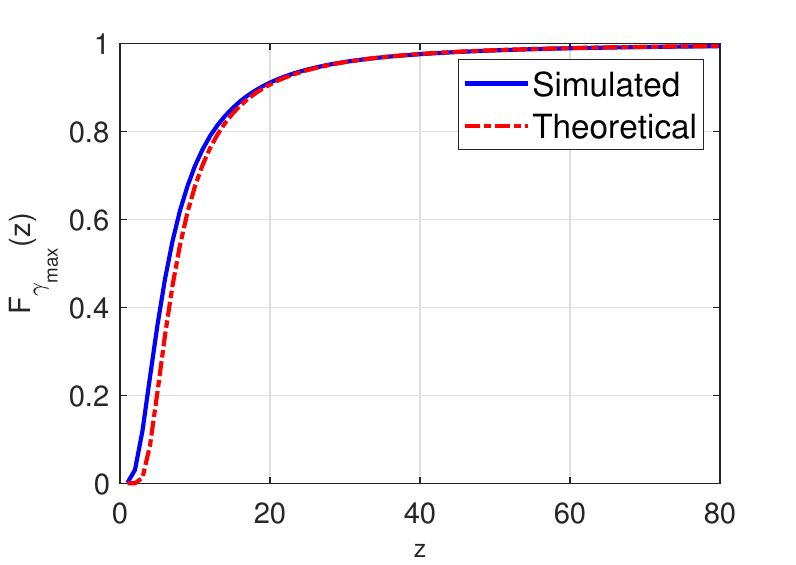}
        \caption{CDF of Maxima of $\kappa-\mu$ shadowed fading scenario for $K=64$.}
        \label{fig:kms}
    \end{figure}
    
    \subsubsection*{State of the Art}
       Next we present a brief literature on \ac{EVT} in the context of multi-user diversity. In 2011, Narasimhamurthy \textit{et al.} \cite{narasimhamurthy2011multi} used \ac{EVT} in a multi-user diversity context, where the number of user nodes is considered random instead of a fixed number. Considering the Poisson distribution for the number of user nodes and Rayleigh fading channels, the CDF of the channel gain of the best user obeys a truncated  Gumbel distribution. Asymptotic throughput analysis is carried out in  \cite{al2017asymptotic},  \cite{xia2008opportunistic}, \cite{song2006asymptotic} with the help of \ac{EVT}. The asymptotic throughput analysis of multi-user diversity under known channel conditions is presented in \cite{song2006asymptotic}. It is also shown that the exact throughput analysis is complicated, even for the Rayleigh fading scenario. It is proved that the limiting distribution for the maximum throughput obeys the Gumbel distribution for distributions like Rayleigh and Nakagami. 
       \par  Xia \textit{et al.}  \cite{xia2008opportunistic}  considered throughput analysis of opportunistic beamforming with the help of extreme order statistics. The system model consists of $M$ transmit antennas and $K$ users, each with a single receive antenna. The $M$ transmit antennas are multiplied by the respective beamforming coefficients, and the authors considered both single-beam and multibeam cases. The base station then selects the user having the highest \ac{SNR}. Assuming each \ac{SNR} $\gamma_{k}$ follows a central chi-square distribution with two degrees of freedom, the authors have derived the limiting distribution of maximum received \ac{SNR} and proved that it follows the Gumbel distribution. Furthermore, the authors have derived the throughput of both single-beam and multibeam opportunistic beamforming systems.
        \begin{table*}[b]  
        \centering
         \resizebox{\textwidth}{!}{%
             \begin{tabular}{|l|l|l|l|l|}
             \hline
              Ref & Application~ & SNR/SINR  & Min/Max   & Asymptotic Distribution  \\
               \hline
             \cite{narasimhamurthy2011multi} & Uplink multi-user diversity       & Channel gain     & Maximum   & Truncated Gumbel    \\
             \cite{al2017asymptotic} & Free space optical communication       & Channel gain         & Maximum   & Truncated Gumbel                  \\
             \cite{xia2008opportunistic} & Opportunistic beamforming        & \ac{SNR}           & Maximum   &  Gumbel \\
              \cite{subhash2019asymptotic} & Antenna selection       & \ac{SIR}           & Maximum   &  Frechet \\
              \cite{al2018asymptotic} & Cognitive radio       & \ac{SNR}             & $k$-th best  & Inverse Gamma RV \\
             \cite{al2019asymptotic} & Cognitive radio        & \ac{SIR}             & $k$-th best  & Inverse Gamma RV \\
              \hline
             \end{tabular}%
             }
             \caption{Applications of \ac{EVT} in multi-user diversity. }\label{Table: MUD}             
        \end{table*}
       \par  Multi-user diversity harnessed in \ac{FSO} systems was considered by Badarneh and Georghiades \cite{al2017asymptotic}. The authors analyzed the asymptotic throughput with the help of \ac{EVT}. The system model has an optical transmitter having $M$ transmit apertures, and multiple users equipped with $N$ receive apertures each. The channel fading coefficients between the $i$-th transmit aperture and $n$-th receive aperture of the $i$-th user is taken as gamma-gamma random variable. With the aid of equal gain combining at the receiver, the combined fading coefficient ($h_{i}$) becomes a $\alpha-\mu$ distributed random variable. Assuming $\gamma_{i}=h_{i}^{2}$, multi-user diversity selects the best link for transmission. Authors have proved that $\gamma _{max}^{K}$ follows the truncated Gumbel distribution. Furthermore, asymptotic average throughput expression is derived using \ac{EVT} for the \ac{FSO} system. 
       \par The authors of \cite{subhash2019asymptotic}, \cite{al2020throughput} considered $\gamma_k$ to be the \ac{SIR} and analyzed $\gamma _{max}^{K}$ in this case. Subhash {\it et al.} \cite{subhash2019asymptotic} used \ac{EVT} to find the limiting distribution of the maximum of $K$ i.i.d. \ac{SIR} \ac{RVs}. They assumed that source and interferers follow   $\kappa-\mu$ shadowed fading with non-identical parameters (i.n.i.d.). It is shown that the limiting distribution follows the Frechet distribution. Further analysis of the convergence rate of the actual maximum distribution towards the asymptotic distribution is presented. Badarneh and Georghiades \cite{al2020throughput} considered a \ac{WPS} with a source, destination, and a co-channel interferer in which the destination has multiple antennas. In this scenario, interference-based \ac{WPS} is considered without any dedicated power beacon. Analytical expressions are derived for the average throughput under delay constraints, assuming that the receiver selects the best \ac{SIR} link.       
       \par So far, we have seen multi-user diversity exploiting the best channel, i.e., the link selection with the highest \ac{SNR}/\ac{SIR}. Sometimes, the highest \ac{SNR} (maximum order statistic) link may not be available or occupied by another user. Hence, the next best link is needed, or $k$-th best link will be selected for communication. Various applications like selection diversity, relay selection, and \ac{CR} tend to rely on $k$-th best link selection.   
      \par Ikki and Ahmed \cite{ikki2010performance} used the order statistics in \ac{DF} and \ac{AF} cooperative-diversity systems, where  $k$-th best relay selection is used for transmission.
       The performance of selection diversity techniques is analyzed by Badarneh and Georghiades \cite{al2018asymptotic_2}. The instantaneous throughput distribution of the largest order statistic converges to the Gumbel distribution. Using this, closed-form asymptotic expressions are derived for the average throughput, effective throughput, and the average \ac{BEP} of the $k$-th best link under different channel models.
       \par Badarneh and Alouni \cite{al2018asymptotic} considered the selection of $k$-th best \ac{SU} in \ac{CR} systems. Considering a large number of \ac{SU}s with fixed $k$, it is shown that $k$-th highest \ac{SNR} converges uniformly in distribution to an inverse gamma distributed \ac{RVs}. In\cite{al2019asymptotic},  \ac{EVT} is exploited for analyzing the asymptotic performance of $k$-th best \ac{SU} selection in underlay \ac{CR} networks. It is proved that the $k$-th best \ac{SIR} converges uniformly in distribution to an inverse gamma \ac{RVs}. With this result, the asymptotic expression of the average throughput, effective throughput, average \ac{BER}, and outage probability for the $k$-th best \ac{SU} is derived.
       \par Badarneh and Alouni \cite{al2018secrecy} studied the secrecy performance of the $k$-th best user selection scheme for an interference-limited wireless network. They  derived the closed-form expressions of the \ac{ESC} for the $k$-th best user. \textcolor{black}{Table \ref{Table: MUD} illustrates the applications of \ac{EVT} in the context of multi-user diversity}. 
    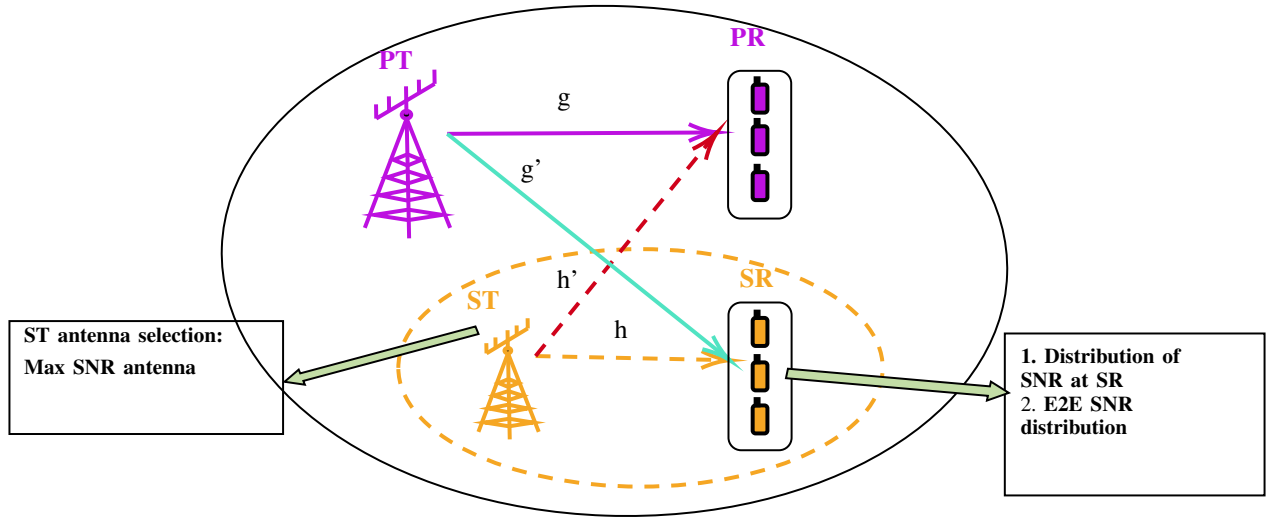
\begin{figure*}
    \centering
    \tikzset{every picture/.style={line width=0.75pt}} 
    \begin{tikzpicture}[x=0.75pt,y=0.75pt,yscale=-1,xscale=1]
        
        \draw  [color={rgb, 255:red, 0; green, 0; blue, 0 }  ,draw opacity=1 ][fill={rgb, 255:red, 189; green, 16; blue, 224 }  ,fill opacity=1 ][line width=1.5]  (391.83,59.12) .. controls (391.83,58.3) and (392.5,57.63) .. (393.32,57.63) -- (397.79,57.63) .. controls (398.61,57.63) and (399.28,58.3) .. (399.28,59.12) -- (399.28,70.29) .. controls (399.28,71.11) and (398.61,71.78) .. (397.79,71.78) -- (393.32,71.78) .. controls (392.5,71.78) and (391.83,71.11) .. (391.83,70.29) -- cycle ;
        \draw  [color={rgb, 255:red, 0; green, 0; blue, 0 }  ,draw opacity=1 ][fill={rgb, 255:red, 189; green, 16; blue, 224 }  ,fill opacity=1 ][line width=1.5]  (393.32,54.45) -- (394.81,54.45) -- (394.81,57.63) -- (393.32,57.63) -- cycle ;
        
        \draw [color={rgb, 255:red, 189; green, 16; blue, 224 }  ,draw opacity=1 ][line width=1.5]    (237.8,82.49) -- (370.87,81.81) ;
        \draw [shift={(373.87,81.79)}, rotate = 179.7] [color={rgb, 255:red, 189; green, 16; blue, 224 }  ,draw opacity=1 ][line width=1.5]    (14.21,-4.28) .. controls (9.04,-1.82) and (4.3,-0.39) .. (0,0) .. controls (4.3,0.39) and (9.04,1.82) .. (14.21,4.28)   ;
        \draw  [color={rgb, 255:red, 0; green, 0; blue, 0 }  ,draw opacity=1 ][fill={rgb, 255:red, 245; green, 166; blue, 35 }  ,fill opacity=1 ][line width=1.5]  (391.83,199.33) .. controls (391.83,198.5) and (392.5,197.84) .. (393.32,197.84) -- (397.79,197.84) .. controls (398.61,197.84) and (399.28,198.5) .. (399.28,199.33) -- (399.28,210.49) .. controls (399.28,211.31) and (398.61,211.98) .. (397.79,211.98) -- (393.32,211.98) .. controls (392.5,211.98) and (391.83,211.31) .. (391.83,210.49) -- cycle ;
        \draw  [color={rgb, 255:red, 0; green, 0; blue, 0 }  ,draw opacity=1 ][fill={rgb, 255:red, 245; green, 166; blue, 35 }  ,fill opacity=1 ][line width=1.5]  (393.32,194.65) -- (394.81,194.65) -- (394.81,197.84) -- (393.32,197.84) -- cycle ;
        
        \draw  [color={rgb, 255:red, 0; green, 0; blue, 0 }  ,draw opacity=1 ][fill={rgb, 255:red, 245; green, 166; blue, 35 }  ,fill opacity=1 ][line width=1.5]  (391.83,221.06) .. controls (391.83,220.23) and (392.5,219.57) .. (393.32,219.57) -- (397.79,219.57) .. controls (398.61,219.57) and (399.28,220.23) .. (399.28,221.06) -- (399.28,232.22) .. controls (399.28,233.04) and (398.61,233.71) .. (397.79,233.71) -- (393.32,233.71) .. controls (392.5,233.71) and (391.83,233.04) .. (391.83,232.22) -- cycle ;
        \draw  [color={rgb, 255:red, 0; green, 0; blue, 0 }  ,draw opacity=1 ][fill={rgb, 255:red, 245; green, 166; blue, 35 }  ,fill opacity=1 ][line width=1.5]  (393.32,216.38) -- (394.81,216.38) -- (394.81,219.57) -- (393.32,219.57) -- cycle ;
        
        \draw  [color={rgb, 255:red, 0; green, 0; blue, 0 }  ,draw opacity=1 ][fill={rgb, 255:red, 245; green, 166; blue, 35 }  ,fill opacity=1 ][line width=1.5]  (391.83,176.89) .. controls (391.83,176.07) and (392.5,175.4) .. (393.32,175.4) -- (397.79,175.4) .. controls (398.61,175.4) and (399.28,176.07) .. (399.28,176.89) -- (399.28,188.06) .. controls (399.28,188.88) and (398.61,189.55) .. (397.79,189.55) -- (393.32,189.55) .. controls (392.5,189.55) and (391.83,188.88) .. (391.83,188.06) -- cycle ;
        \draw  [color={rgb, 255:red, 0; green, 0; blue, 0 }  ,draw opacity=1 ][fill={rgb, 255:red, 245; green, 166; blue, 35 }  ,fill opacity=1 ][line width=1.5]  (393.32,172.22) -- (394.81,172.22) -- (394.81,175.4) -- (393.32,175.4) -- cycle ;
        
        \draw  [color={rgb, 255:red, 245; green, 166; blue, 35 }  ,draw opacity=1 ][dash pattern={on 5.63pt off 4.5pt}][line width=1.5]  (213.13,200.61) .. controls (213.13,167.51) and (267.82,140.68) .. (335.27,140.68) .. controls (402.73,140.68) and (457.42,167.51) .. (457.42,200.61) .. controls (457.42,233.71) and (402.73,260.55) .. (335.27,260.55) .. controls (267.82,260.55) and (213.13,233.71) .. (213.13,200.61) -- cycle ;
        \draw [color={rgb, 255:red, 245; green, 166; blue, 35 }  ,draw opacity=1 ][line width=1.5]  [dash pattern={on 5.63pt off 4.5pt}]  (282.36,194.65) -- (377.23,196.69) ;
        \draw [shift={(380.23,196.76)}, rotate = 181.23] [color={rgb, 255:red, 245; green, 166; blue, 35 }  ,draw opacity=1 ][line width=1.5]    (14.21,-4.28) .. controls (9.04,-1.82) and (4.3,-0.39) .. (0,0) .. controls (4.3,0.39) and (9.04,1.82) .. (14.21,4.28)   ;
        \draw [color={rgb, 255:red, 208; green, 2; blue, 27 }  ,draw opacity=1 ][line width=1.5]  [dash pattern={on 5.63pt off 4.5pt}]  (282.36,194.65) -- (371.98,84.12) ;
        \draw [shift={(373.87,81.79)}, rotate = 129.03] [color={rgb, 255:red, 208; green, 2; blue, 27 }  ,draw opacity=1 ][line width=1.5]    (14.21,-4.28) .. controls (9.04,-1.82) and (4.3,-0.39) .. (0,0) .. controls (4.3,0.39) and (9.04,1.82) .. (14.21,4.28)   ;
        \draw [color={rgb, 255:red, 80; green, 227; blue, 194 }  ,draw opacity=1 ][line width=1.5]    (237.8,82.49) -- (377.89,194.88) ;
        \draw [shift={(380.23,196.76)}, rotate = 218.74] [color={rgb, 255:red, 80; green, 227; blue, 194 }  ,draw opacity=1 ][line width=1.5]    (14.21,-4.28) .. controls (9.04,-1.82) and (4.3,-0.39) .. (0,0) .. controls (4.3,0.39) and (9.04,1.82) .. (14.21,4.28)   ;
        \draw  [color={rgb, 255:red, 0; green, 0; blue, 0 }  ,draw opacity=1 ][fill={rgb, 255:red, 189; green, 16; blue, 224 }  ,fill opacity=1 ][line width=1.5]  (391.83,80.15) .. controls (391.83,79.33) and (392.5,78.66) .. (393.32,78.66) -- (397.79,78.66) .. controls (398.61,78.66) and (399.28,79.33) .. (399.28,80.15) -- (399.28,91.32) .. controls (399.28,92.14) and (398.61,92.81) .. (397.79,92.81) -- (393.32,92.81) .. controls (392.5,92.81) and (391.83,92.14) .. (391.83,91.32) -- cycle ;
        \draw  [color={rgb, 255:red, 0; green, 0; blue, 0 }  ,draw opacity=1 ][fill={rgb, 255:red, 189; green, 16; blue, 224 }  ,fill opacity=1 ][line width=1.5]  (393.32,75.48) -- (394.81,75.48) -- (394.81,78.66) -- (393.32,78.66) -- cycle ;
        
        \draw  [color={rgb, 255:red, 0; green, 0; blue, 0 }  ,draw opacity=1 ][fill={rgb, 255:red, 189; green, 16; blue, 224 }  ,fill opacity=1 ][line width=1.5]  (391.83,103.29) .. controls (391.83,102.46) and (392.5,101.8) .. (393.32,101.8) -- (397.79,101.8) .. controls (398.61,101.8) and (399.28,102.46) .. (399.28,103.29) -- (399.28,114.45) .. controls (399.28,115.27) and (398.61,115.94) .. (397.79,115.94) -- (393.32,115.94) .. controls (392.5,115.94) and (391.83,115.27) .. (391.83,114.45) -- cycle ;
        \draw  [color={rgb, 255:red, 0; green, 0; blue, 0 }  ,draw opacity=1 ][fill={rgb, 255:red, 189; green, 16; blue, 224 }  ,fill opacity=1 ][line width=1.5]  (393.32,98.62) -- (394.81,98.62) -- (394.81,101.8) -- (393.32,101.8) -- cycle ;
        
        \draw   (379.44,57.05) .. controls (379.44,53.54) and (382.29,50.69) .. (385.8,50.69) -- (404.9,50.69) .. controls (408.42,50.69) and (411.27,53.54) .. (411.27,57.05) -- (411.27,118.89) .. controls (411.27,122.4) and (408.42,125.25) .. (404.9,125.25) -- (385.8,125.25) .. controls (382.29,125.25) and (379.44,122.4) .. (379.44,118.89) -- cycle ;
        \draw   (379.64,173.99) .. controls (379.64,170.47) and (382.49,167.62) .. (386.01,167.62) -- (405.11,167.62) .. controls (408.62,167.62) and (411.47,170.47) .. (411.47,173.99) -- (411.47,235.83) .. controls (411.47,239.34) and (408.62,242.19) .. (405.11,242.19) -- (386.01,242.19) .. controls (382.49,242.19) and (379.64,239.34) .. (379.64,235.83) -- cycle ;
        \draw   (124.12,139.72) .. controls (126.6,68.72) and (217.27,14.26) .. (326.63,18.08) .. controls (435.99,21.9) and (522.64,82.55) .. (520.16,153.55) .. controls (517.68,224.55) and (427.01,279.01) .. (317.65,275.19) .. controls (208.29,271.37) and (121.64,210.72) .. (124.12,139.72) -- cycle ;
        \draw [color={rgb, 255:red, 189; green, 16; blue, 224 }  ,draw opacity=1 ][line width=1.5]    (217.33,74.78) -- (196.22,129.94) ;
        \draw [color={rgb, 255:red, 189; green, 16; blue, 224 }  ,draw opacity=1 ][line width=1.5]    (217.33,74.78) -- (239.26,129.94) ;
        \draw [color={rgb, 255:red, 189; green, 16; blue, 224 }  ,draw opacity=1 ][line width=1.5]    (217.33,74.78) -- (218.57,135.24) ;
        \draw  [color={rgb, 255:red, 189; green, 16; blue, 224 }  ,draw opacity=1 ][line width=1.5]  (217.74,92.82) -- (224.78,95.64) -- (217.74,98.47) -- (210.71,95.64) -- cycle ;
        \draw  [color={rgb, 255:red, 189; green, 16; blue, 224 }  ,draw opacity=1 ][line width=1.5]  (217.54,101.3) -- (227.68,104.48) -- (217.54,107.66) -- (207.4,104.48) -- cycle ;
        \draw  [color={rgb, 255:red, 189; green, 16; blue, 224 }  ,draw opacity=1 ][line width=1.5]  (217.51,110.85) -- (231.24,113.5) -- (217.51,116.15) -- (203.78,113.5) -- cycle ;
        \draw  [color={rgb, 255:red, 189; green, 16; blue, 224 }  ,draw opacity=1 ][line width=1.5]  (218.11,120.75) -- (235.21,123.4) -- (218.11,126.05) -- (201,123.4) -- cycle ;
        \draw  [color={rgb, 255:red, 189; green, 16; blue, 224 }  ,draw opacity=1 ][fill={rgb, 255:red, 0; green, 0; blue, 0 }  ,fill opacity=1 ][line width=1.5]  (214.85,72.84) .. controls (214.85,71.76) and (215.91,70.89) .. (217.23,70.89) .. controls (218.54,70.89) and (219.61,71.76) .. (219.61,72.84) .. controls (219.61,73.91) and (218.54,74.78) .. (217.23,74.78) .. controls (215.91,74.78) and (214.85,73.91) .. (214.85,72.84) -- cycle ;
        \draw [color={rgb, 255:red, 189; green, 16; blue, 224 }  ,draw opacity=1 ][line width=1.5]    (217.23,72.84) -- (217.11,65.67) ;
        \draw [color={rgb, 255:red, 189; green, 16; blue, 224 }  ,draw opacity=1 ][line width=1.5]    (201.99,74.78) -- (231.43,57.26) ;
        \draw [color={rgb, 255:red, 189; green, 16; blue, 224 }  ,draw opacity=1 ][line width=1.5]    (231.43,57.26) -- (231.32,50.09) ;
        \draw [color={rgb, 255:red, 189; green, 16; blue, 224 }  ,draw opacity=1 ][line width=1.5]    (201.99,74.78) -- (201.88,67.61) ;
        \draw [color={rgb, 255:red, 189; green, 16; blue, 224 }  ,draw opacity=1 ][line width=1.5]    (208.47,70.74) -- (208.36,63.57) ;
        \draw [color={rgb, 255:red, 189; green, 16; blue, 224 }  ,draw opacity=1 ][line width=1.5]    (217.11,65.67) -- (217,58.5) ;
        \draw [color={rgb, 255:red, 189; green, 16; blue, 224 }  ,draw opacity=1 ][line width=1.5]    (223.59,62.32) -- (223.48,55.15) ;
        
        \draw [color={rgb, 255:red, 245; green, 166; blue, 35 }  ,draw opacity=1 ][line width=1.5]    (268.44,193.3) -- (253.52,232.91) ;
        \draw [color={rgb, 255:red, 245; green, 166; blue, 35 }  ,draw opacity=1 ][line width=1.5]    (268.44,193.3) -- (283.95,232.91) ;
        \draw [color={rgb, 255:red, 245; green, 166; blue, 35 }  ,draw opacity=1 ][line width=1.5]    (268.44,193.3) -- (269.32,236.71) ;
        \draw  [color={rgb, 255:red, 245; green, 166; blue, 35 }  ,draw opacity=1 ][line width=1.5]  (268.73,206.25) -- (273.71,208.28) -- (268.73,210.31) -- (263.76,208.28) -- cycle ;
        \draw  [color={rgb, 255:red, 245; green, 166; blue, 35 }  ,draw opacity=1 ][line width=1.5]  (268.59,212.34) -- (275.76,214.63) -- (268.59,216.91) -- (261.42,214.63) -- cycle ;
        \draw  [color={rgb, 255:red, 245; green, 166; blue, 35 }  ,draw opacity=1 ][line width=1.5]  (268.57,219.2) -- (278.27,221.1) -- (268.57,223) -- (258.86,221.1) -- cycle ;
        \draw  [color={rgb, 255:red, 245; green, 166; blue, 35 }  ,draw opacity=1 ][line width=1.5]  (268.99,226.3) -- (281.09,228.21) -- (268.99,230.11) -- (256.89,228.21) -- cycle ;
        \draw  [color={rgb, 255:red, 245; green, 166; blue, 35 }  ,draw opacity=1 ][fill={rgb, 255:red, 0; green, 0; blue, 0 }  ,fill opacity=1 ][line width=1.5]  (266.68,191.9) .. controls (266.68,191.13) and (267.44,190.51) .. (268.37,190.51) .. controls (269.3,190.51) and (270.05,191.13) .. (270.05,191.9) .. controls (270.05,192.67) and (269.3,193.3) .. (268.37,193.3) .. controls (267.44,193.3) and (266.68,192.67) .. (266.68,191.9) -- cycle ;
        \draw [color={rgb, 255:red, 245; green, 166; blue, 35 }  ,draw opacity=1 ][line width=1.5]    (268.37,191.9) -- (268.29,186.75) ;
        \draw [color={rgb, 255:red, 245; green, 166; blue, 35 }  ,draw opacity=1 ][line width=1.5]    (257.6,193.3) -- (278.41,180.71) ;
        \draw [color={rgb, 255:red, 245; green, 166; blue, 35 }  ,draw opacity=1 ][line width=1.5]    (278.41,180.71) -- (278.33,175.57) ;
        \draw [color={rgb, 255:red, 245; green, 166; blue, 35 }  ,draw opacity=1 ][line width=1.5]    (257.6,193.3) -- (257.51,188.15) ;
        \draw [color={rgb, 255:red, 245; green, 166; blue, 35 }  ,draw opacity=1 ][line width=1.5]    (262.18,190.39) -- (262.1,185.24) ;
        \draw [color={rgb, 255:red, 245; green, 166; blue, 35 }  ,draw opacity=1 ][line width=1.5]    (268.29,186.75) -- (268.2,181.61) ;
        \draw [color={rgb, 255:red, 245; green, 166; blue, 35 }  ,draw opacity=1 ][line width=1.5]    (272.87,184.35) -- (272.79,179.2) ;
        
        \draw   (17,177) -- (155,177) -- (155,234) -- (17,234) -- cycle ;
        \draw   (519,183.26) -- (650,183.26) -- (650,265) -- (519,265) -- cycle ;
        \draw  [color={rgb, 255:red, 0; green, 0; blue, 0 }  ,draw opacity=1 ][fill={rgb, 255:red, 195; green, 221; blue, 166 }  ,fill opacity=1 ] (155.75,207.63) -- (162.88,201.49) -- (163.66,203.39) -- (252.02,179.72) -- (253.58,183.53) -- (165.22,207.2) -- (166,209.11) -- cycle ;
        \draw  [color={rgb, 255:red, 0; green, 0; blue, 0 }  ,draw opacity=1 ][fill={rgb, 255:red, 195; green, 221; blue, 166 }  ,fill opacity=1 ] (521,213.22) -- (509.98,217.12) -- (510.48,214.73) -- (408.62,205.82) -- (409.63,201.04) -- (511.48,209.95) -- (511.98,207.56) -- cycle ;
        
        \draw (379.07,27.88) node [anchor=north west][inner sep=0.75pt]  [color={rgb, 255:red, 189; green, 16; blue, 224 }  ,opacity=1 ] [align=left] {\textbf{PR}};
        \draw (291.97,60.13) node [anchor=north west][inner sep=0.75pt]   [align=left] {g};
        \draw (273.46,93.78) node [anchor=north west][inner sep=0.75pt]   [align=left] {g'};
        \draw (320.61,175.09) node [anchor=north west][inner sep=0.75pt]   [align=left] {h};
        \draw (291.76,149.86) node [anchor=north west][inner sep=0.75pt]   [align=left] {h'};
        \draw (201.83,39.1) node [anchor=north west][inner sep=0.75pt]  [color={rgb, 255:red, 189; green, 16; blue, 224 }  ,opacity=1 ] [align=left] {\textbf{PT}};
        \draw (383.84,149.15) node [anchor=north west][inner sep=0.75pt]  [color={rgb, 255:red, 245; green, 166; blue, 35 }  ,opacity=1 ] [align=left] {\textbf{SR}};
        \draw (246.39,160.37) node [anchor=north west][inner sep=0.75pt]  [color={rgb, 255:red, 245; green, 166; blue, 35 }  ,opacity=1 ] [align=left] {\textbf{ST}};
        \draw (22.87,178.95) node [anchor=north west][inner sep=0.75pt]   [align=left] {{\footnotesize \textbf{ST antenna selection:}}\\{\footnotesize \textbf{Max SNR antenna}}};
        \draw (525.51,189.33) node [anchor=north west][inner sep=0.75pt]  [font=\footnotesize] [align=left] {\textbf{1. Distribution of \ }\\\textbf{SNR at SR}\\2. \textbf{E2E SNR }\\\textbf{distribution}};
    \end{tikzpicture}
    \caption{Spectrum-sharing system with PT, PR, ST, and SR. \ac{EVT} can be exploited in finding the distribution of SNR/SIR at SR, taking the decision on transmit antenna selection, and finding the E2E SNR distribution.}
    \label{fig:ss}
    \end{figure*}   
                
    \subsubsection*{Summary/Takeaway}
       By using \ac{EVT}, one can obtain the maximum \ac{SNR}/\ac{SIR} statistic in the form of Gumbel/Frechet \ac{CDF}s, while the exact distribution is significantly more complicated. For \ac{NG} applications, where one expects to use generalized fading models such as $\kappa-\mu$ \cite{subhash2019asymptotic} or $\alpha-\mu$ \cite{al2017asymptotic}, evaluating the exact distribution for large $k$ is intractable and one has no choice but to resort to \ac{EVT}. Furthermore, the works \cite{al2017asymptotic,xia2008opportunistic,song2006asymptotic,al2020throughput} use the limiting distribution in throughput analysis. When the maximum-order statistic is not available, $k$-th order statistics are used in \cite{ikki2010performance}-\cite{al2018secrecy} to derive the closed-form expressions for the average throughput, effective throughput, the average \ac{BEP}, \ac{ESC}, and outage probability. 

\subsection{Spectrum Sharing}
\begin{figure*}[b]
\hrule
        \begin{equation}\label{comp}
            F_{\hat{\gamma}_i}\left ( x \right )=\sum_{m=1}^{n_r} B_{m-1}^{n_r-1}\left ( \sigma_{g}^{2} \right )\left \{ 1-\frac{1}{\Gamma \left ( m \right )} \left [ \Gamma \left ( m,\frac{x}{\mu_g S_P} \right )-\left ( \frac{x}{\mu_g S_Q+x} \right )^{m}\Gamma \left ( m,\frac{x}{\mu_g S_P}+\frac{S_Q}{S_P} \right ) \right  ]\right \},
        \end{equation}
    \end{figure*}
    The proliferation of mobile users is creating a scarcity of radio spectrum, leading to spectrum sharing to improve spectral efficiency. Depending upon the applications, various spectrum-sharing techniques exist \cite{baby2016comparative}. These spectrum-sharing techniques can be classified as interweave, underlay, and overlay schemes. Spectrum will be shared between primary user and \ac{SU}, where primary users will be licensed to access the spectrum. When the primary users are not using the spectrum (interweave), it can be allocated to \ac{SU}s. In the case of underlay, spectrum sharing happens cooperatively without interfering with each other. In this case,  the \ac{SU}s must limit their transmit power to avoid interference with the primary users. In the overlay, \ac{SU}s can share the spectrum in time, spatial, and frequency domains.    
    
\subsubsection*{System Description}
     Consider a \ac{CR} scenario, where we have a \ac{PT},  \ac{PR}, \ac{ST}, and \ac{SR} as shown in Fig.~\ref{fig:ss}. Each \ac{PT}, \ac{ST}, \ac{PR}, and \ac{SR} can have either a single antenna or multiple antennas depending upon the scenario considered. The channel gains between \ac{PT}-\ac{PR} and \ac{PT}-\ac{SR} are represented by $g$ and ${g}'$, respectively, while the channel gains between \ac{ST}-\ac{SR} and \ac{ST}-\ac{PR} are denoted by $h$ and ${h}'$, respectively. EVT can be used in scenarios like selecting the maximum-\ac{SNR} antenna at \ac{ST}/\ac{SR}, selecting the minimum \ac{SIR} user in a multicast system, and deriving the PDF of \ac{SINR}/\ac{SNR}/\ac{SIR} at the receiver.  Various system models, e.g., multi-hop relaying \cite{xia2013spectrum}, spectrum sharing in the indoor network with a micro base station at \ac{ST} and multiple \ac{SR}s \cite{haider2015spectral}, as well as multicast \ac{CR} networks\cite{subhash2020transmit}, have been explored in the context of \ac{EVT}. 
\subsubsection*{Advantages of \ac{EVT}}
    Performance metrics such as the average rate, outage rate, and average symbol error rate can be computed easily at the \ac{SR}s under the constraints of maximal transmit power at \ac{ST} and maximum interference power at \ac{PR} with the help of \ac{EVT}. Furthermore, scaling laws can be derived with the use of performance metrics. As given in \cite[Eq.(10)]{duan2019asymptotic}, we can observe that the post-processed \ac{SNR} at \ac{SR} has a complicated expression (\ref{comp}). Here, $ B_{i}^{k}\left ( t \right )$ denotes  Bernstein basis polynomial of degree $k$, $S_{P}=\frac{P_{m}}{\sigma ^{2}}$, and $S_{Q}=\frac{Q}{\mu _{h}\sigma ^{2}}$. Here,  $\mu _{h}$ denotes the average path loss between the $i^ {th}$ ST antenna and the receive antenna of the PR, $P_m$ is the maximum transmission power of ST, and $Q$ denotes interference power threshold.
    
    So, finding the exact limiting distribution in such a scenario is more complicated ($F_{\gamma_{max}}\left ( x \right )=\left ( F_{\hat{\gamma}_i} \right )^{n_t}$). Deriving the performance metrics would also be more complex. Further, it is shown by \cite{duan2019asymptotic} that using \ac{EVT} can simplify the expressions of the limiting distributions and subsequent performance metrics ($F_{\gamma_{max}}\left ( x \right )=\exp\left ( \frac{-b_n}{x} \right )$, where $b_n=\frac{\mu _g S_Q}{\left [ \frac{n_t}{n_t -1} \right ]^{\frac{\zeta }{n_r}}-1}$). Even for spectrum-sharing aided multi-hop \ac{AF} relaying systems \cite{xia2013spectrum}, deriving the closed-form expressions involves complex mathematics. \ac{EVT} is used to simplify the expressions so that the performance analysis can be done conveniently. Generalized fading models will have complicated expressions of \ac{CDF} for \ac{SNR}/\ac{SINR}/\ac{SIR}. Thus, calculating the maxima/minima of \ac{RVs} in such a scenario is very complicated. \ac{EVT} is very helpful for solving such complicated expressions \cite{subhash2020transmit}. 
    \begin{table*}[b]
        \centering
        \caption{Applications of \ac{EVT} in \ac{CR} networks. }\label{Table: CR}
         \resizebox{\textwidth}{!}{%
         \begin{tabular}{|l|l|l|l|l|}
             \hline
              Ref & Application~ & \ac{SNR}/\ac{SINR}  & Min/Max   & Asymptotic Distribution  \\
               \hline
               \cite{xia2013spectrum} & Multi-hop relaying in \ac{CR} networks      & \ac{SNR}            & Maximum   & Weibull                  \\
             \cite{hong2011throughput} & Interweave \ac{CR} networks       & \ac{SNR}             & Maximum  & -                  \\
             \cite{haider2015spectral} & Underlay \ac{CR} networks       & \ac{SINR}            & Maximum  & -                  \\
               \cite{duan2019asymptotic} & Transmit antenna selection in \ac{CR} networks      & \ac{SNR}, \ac{SIR}            & Maximum    & Gumbel,Frechet                   \\
               \cite{subhash2020transmit} & Multicast \ac{CR} networks      & \ac{SIR}             & Minimum   & Weibull                   \\
             \hline
         \end{tabular}%
         }
    \end{table*}
\subsubsection*{Applicability Illustration}
    Different modes of spectrum-sharing methods will lead to different \ac{SNR}/\ac{SIR} distributions. Hong {\it et al.} in \cite{hong2011throughput}  considered the channel gains of both the desired and interfering signals as exponential distributions. Hence \ac{SNR}/\ac{SIR} distribution would be a ratio of exponential \ac{RVs}, while the limiting distributions in these cases would be Frechet. The multi-hop relaying systems in \cite{xia2013spectrum} used simple Rayleigh fading, leading to Pareto distribution as \ac{E2E} \ac{SNR}. Further, even more complicated channel gains like log-normal \cite{haider2015spectral}, chi-square \cite{duan2019asymptotic}, and $\kappa-\mu$ shadowed fading \cite{subhash2020transmit} were explored in getting the \ac{SNR}/\ac{SIR} distributions at \ac{SR}.\\
    \textit{Example}:
    \par Xia {\it et al.} \cite{xia2013spectrum} considered the multi-hop relaying link between \ac{ST} as well as \ac{SR} (K relaying paths)  and studied the limiting distribution functions of the lower and the upper bounds on the \ac{E2E} \ac{SNR} $\gamma_{e2e}$ of the relaying path. Let $\gamma_k$ be the received \ac{SNR} at $k$-th secondary node. Then, the \ac{E2E} \ac{SNR} \cite{xia2013spectrum} is given by $\gamma_{e 2 e}=\left(\sum_{k=1}^K \frac{1}{\gamma_k}\right)^{-1}$. The authors have derived the exact distribution functions and \ac{MGF} for $\gamma_{e2e}$. As they are quite complex for analysis, the authors have proposed a pair of lower and upper bounds on $\gamma_{e2e}$, i.e.,
    \begin{equation}
        \gamma_{e 2 e} \leq\left(\max _{k=1, \cdots, K} \frac{1}{\gamma_k}\right)^{-1}=\min _{k=1, \cdots, K} \gamma_k 
         = \gamma_{e 2 e}^{\text {upper }},
    \end{equation}
    \begin{equation}
        \gamma_{e 2 e} \geq\left(K \max _{k=1, \cdots, K} \frac{1}{\gamma_k}\right)^{-1}=\frac{1}{K} \min _{k=1, \cdots, K} \gamma_k 
     = \gamma_{e 2 e}^{\text {lower }}.
    \end{equation}
    The limiting distributions of the upper and lower bounds of the \ac{E2E} \ac{SNR} are derived using Gnedenko’s sufficient and necessary conditions\cite{galambos1978asymptotic}. Assuming that each $\gamma_k$ follows the Pareto distribution, the authors have proved that limiting distribution of $\gamma_{e 2 e}^{\text {upper }}$ belongs to the Weibull family.
\subsubsection*{State of the Art}
    The authors of \cite{hong2011throughput}  considered an interweave system with \ac{PT}, \ac{PR}, \ac{SR}, and multiple \ac{ST}s with the assumption of only utilizing the primary user spectrum when it is not occupied. The \ac{SR} selects the \ac{ST} with the highest \ac{SNR}, and the authors proved that the limiting distribution follows the Frechet CDF in this case. Furthermore, it is used to derive the secondary networks' asymptotic throughput.
  
     In \cite{haider2015spectral}, the trade-off between spectral and energy efficiency is analyzed by Haider and Wang at both the link and system levels in \ac{CR} networks. The system model has a micro-base station as \ac{ST} with multiple SRs and shares the spectrum with indoor primary networks with multiple \ac{PR}s. At the system level, \ac{EVT} is used to derive the \ac{PDF} of \ac{SINR} at the secondary receiver. The expression of spectral efficiency is derived for the optimal power allocation. Subsequently, it is used to derive the average energy efficiency in \ac{CR} networks.
    \par Ruan {\it et al.} \cite{duan2019asymptotic} focuses their attention on the asymptotic analysis of spectrum sharing systems assuming a large number of antennas as \ac{ST}. The system model relies on a single antenna at each \ac{PT} and \ac{PR}, but multiple antennas at \ac{SR} using the \ac{MRC} technique. \ac{TAS} is employed at \ac{ST} with the help of both perfect and imperfect \ac{CSI} from \ac{SR}. Assuming that \ac{TAS} will select the antenna having the highest \ac{SNR}, the authors derived the asymptotic distribution of \ac{SNR} at \ac{SR}, considering a large number of transmit antennas. It is proved that the limiting \ac{SNR} distribution of the \ac{SU} follows Frechet distribution, if the maximum transmit power dominates over the maximum interference power. Otherwise, it follows the Gumbel distribution. Furthermore, the performance analysis is carried out with the help of these distributions.
    
     \ac{EVT} is used by Subhash and Srinivasan \cite{subhash2020transmit} in finding the limiting distribution of minimum of \ac{i.i.d.} \ac{SIR} \ac{RVs} under $\kappa-\mu$ shadowed fading in a multicast (using a single channel for the same service by a group of users) \ac{CR} environment. The system model consists of a single \ac{PT} covering $M$ multicast \ac{PR}s and a single \ac{ST} covering $L$ multicast \ac{SR}s. Here the authors assumed that both the desired and interfering signals undergo $\kappa-\mu$ shadowed fading with non-identical parameters. Considering such \ac{i.i.d.} \ac{SIR} \ac{RVs}, it is proved that the limiting distribution of minimum follows Weibull distribution. Furthermore, it analyzes the power allocation at \ac{ST} and the ergodic multicast rate of \ac{SU}s under different \ac{QoS} constraints.  \textcolor{black}{Table \ref{Table: CR}  outlines the various applications of \ac{EVT} within the framework of \ac{CR} networks}.
\subsubsection*{Summary/Takeaway}
    \ac{CR} networks can encounter maximum or minimum of \ac{SNR}/\ac{SIR} \ac{RVs} in situations like the maximum-\ac{SNR} antenna selection at \ac{ST}/\ac{SR} or minimum \ac{SIR} user in a multicast system. In such scenarios, \ac{EVT} would be of benefit for simplified performance analysis. The analysis of performance metrics at \ac{SR} in \ac{CR} networks is simplified with the help of \ac{EVT}. Various system models like multi-hop relaying\cite{xia2013spectrum}, interweave \cite{hong2011throughput}, underlay\cite{haider2015spectral}, and multicast\cite{subhash2020transmit} systems explored the \ac{RVs}, from simple Rayleigh fading to more complicated channels like $\kappa-\mu$ shadowed fading. Deriving the maximum order statistics of \ac{SNR}/\ac{SIR} would have been more complex without \ac{EVT}. Once the limiting distribution of \ac{SNR}/\ac{SIR} is derived at the \ac{SR}, performance metrics like average rate, outage rate, average symbol error rate, and scaling laws are derived using the limiting distribution of \ac{SNR}/\ac{SIR}.    

\subsection{Relays}
\begin{figure*}
    \centering
\tikzset{every picture/.style={line width=0.75pt}} 

\begin{tikzpicture}[x=0.75pt,y=0.75pt,yscale=-1,xscale=1]

\draw  [fill={rgb, 255:red, 128; green, 128; blue, 128 }  ,fill opacity=1 ] (49,142.88) .. controls (49,140.18) and (51.18,138) .. (53.88,138) -- (68.52,138) .. controls (71.22,138) and (73.4,140.18) .. (73.4,142.88) -- (73.4,169.32) .. controls (73.4,172.02) and (71.22,174.2) .. (68.52,174.2) -- (53.88,174.2) .. controls (51.18,174.2) and (49,172.02) .. (49,169.32) -- cycle ;
\draw  [fill={rgb, 255:red, 0; green, 0; blue, 0 }  ,fill opacity=1 ] (53.88,126.2) -- (60.4,126.2) -- (60.4,138) -- (53.88,138) -- cycle ;
\draw  [fill={rgb, 255:red, 128; green, 128; blue, 128 }  ,fill opacity=1 ] (368,139.88) .. controls (368,137.18) and (370.18,135) .. (372.88,135) -- (387.52,135) .. controls (390.22,135) and (392.4,137.18) .. (392.4,139.88) -- (392.4,166.32) .. controls (392.4,169.02) and (390.22,171.2) .. (387.52,171.2) -- (372.88,171.2) .. controls (370.18,171.2) and (368,169.02) .. (368,166.32) -- cycle ;
\draw  [fill={rgb, 255:red, 0; green, 0; blue, 0 }  ,fill opacity=1 ] (372.88,123.2) -- (379.4,123.2) -- (379.4,135) -- (372.88,135) -- cycle ;
\draw  [fill={rgb, 255:red, 220; green, 214; blue, 214 }  ,fill opacity=1 ] (194,41.6) .. controls (194,34.64) and (202.6,29) .. (213.2,29) .. controls (223.8,29) and (232.4,34.64) .. (232.4,41.6) .. controls (232.4,48.56) and (223.8,54.2) .. (213.2,54.2) .. controls (202.6,54.2) and (194,48.56) .. (194,41.6) -- cycle ;
\draw  [fill={rgb, 255:red, 220; green, 214; blue, 214 }  ,fill opacity=1 ] (194,90.6) .. controls (194,83.64) and (202.6,78) .. (213.2,78) .. controls (223.8,78) and (232.4,83.64) .. (232.4,90.6) .. controls (232.4,97.56) and (223.8,103.2) .. (213.2,103.2) .. controls (202.6,103.2) and (194,97.56) .. (194,90.6) -- cycle ;
\draw  [fill={rgb, 255:red, 220; green, 214; blue, 214 }  ,fill opacity=1 ] (193,167.6) .. controls (193,160.64) and (201.6,155) .. (212.2,155) .. controls (222.8,155) and (231.4,160.64) .. (231.4,167.6) .. controls (231.4,174.56) and (222.8,180.2) .. (212.2,180.2) .. controls (201.6,180.2) and (193,174.56) .. (193,167.6) -- cycle ;
\draw  [fill={rgb, 255:red, 220; green, 214; blue, 214 }  ,fill opacity=1 ] (193,239.6) .. controls (193,232.64) and (201.6,227) .. (212.2,227) .. controls (222.8,227) and (231.4,232.64) .. (231.4,239.6) .. controls (231.4,246.56) and (222.8,252.2) .. (212.2,252.2) .. controls (201.6,252.2) and (193,246.56) .. (193,239.6) -- cycle ;
\draw   (210,191.36) .. controls (210,189.51) and (211.21,188) .. (212.7,188) .. controls (214.19,188) and (215.4,189.51) .. (215.4,191.36) .. controls (215.4,193.22) and (214.19,194.73) .. (212.7,194.73) .. controls (211.21,194.73) and (210,193.22) .. (210,191.36) -- cycle ;
\draw   (210,203.1) .. controls (210,201.24) and (211.21,199.74) .. (212.7,199.74) .. controls (214.19,199.74) and (215.4,201.24) .. (215.4,203.1) .. controls (215.4,204.96) and (214.19,206.46) .. (212.7,206.46) .. controls (211.21,206.46) and (210,204.96) .. (210,203.1) -- cycle ;
\draw   (210,215.84) .. controls (210,213.98) and (211.21,212.47) .. (212.7,212.47) .. controls (214.19,212.47) and (215.4,213.98) .. (215.4,215.84) .. controls (215.4,217.69) and (214.19,219.2) .. (212.7,219.2) .. controls (211.21,219.2) and (210,217.69) .. (210,215.84) -- cycle ;
\draw [color={rgb, 255:red, 74; green, 144; blue, 226 }  ,draw opacity=1 ]   (82.4,161.2) -- (193.7,230.15) ;
\draw [shift={(195.4,231.2)}, rotate = 211.78] [color={rgb, 255:red, 74; green, 144; blue, 226 }  ,draw opacity=1 ][line width=0.75]    (10.93,-3.29) .. controls (6.95,-1.4) and (3.31,-0.3) .. (0,0) .. controls (3.31,0.3) and (6.95,1.4) .. (10.93,3.29)   ;
\draw [color={rgb, 255:red, 80; green, 227; blue, 194 }  ,draw opacity=1 ]   (82.4,161.2) -- (191,167.48) ;
\draw [shift={(193,167.6)}, rotate = 183.31] [color={rgb, 255:red, 80; green, 227; blue, 194 }  ,draw opacity=1 ][line width=0.75]    (10.93,-3.29) .. controls (6.95,-1.4) and (3.31,-0.3) .. (0,0) .. controls (3.31,0.3) and (6.95,1.4) .. (10.93,3.29)   ;
\draw [color={rgb, 255:red, 245; green, 166; blue, 35 }  ,draw opacity=1 ]   (82.4,161.2) -- (188.69,96.24) ;
\draw [shift={(190.4,95.2)}, rotate = 148.57] [color={rgb, 255:red, 245; green, 166; blue, 35 }  ,draw opacity=1 ][line width=0.75]    (10.93,-3.29) .. controls (6.95,-1.4) and (3.31,-0.3) .. (0,0) .. controls (3.31,0.3) and (6.95,1.4) .. (10.93,3.29)   ;
\draw [color={rgb, 255:red, 208; green, 2; blue, 27 }  ,draw opacity=1 ]   (82.4,161.2) -- (192.64,43.06) ;
\draw [shift={(194,41.6)}, rotate = 133.02] [color={rgb, 255:red, 208; green, 2; blue, 27 }  ,draw opacity=1 ][line width=0.75]    (10.93,-3.29) .. controls (6.95,-1.4) and (3.31,-0.3) .. (0,0) .. controls (3.31,0.3) and (6.95,1.4) .. (10.93,3.29)   ;
\draw [color={rgb, 255:red, 208; green, 2; blue, 27 }  ,draw opacity=1 ] [dash pattern={on 4.5pt off 4.5pt}]  (232.4,41.6) -- (362.88,153.9) ;
\draw [shift={(364.4,155.2)}, rotate = 220.72] [color={rgb, 255:red, 208; green, 2; blue, 27 }  ,draw opacity=1 ][line width=0.75]    (10.93,-3.29) .. controls (6.95,-1.4) and (3.31,-0.3) .. (0,0) .. controls (3.31,0.3) and (6.95,1.4) .. (10.93,3.29)   ;
\draw [color={rgb, 255:red, 245; green, 166; blue, 35 }  ,draw opacity=1 ] [dash pattern={on 4.5pt off 4.5pt}]  (232.4,90.6) -- (362.6,154.32) ;
\draw [shift={(364.4,155.2)}, rotate = 206.08] [color={rgb, 255:red, 245; green, 166; blue, 35 }  ,draw opacity=1 ][line width=0.75]    (10.93,-3.29) .. controls (6.95,-1.4) and (3.31,-0.3) .. (0,0) .. controls (3.31,0.3) and (6.95,1.4) .. (10.93,3.29)   ;
\draw [color={rgb, 255:red, 80; green, 227; blue, 194 }  ,draw opacity=1 ] [dash pattern={on 4.5pt off 4.5pt}]  (231.4,167.6) -- (362.41,155.39) ;
\draw [shift={(364.4,155.2)}, rotate = 174.67] [color={rgb, 255:red, 80; green, 227; blue, 194 }  ,draw opacity=1 ][line width=0.75]    (10.93,-3.29) .. controls (6.95,-1.4) and (3.31,-0.3) .. (0,0) .. controls (3.31,0.3) and (6.95,1.4) .. (10.93,3.29)   ;
\draw [color={rgb, 255:red, 74; green, 144; blue, 226 }  ,draw opacity=1 ] [dash pattern={on 4.5pt off 4.5pt}]  (231.4,239.6) -- (362.71,156.27) ;
\draw [shift={(364.4,155.2)}, rotate = 147.6] [color={rgb, 255:red, 74; green, 144; blue, 226 }  ,draw opacity=1 ][line width=0.75]    (10.93,-3.29) .. controls (6.95,-1.4) and (3.31,-0.3) .. (0,0) .. controls (3.31,0.3) and (6.95,1.4) .. (10.93,3.29)   ;
\draw   (210,115.36) .. controls (210,113.51) and (211.21,112) .. (212.7,112) .. controls (214.19,112) and (215.4,113.51) .. (215.4,115.36) .. controls (215.4,117.22) and (214.19,118.73) .. (212.7,118.73) .. controls (211.21,118.73) and (210,117.22) .. (210,115.36) -- cycle ;
\draw   (210,127.1) .. controls (210,125.24) and (211.21,123.74) .. (212.7,123.74) .. controls (214.19,123.74) and (215.4,125.24) .. (215.4,127.1) .. controls (215.4,128.96) and (214.19,130.46) .. (212.7,130.46) .. controls (211.21,130.46) and (210,128.96) .. (210,127.1) -- cycle ;
\draw   (210,139.84) .. controls (210,137.98) and (211.21,136.47) .. (212.7,136.47) .. controls (214.19,136.47) and (215.4,137.98) .. (215.4,139.84) .. controls (215.4,141.69) and (214.19,143.2) .. (212.7,143.2) .. controls (211.21,143.2) and (210,141.69) .. (210,139.84) -- cycle ;

\draw (22,103) node [anchor=north west][inner sep=0.75pt]   [align=left] {Tranmsitter};
\draw (368,98) node [anchor=north west][inner sep=0.75pt]   [align=left] {Receiver};
\draw (192,2) node [anchor=north west][inner sep=0.75pt]   [align=left] {Relays};
\draw (207,31) node [anchor=north west][inner sep=0.75pt]   [align=left] {1};
\draw (208,81) node [anchor=north west][inner sep=0.75pt]   [align=left] {2};
\draw (207,157) node [anchor=north west][inner sep=0.75pt]   [align=left] {k};
\draw (204,230) node [anchor=north west][inner sep=0.75pt]   [align=left] {K};
\draw (270,54) node [anchor=north west][inner sep=0.75pt]    {$\textcolor[rgb]{0.82,0.01,0.11}{\gamma _{1}}$};
\draw (267,87) node [anchor=north west][inner sep=0.75pt]  [color={rgb, 255:red, 245; green, 166; blue, 35 }  ,opacity=1 ]  {$\textcolor[rgb]{0.96,0.65,0.14}{\gamma _{2}}$};
\draw (265,137) node [anchor=north west][inner sep=0.75pt]  [color={rgb, 255:red, 245; green, 166; blue, 35 }  ,opacity=1 ]  {$\textcolor[rgb]{0.31,0.89,0.76}{\gamma _{k}}$};
\draw (264,181) node [anchor=north west][inner sep=0.75pt]  [color={rgb, 255:red, 245; green, 166; blue, 35 }  ,opacity=1 ]  {$\textcolor[rgb]{0.29,0.56,0.89}{\gamma _{K}}$};
\draw (358,183) node [anchor=north west][inner sep=0.75pt]    {$\hat{k} =\arg \max_{i=1,\cdots ,K}(\textcolor[rgb]{0.82,0.01,0.11}{\gamma }\textcolor[rgb]{0.82,0.01,0.11}{_{1} ,}\textcolor[rgb]{0.96,0.65,0.14}{\gamma }\textcolor[rgb]{0.96,0.65,0.14}{_{2} ,} \cdots ,\textcolor[rgb]{0.29,0.56,0.89}{\gamma }\textcolor[rgb]{0.29,0.56,0.89}{_{K}})$};

\end{tikzpicture}
   \caption{DF relaying system, where relay that maximizes the E2E SNR transmits information to the receiver.}
    \label{fig_relay_sys}
\end{figure*}
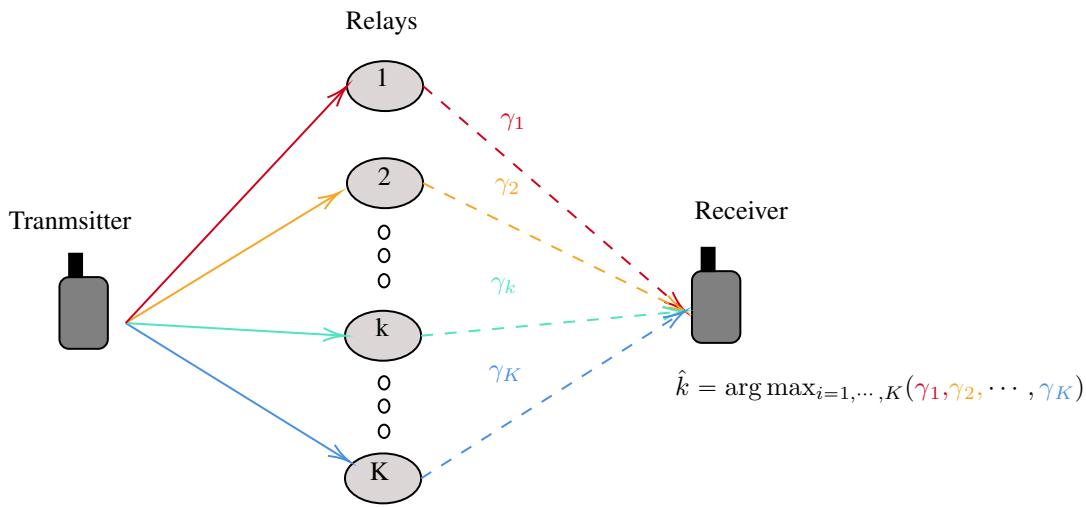
Relaying is a simple strategy capable of combating common impediments in wireless channels like fading, poor coverage, large power consumption, etc. When we consider node-to-node communication with single or multi-antenna nodes, there can be scenarios, where the signal power received at the destination may not achieve the desired \ac{QoS} constraints. In such scenarios, we may resort to using a relay node capable of decoding and forwarding or amplifying and forwarding the received signal to a destination node. The presence of more than one relay, with relay selection offers the benefits of diversity as well at the cost of extra hardware, computational complexity, and throughput reduction. Hence, relays have been demonstrated to improve the quality of wireless links \cite{marchenko2014experimental,madan2008energy,jing2009single,zhou2008energy}. \par The selection/combining of signals received over multiple relay links decides the performance of multi-relay systems. The ideal solution here depends on a trade-off between the allowable computational complexity and \ac{QoS} requirements. The simplest solution would be selecting the relay, which achieves the maximum performance in terms of the desired \ac{QoS} metric at the destination. This scheme would not require the aid of a signal combining at the destination and would give the best performance at the destination with a single relay node. Such single relay selection schemes are widely studied in the literature \cite{ikki2010performance,liu2014relay,belbase2018coverage,onalan2020relay}. Note that selecting the best or the $k$-th best relay in terms of received power/\ac{SNR}/\ac{SINR} requires the order statistics to characterize the system's performance. Several multi-hop relaying systems are also studied in literature \cite{shen2009multi,oyman2007multihop,cho2004throughput,karagiannidis2006bounds}, where a long distance low-quality link is broken into more than two high-quality links . 

\subsubsection*{System Description}Let us consider a single-hop relaying system with a single-antenna transmitter sending information to a single-antenna receiver via $K$ relays. We assume that the direct link between the transmitter and receiver is in outage. Hence, the relays are responsible for forwarding the information to the receiver either by amplifying and forwarding the received signal (referred to as the \ac{AF} network) or by decoding and re-transmitting the signal to the receiver (referred to as the \ac{DF} network). Furthermore, we assume that the specific relay which maximizes the desired performance metric of the \ac{E2E} link is selected for transmitting the information to the receiver. Hence, the maximum order statistic of the \ac{QoS} metric is essential for characterizing the system performance. We illustrate such a system in Fig. \ref{fig_relay_sys}, where $\gamma_k$ is the performance metric of interest, and the relay $\hat{k}$ with ${\gamma_{\hat{k}}}=\max\{\gamma_1,\cdots,\gamma_K\}$ is chosen for sending information to the receiver in the second hop of the relay network. Depending upon the particular relaying strategy (\ac{DF} or \ac{AF}, multi-hop or single-hop), the fading channel gain of the \ac{E2E} network will have a different distribution. For example, in a single-hop \ac{SISO} Rayleigh fading channel, a simple \ac{DF} relay network will result in a Rayleigh channel, whereas an \ac{AF} network will result in a double Rayleigh channel. Depending upon the channel's fading characteristics, the transmitter as well as the receiver nodes, and the relay selection strategy, we will have different distributions for the received signal power \cite{zhang2018performance,xia2012cooperative,manoj2018performance}.
\subsubsection*{Advantages of \ac{EVT}} The resultant expressions for the \ac{QoS} metrics are also different. Note that most of these are not available as simple expressions, not even for single relay channels. Hence, the exact order statistics of these metrics have complicated forms that are not amenable to analyses. \ac{EVT} allows us to characterize the statistics of these metrics in a simple form.
For example, Fig. \ref{relay_result} shows both the simulated \ac{CDF} and the asymptotic \ac{CDF} of the maximum received \ac{SNR} in a \ac{DF} relaying system.\footnote{The simulations consider the system model in \cite{subhash2021cooperative} and considers energy harvesting relays in a Rayleigh fading environment.} Observe that the asymptotic results are in good agreement with the exact CDF and hence can be used for simplifying the analysis of relaying systems.

\begin{figure}
    \centering
\includegraphics[scale=0.6]{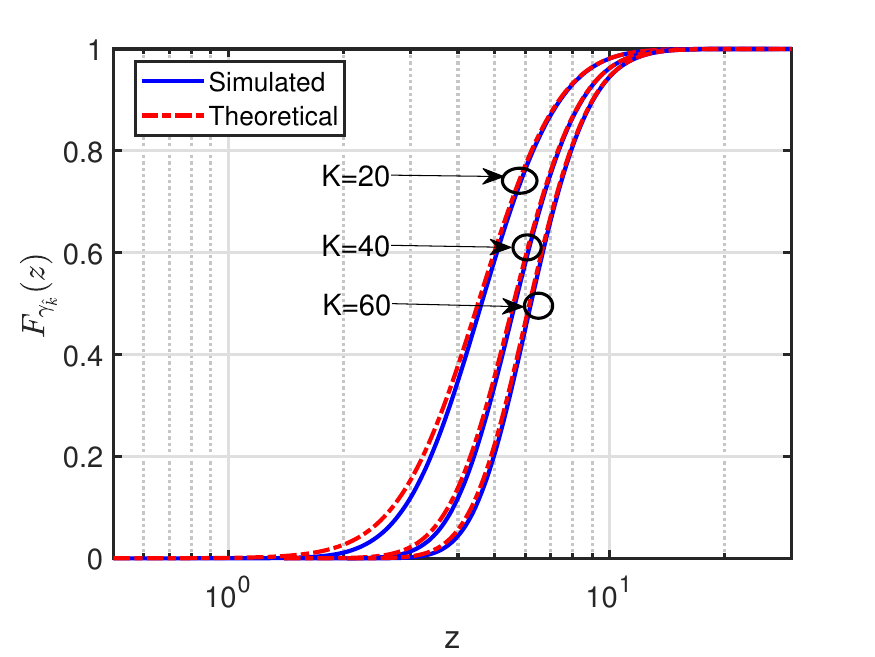}
    \caption{Simulated and theoretical CDF of $\gamma_{\hat{k}}=\max \left(\gamma_1,\cdots,\gamma_K \right)$}
    \label{relay_result}
\end{figure}
\subsubsection*{State of the Art}
\begin{table*}[b]
\centering
\caption{Applications of EVT in relay networks}\label{Table: relay}
 \resizebox{\textwidth}{!}{%
 \begin{tabular}{|l|l|l|l|l|l|}
 \hline
    Ref & Application~ & \ac{SNR}/\ac{SINR} &  Min/Max   & Asymptotic Distribution  \\
   \hline
    \cite{oyman2007opportunism} & Opportunistic Scheduling with relays       & Channel gain          & Maximum   & Gumbel                   \\
    \cite{oyman2008power} & Multi-hop relays       & Channel gain           & Maximum   & Gumbel                   \\
    \cite{biswas2015performance}  & millimeter wave with relays       & \ac{SNR}              & Maximum   & Gumbel                   \\
     \cite{liu2019cooperative} & Energy harvesting based relays       & \ac{SNR}         & Maximum   & Gumbel                   \\  
     \cite{xue2010mutual} & Relays       & Channel gain           & Maximum   & Gumbel                   \\
      \cite{xu2016cooperative} & Relays       & \ac{SNR}           & Maximum   & Gumbel                   \\
     
       \cite{xue2015performance} & Relays       & Channel gain           & Maximum   & Gumbel                   \\
    \hline
 \end{tabular}%
 }
\end{table*}
The tail of the received signal distribution will decide the asymptotic distribution of the received signal power. For example, Xia {\it et al.} \cite{xia2013spectrum
} study the limiting distribution functions of the lower and upper
bounds on the \ac{E2E} \ac{SNR} of a relaying path using \ac{EVT}. The upper and the lower bounds of \ac{SNR} are shown to follow the Gumbel and the Weibull distribution, respectively. \ac{EVT} is used to derive the 
scaling law of the system
throughput in a hybrid energy-harvesting aided relay system by Liu {\it et al.}  \cite{liu2019cooperative}. The maximum \ac{E2E} \ac{SNR} is proved to converge to the Gumbel distribution. This result is extended to the case of \ac{i.n.i.d.} relay links by Subhash {\it et al.} \cite{subhash2021cooperative}. Here also, the limiting distribution is demonstrated to be the Gumbel distribution for various special cases of the system model. Again, in contrast to the case of \ac{i.i.d.} \ac{RVs}, the limiting distribution need not be one of the three extreme value distributions for the most general case in this scenario. 

Oyman {\it et al.} \cite{oyman2007opportunism} analyze the spectral efficiency of opportunistic scheduling algorithms for fading multi-user relay channels. Using \ac{EVT}, the asymptotic channel power and the maximum spectral efficiency are proved to follow the Gumbel distribution. Using this result, Oyman {\it et al.}  \cite{oyman2008power} characterized the power-bandwidth efficiency tradeoff in multi-hop relaying systems.
The coverage probability in millimeter wave networks using relays is studied by Biswas {\it et al.} \cite{biswas2015performance}. Here, the authors use results from \ac{EVT} to derive the statistics of the maximum \ac{E2E} \ac{SNR}, which is in turn used for deriving a simple expression for the coverage probability. The asymptotic performance of a wireless-powered communication network following the $k$-th best device selection strategy is studied by Dimitropoulou {\it et al.} \cite{dimitropoulou2020k}. The best relay might not always be available for communication, and in many systems, it makes sense to assume that the $k$-th best relay node is used for communication. \ac{EVT} allows a convenient characterization of the $k$-th order statistics once the maximum/minimum order statistics have been established. \textcolor{black}{Table \ref{Table: relay}  highlights the diverse ways \ac{EVT} contributes to strengthening the robustness and efficiency of relay network operations}.

\subsubsection*{Summary/Takeaway}
All of the above studies demonstrate how \ac{EVT} has aided the analysis of various relay-based communication systems. Note that \ac{EVT} has been used in very different system models for different \ac{QoS} metrics. This explains the generality of the results from \ac{EVT} and how they can be applied in various applications. Furthermore, the asymptotic results established by using \ac{EVT} have very simple forms in all of the above treatises. They can be easily used for deriving various scaling laws as well to impose simple \ac{QoS} constraints for resource allocation problems. 

\subsection{Massive \ac{MIMO} and Cell-Free Massive \ac{MIMO}}
   \ac{MIMO} systems rely on multiple antennas at the \ac{BS} as well as \ac{MS} and exploit the signal's multipath propagation. As multiple antennas are available for communication, more than one signal can be transmitted simultaneously through the \ac{MIMO} system, increasing the system's data rate or sum rate. With the advent of \ac{MIMO}  technology, the sum-rate is boosted in the wireless communication system and \ac{EVT} may be used for analyzing it and its scaling laws.    
\begin{figure*} 
\centering
    \tikzset{every picture/.style={line width=0.75pt}} 
    \begin{tikzpicture}[x=0.75pt,y=0.75pt,yscale=-1,xscale=1]
    
    \draw [color={rgb, 255:red, 0; green, 0; blue, 0 }  ,draw opacity=1 ][line width=1.5]    (159.79,145.11) -- (117,238.97) ;
    \draw [color={rgb, 255:red, 0; green, 0; blue, 0 }  ,draw opacity=1 ][line width=1.5]    (159.79,145.11) -- (204.26,238.97) ;
    \draw [color={rgb, 255:red, 0; green, 0; blue, 0 }  ,draw opacity=1 ][line width=1.5]    (159.79,145.11) -- (162.31,248) ;
    \draw  [color={rgb, 255:red, 0; green, 0; blue, 0 }  ,draw opacity=1 ][line width=1.5]  (160.63,175.8) -- (174.9,180.61) -- (160.63,185.43) -- (146.37,180.61) -- cycle ;
    \draw  [color={rgb, 255:red, 0; green, 0; blue, 0 }  ,draw opacity=1 ][line width=1.5]  (160.21,190.24) -- (180.77,195.65) -- (160.21,201.07) -- (139.65,195.65) -- cycle ;
    \draw  [color={rgb, 255:red, 0; green, 0; blue, 0 }  ,draw opacity=1 ][line width=1.5]  (160.16,206.48) -- (187.99,211) -- (160.16,215.51) -- (132.33,211) -- cycle ;
    \draw  [color={rgb, 255:red, 0; green, 0; blue, 0 }  ,draw opacity=1 ][line width=1.5]  (161.68,223.33) -- (195.5,226.67) -- (160.68,230) -- (126.86,226.67) -- cycle ;
    \draw  [color={rgb, 255:red, 0; green, 0; blue, 0 }  ,draw opacity=1 ][fill={rgb, 255:red, 0; green, 0; blue, 0 }  ,fill opacity=1 ][line width=1.5]  (154.97,146.42) .. controls (154.97,144.6) and (157.13,143.11) .. (159.79,143.11) .. controls (162.46,143.11) and (164.62,144.6) .. (164.62,146.42) .. controls (164.62,148.25) and (162.46,149.73) .. (159.79,149.73) .. controls (157.13,149.73) and (154.97,148.25) .. (154.97,146.42) -- cycle ;
    \draw [color={rgb, 255:red, 0; green, 0; blue, 0 }  ,draw opacity=1 ][line width=1.5]    (159.79,154.11) -- (160.44,102.6) ;
    \draw [color={rgb, 255:red, 0; green, 0; blue, 0 }  ,draw opacity=1 ][line width=1.5]    (131.33,118.14) -- (188.04,88.25) ;
    \draw [color={rgb, 255:red, 0; green, 0; blue, 0 }  ,draw opacity=1 ][line width=1.5]    (188.04,88.25) -- (187.02,76.05) ;
    \draw [color={rgb, 255:red, 0; green, 0; blue, 0 }  ,draw opacity=1 ][line width=1.5]    (131.33,118.14) -- (130.31,105.94) ;
    \draw  [color={rgb, 255:red, 0; green, 0; blue, 0 }  ,draw opacity=1 ][line width=1.5]  (130.31,105.94) -- (124.54,95.02) -- (133.56,94.55) -- cycle ;
    \draw  [color={rgb, 255:red, 0; green, 0; blue, 0 }  ,draw opacity=1 ][line width=1.5]  (150.31,93.94) -- (144.54,83.02) -- (153.56,82.55) -- cycle ;
    \draw  [color={rgb, 255:red, 0; green, 0; blue, 0 }  ,draw opacity=1 ][line width=1.5]  (187.02,76.05) -- (181.25,65.13) -- (190.27,64.66) -- cycle ;
    \draw  [color={rgb, 255:red, 0; green, 0; blue, 0 }  ,draw opacity=1 ][fill={rgb, 255:red, 255; green, 255; blue, 255 }  ,fill opacity=1 ][line width=1.5]  (359.27,87.23) .. controls (359.27,85.45) and (360.71,84) .. (362.5,84) -- (372.2,84) .. controls (373.99,84) and (375.43,85.45) .. (375.43,87.23) -- (375.43,112.23) .. controls (375.43,114.02) and (373.99,115.47) .. (372.2,115.47) -- (362.5,115.47) .. controls (360.71,115.47) and (359.27,114.02) .. (359.27,112.23) -- cycle ;
    \draw [color={rgb, 255:red, 0; green, 0; blue, 0 }  ,draw opacity=1 ][line width=1.5]    (378.85,55.6) ;
    \draw [color={rgb, 255:red, 0; green, 0; blue, 0 }  ,draw opacity=1 ][line width=1.5]    (348.2,71.11) -- (378.85,55.6) ;
    \draw [color={rgb, 255:red, 0; green, 0; blue, 0 }  ,draw opacity=1 ][line width=1.5]    (348.2,71.11) -- (347.96,58.91) ;
    \draw [color={rgb, 255:red, 0; green, 0; blue, 0 }  ,draw opacity=1 ][line width=1.5]    (362.5,84) -- (361.34,64.23) ;
    \draw [color={rgb, 255:red, 0; green, 0; blue, 0 }  ,draw opacity=1 ][line width=1.5]    (378.85,55.6) -- (378.62,43.4) ;
    \draw  [color={rgb, 255:red, 0; green, 0; blue, 0 }  ,draw opacity=1 ][line width=1.5]  (347.96,58.91) -- (342.8,47.98) -- (352.02,47.52) -- cycle ;
    \draw  [color={rgb, 255:red, 0; green, 0; blue, 0 }  ,draw opacity=1 ][line width=1.5]  (361.1,52.02) -- (355.94,41.09) -- (365.16,40.64) -- cycle ;
    \draw  [color={rgb, 255:red, 0; green, 0; blue, 0 }  ,draw opacity=1 ][line width=1.5]  (378.62,43.4) -- (373.46,32.47) -- (382.67,32.02) -- cycle ;
    \draw [color={rgb, 255:red, 0; green, 0; blue, 0 }  ,draw opacity=1 ][line width=1.5]    (361.34,64.23) -- (361.1,52.02) ;
    \draw    (363.5,75) -- (385.5,75) ;
    
    \draw [color={rgb, 255:red, 0; green, 0; blue, 0 }  ,draw opacity=1 ][line width=1.5]    (151.33,106.14) -- (150.31,93.94) ;
    \draw    (139,135) -- (161,135) ;
    \draw  [color={rgb, 255:red, 0; green, 0; blue, 0 }  ,draw opacity=1 ][fill={rgb, 255:red, 255; green, 255; blue, 255 }  ,fill opacity=1 ][line width=1.5]  (358.77,233.23) .. controls (358.77,231.45) and (360.21,230) .. (362,230) -- (371.7,230) .. controls (373.49,230) and (374.93,231.45) .. (374.93,233.23) -- (374.93,258.23) .. controls (374.93,260.02) and (373.49,261.47) .. (371.7,261.47) -- (362,261.47) .. controls (360.21,261.47) and (358.77,260.02) .. (358.77,258.23) -- cycle ;
    \draw [color={rgb, 255:red, 0; green, 0; blue, 0 }  ,draw opacity=1 ][line width=1.5]    (378.35,201.6) ;
    \draw [color={rgb, 255:red, 0; green, 0; blue, 0 }  ,draw opacity=1 ][line width=1.5]    (347.7,217.11) -- (378.35,201.6) ;
    \draw [color={rgb, 255:red, 0; green, 0; blue, 0 }  ,draw opacity=1 ][line width=1.5]    (347.7,217.11) -- (347.46,204.91) ;
    \draw [color={rgb, 255:red, 0; green, 0; blue, 0 }  ,draw opacity=1 ][line width=1.5]    (362,230) -- (360.84,210.23) ;
    \draw [color={rgb, 255:red, 0; green, 0; blue, 0 }  ,draw opacity=1 ][line width=1.5]    (378.35,201.6) -- (378.12,189.4) ;
    \draw  [color={rgb, 255:red, 0; green, 0; blue, 0 }  ,draw opacity=1 ][line width=1.5]  (347.46,204.91) -- (342.3,193.98) -- (351.52,193.52) -- cycle ;
    \draw  [color={rgb, 255:red, 0; green, 0; blue, 0 }  ,draw opacity=1 ][line width=1.5]  (360.6,198.02) -- (355.44,187.09) -- (364.66,186.64) -- cycle ;
    \draw  [color={rgb, 255:red, 0; green, 0; blue, 0 }  ,draw opacity=1 ][line width=1.5]  (378.12,189.4) -- (372.96,178.47) -- (382.17,178.02) -- cycle ;
    \draw [color={rgb, 255:red, 0; green, 0; blue, 0 }  ,draw opacity=1 ][line width=1.5]    (360.84,210.23) -- (360.6,198.02) ;
    \draw    (363,221) -- (385,221) ;
    
    \draw  [dash pattern={on 4.5pt off 4.5pt}] (328,15) -- (492,15) -- (492,293) -- (328,293) -- cycle ;
    \draw    (415.5,292.75) -- (415.5,321.25) ;
    \draw    (121,321.25) -- (120.52,261.75) ;
    \draw [shift={(120.5,259.75)}, rotate = 89.53] [color={rgb, 255:red, 0; green, 0; blue, 0 }  ][line width=0.75]    (10.93,-3.29) .. controls (6.95,-1.4) and (3.31,-0.3) .. (0,0) .. controls (3.31,0.3) and (6.95,1.4) .. (10.93,3.29)   ;
    \draw  [dash pattern={on 4.5pt off 4.5pt}] (33.5,51.25) -- (219,51.25) -- (219,259.75) -- (33.5,259.75) -- cycle ;
    \draw    (134.5,103.25) -- (332.66,188.96) ;
    \draw [shift={(334.5,189.75)}, rotate = 203.39] [color={rgb, 255:red, 0; green, 0; blue, 0 }  ][line width=0.75]    (10.93,-3.29) .. controls (6.95,-1.4) and (3.31,-0.3) .. (0,0) .. controls (3.31,0.3) and (6.95,1.4) .. (10.93,3.29)   ;
    \draw    (157,88.75) -- (332.76,188.76) ;
    \draw [shift={(334.5,189.75)}, rotate = 209.64] [color={rgb, 255:red, 0; green, 0; blue, 0 }  ][line width=0.75]    (10.93,-3.29) .. controls (6.95,-1.4) and (3.31,-0.3) .. (0,0) .. controls (3.31,0.3) and (6.95,1.4) .. (10.93,3.29)   ;
    \draw    (190.5,73.75) -- (332.94,188.5) ;
    \draw [shift={(334.5,189.75)}, rotate = 218.85] [color={rgb, 255:red, 0; green, 0; blue, 0 }  ][line width=0.75]    (10.93,-3.29) .. controls (6.95,-1.4) and (3.31,-0.3) .. (0,0) .. controls (3.31,0.3) and (6.95,1.4) .. (10.93,3.29)   ;
    \draw    (134.5,103.25) -- (332.57,47.79) ;
    \draw [shift={(334.5,47.25)}, rotate = 164.36] [color={rgb, 255:red, 0; green, 0; blue, 0 }  ][line width=0.75]    (10.93,-3.29) .. controls (6.95,-1.4) and (3.31,-0.3) .. (0,0) .. controls (3.31,0.3) and (6.95,1.4) .. (10.93,3.29)   ;
    \draw    (157,88.75) -- (332.55,47.71) ;
    \draw [shift={(334.5,47.25)}, rotate = 166.84] [color={rgb, 255:red, 0; green, 0; blue, 0 }  ][line width=0.75]    (10.93,-3.29) .. controls (6.95,-1.4) and (3.31,-0.3) .. (0,0) .. controls (3.31,0.3) and (6.95,1.4) .. (10.93,3.29)   ;
    \draw    (190.5,73.75) -- (332.53,47.61) ;
    \draw [shift={(334.5,47.25)}, rotate = 169.57] [color={rgb, 255:red, 0; green, 0; blue, 0 }  ][line width=0.75]    (10.93,-3.29) .. controls (6.95,-1.4) and (3.31,-0.3) .. (0,0) .. controls (3.31,0.3) and (6.95,1.4) .. (10.93,3.29)   ;
    \draw  [dash pattern={on 0.84pt off 2.51pt}] (233,70.25) .. controls (233,63.07) and (235.35,57.25) .. (238.25,57.25) .. controls (241.15,57.25) and (243.5,63.07) .. (243.5,70.25) .. controls (243.5,77.43) and (241.15,83.25) .. (238.25,83.25) .. controls (235.35,83.25) and (233,77.43) .. (233,70.25) -- cycle ;
    \draw  [dash pattern={on 0.84pt off 2.51pt}] (293,165.25) .. controls (293,158.07) and (295.35,152.25) .. (298.25,152.25) .. controls (301.15,152.25) and (303.5,158.07) .. (303.5,165.25) .. controls (303.5,172.43) and (301.15,178.25) .. (298.25,178.25) .. controls (295.35,178.25) and (293,172.43) .. (293,165.25) -- cycle ;
    \draw    (121,321.25) -- (415.5,321.25) ;

    \draw    (43,124) -- (138,124) -- (138,149) -- (43,149) -- cycle  ;
    \draw (44,125) node [anchor=north west][inner sep=0.75pt]   [align=left] {Beam former};
    \draw (182,238) node [anchor=north west][inner sep=0.75pt]   [align=left] {\textbf{{\scriptsize BS}}};
    \draw (376,252.5) node [anchor=north west][inner sep=0.75pt]   [align=left] {{\scriptsize \textbf{MS-K}}};
    \draw (380.5,101.5) node [anchor=north west][inner sep=0.75pt]   [align=left] {{\scriptsize \textbf{MS-1}}};
    \draw (232.5,298) node [anchor=north west][inner sep=0.75pt]   [align=left] {CSI};
    \draw    (386,209) -- (466,209) -- (466,234) -- (386,234) -- cycle  ;
    \draw (387,210) node [anchor=north west][inner sep=0.75pt]   [align=left] {Combining};
    \draw    (386.5,63) -- (466.5,63) -- (466.5,88) -- (386.5,88) -- cycle  ;
    \draw (387.5,64) node [anchor=north west][inner sep=0.75pt]   [align=left] {Combining};
    \end{tikzpicture}
\caption{MIMO}
\label{fig:mimo}
\end{figure*}
\subsubsection*{System Description}
    Let us consider a downlink system model having $M$ transmit antennas at the \ac{BS} and $K$ \ac{MS}s. The \ac{MS}'s can have either a single antenna or $N$ antennas depending upon whether a \ac{MISO} or \ac{MIMO} system is considered. Let $h_{k}$ represent the channel coefficient vector between the \ac{BS} and $k$-th \ac{MS} and $s$ denotes the signal vector for transmission. Given a beamforming vector set $\left \{ \mathbf{w}_{m}:m=1,., M \right \}$, the transmitted signal $x$ can be represented by $x=\mathbf{w}_{m}^{T}s$. In addition, we denote the \ac{SNR}/\ac{SINR}  at the $k^{th}$ \ac{MS} as $\gamma_{k}$. One can select the link having the highest \ac{SNR}/\ac{SINR} for multi-user scheduling. For identifying the best beamformer, select $w_{m}$ for which $\gamma_{k}$ is the highest \ac{SNR}/\ac{SINR}.
\subsubsection*{Advantages of \ac{EVT}}
    As \ac{MIMO} systems can have many transmit and receive antennas, the feedback of full \ac{CSI} at the transmitter is complicated. Partial \ac{CSI}, like the best \ac{SNR}/\ac{SINR} link information, strikes an attractive trade-off. Linear combining techniques like \ac{SC}/\ac{MRC}/\ac{OC} will be employed at the receiver to improve the effective \ac{SINR} at the \ac{MS}. \ac{EVT} will help derive and analyze metrics such as the average sum rate, ergodic sum rate, and scaling laws. As shown by Lee {\it et al.} \cite{lee2012zero}, finding the distribution of \ac{SNR} using \ac{ZFBF} under a per antenna power constraint involves a complicated joint distribution of diagonal elements. The usage of \ac{EVT} helped convert such a complicated problem to the simple problem of finding the minimum of chi-square \ac{RVs}. \ac{RBF}, with other-cell interference, involves a complicated \ac{SINR} expression, as shown in \cite{{moon2011sum}}. Since finding the exact sum-rate expression would be difficult, the asymptotic expression can be derived with the help of extreme order statistics.  
\subsubsection*{Applicability Illustration}
    As \ac{MIMO} schemes involves more than one transmit and receive antenna, linear combining techniques will generally be applied at the receiver.
    Pun {\it et al.} \cite{pun2010performance} explored \ac{MIMO} schemes with optimum combining, where the effective \ac{SIR} is considered as a Pareto distribution. The limiting distribution of the maximum \ac{SIR} is proved to be the Frechet type.
    Additionally, different beamforming techniques such as \ac{ZFBF} and \ac{RBF} are often used in \ac{MIMO}. \ac{RBF} with inter-cell interference \cite{{moon2011sum}} involve complicated \ac{SINR} expressions relying on the chi-square and gamma \ac{RVs} ratio. In the case of massive \ac{MIMO} \cite{gao2017massive}, many antennas are expected at the transmitter and receiver. In such a scenario, antenna selection involves the maximum and the top $k$ maxima \ac{RVs}. Hence, finding the exact distribution of the top $k$ \ac{RVs} would be difficult; instead, one can carry out an asymptotic analysis using extreme order statistics for sub-array/full-array antenna selection. In a multicasting scenario, knowing the minimum rate at which information needs to be transmitted is required. Hence, extreme order statistics are quite helpful in this context. The chi-square RV with $n$ degrees of freedom is more often encountered in \cite{park2010capacity},  \cite{park2008capacity},  \cite{park2009outage}.
     \begin{table*}[b]
    \centering
    \caption{Applications of \ac{EVT} in \ac{MIMO} systems. }\label{Table: MIMO}
     \resizebox{\textwidth}{!}{%
     \begin{tabular}{|l|l|l|l|l|l|}
         \hline
           & Application~ & SNR/SINR &  Min/Max   & Asymptotic Distribution  \\
           \hline
          \cite{pun2010performance}         & MIMO-SDMA with Linear combining          & SINR,SIR           & Maximum     & Gumbel,Frechet       
               \\ 
          \cite{moon2011sum}  & MISO with RBF      & SINR            & Maximum  & Gumbel                 
          \\
          \cite{lee2012zero}  & MISO with ZFBF      & SNR           & Minimum  & Weibull                
          \\  
          \cite{huang2013random}  & MISO with RBF      & Channel gain            & Maximum  & Gumbel    
          \\
          \cite{kazemi2020analysis}   & Massive-MIMO      & E2E-SNDR           & Maximum   & Weibull, Gumbel                
          \\
          \cite{park2010capacity}  & Multicasting (MIMO)     & SNR             & Minimum   & Weibull   
          \\
          \cite{gao2017massive}  & Massive-MIMO      & Channel gain            & Maximum   & Gumbel 
           \\
          \cite{park2008capacity}  & Multicasting (MIMO)     & SNR           & Minimum   & Weibull
          \\
           \cite{park2009outage}  & Multicasting (MIMO)      & SNR             & Minimum   & Weibull
          \\
         \hline 
     \end{tabular}%
     }
   \end{table*}
\subsubsection*{State of the Art}
    MIMO systems having many transmit and receive antennas, getting full \ac{CSI}, to improve the sum-rate may not be feasible. Instead, partial side information like the best \ac{SINR} link of all links is very much useful in this context. Partial feedback like the max \ac{SINR} link in capacity analysis is used by Sharif {\it et al.} \cite{sharif2005capacity}. In \cite{pun2010performance}, \ac{EVT} is used to derive the average sum rates and their scaling laws for joint opportunistic scheduling and receiver design for \ac{MIMO}-\ac{SDMA} systems. A \ac{MIMO} system consisting of a \ac{BS} with $M$ transmit antennas and $K$ \ac{MS}s with $N$ antennas each is considered. Hence, $M$ orthonormal beamforming vectors are considered at the \ac{BS}, and the best beamformer will be selected in a particular time slot. Linear rake receivers are studied along with different linear combining techniques (\ac{SC}, \ac{MRC}, \ac{OC}) at the MS. In each time slot, the \ac{MS} will calculate each beam's effective \ac{SINR} and select the highest effective \ac{SINR} beam. The limiting distribution for the effective \ac{SIR} is derived and shown to be a Frechet-type distribution, while the limiting distribution for the effective \ac{SINR} is shown to be of the Gumbel type.
    \par Moon {\it et al.} \cite{moon2011sum} considered the \ac{MISO} downlink in the presence of other-cell interference. A \ac{BS} with $M$ transmit antennas and $K$ \ac{MS}s with a single antenna each are considered. \ac{RBF} is used at BS with $M_s(\leq M)$ number of orthonormal beamforming vectors. All the \ac{MS}s will calculate the \ac{SINR} for each beam and select the specific beam with the highest \ac{SINR}. The concept of maximum order statistics is used to derive the asymptotic closed-form expression for the ergodic sum rate assuming a large number of users. The authors have proved that the asymptotic sum rate follows the Gumbel distribution. Lee {\it et al.} \cite{lee2012zero} considered the multi-user \ac{MISO} downlink with \ac{ZFBF} technique and also considered the per antenna power constraint. A \ac{MISO} system consisting of a \ac{BS} having $M$ transmit antennas and $K$ \ac{MS}s, each with single antennas, is considered.
    First, they have proved that the received \ac{SNR} in this context is characterized as the minimum of  $M$ chi-square \ac{RVs} with parameter $M-K+1$. \ac{EVT} is used for finding the asymptotic sum rate of multi-user-\ac{MIMO} systems with \ac{ZFBF}. The limiting distribution of received \ac{SNR} is found to be of Weibull distribution.
    \par In \cite{huang2013random}, the \ac{MISO} system with random beamforming and selective feedback is analyzed by selecting the best beam from the feedback of best-L resource blocks. \ac{EVT} is used to examine multi-user diversity's randomness for the selective feedback considered. Gao {\it et al.} \cite{gao2017massive} used \ac{EVT} to derive the asymptotic upper capacity bounds for a massive \ac{MIMO} system under the assumption of \ac{i.i.d.} Rayleigh flat fading for both full-array and sub-array systems. In\cite{kazemi2020analysis}, \ac{EVT} is exploited to derive the average \ac{SNDR} of a massive \ac{MIMO} system with relays. The authors have considered the following two scenarios: a massive number of antennas at the source and a single antenna at the destination, and a massive number of antennas at both the source and destination. As the number of relays increases, the \ac{E2E} \ac{SNDR} of the system converges to Gumbel and Weibull distributions, respectively.
   
    Multicasting in wireless communications refers to simultaneously transmitting information to multiple users. The information rate for multicasting transmission will be the lowest rate among all users. In \cite{park2008capacity}, Park {\it et al.} studied the capacity limits of multicast channels in multi-antenna systems with the help of \ac{EVT}. Here the authors selected some subset of base station antennas that maximizes the minimum \ac{SNR} among the users.
    In \cite{park2009outage}, Park {\it et al.} studied the outage performance of the multi-antenna system using \ac{EVT}. They derived closed-form expressions for the limiting distribution of multicast channels assuming a large number of users. Furthermore, they derived upper and lower bounds for the outage probability. In \cite{park2010capacity}, Park {\it et al.} studied the performance of multiple antenna multicasting systems for spatially correlated channels.  They also derived the exact and limiting distribution of \ac{SNR} for multicasting systems.  In the case of many users, it is proved that the power should be uniformly allocated to the eigenvectors of the transmit channel covariance with non-zero eigenvalues.  \textcolor{black}{Table \ref{Table: MIMO} showcases a range of applications of \ac{EVT} in \ac{MIMO} systems.}  
\subsubsection*{Summary/Takeaway}
    \ac{MIMO} systems generally have multiple transmit and receive antennas. As we have multiple antennas at the receiver, different combining techniques  \cite{pun2010performance} can be applied for improving the effective \ac{SNR}/\ac{SINR}. Moreover, beam-forming techniques like, \ac{ZFBF}  \cite{lee2012zero} and \ac{RBF}  \cite{moon2011sum}, \cite{huang2013random} can be applied at the transmitter to improve the sum-rate. All these techniques lead to complicated \ac{SNR}/\ac{SINR} expressions. In some situations, deriving the exact expressions are difficult, forcing us to go for asymptotic expressions with the aid of \ac{EVT}. In the case of massive \ac{MIMO} \cite{kazemi2020analysis, gao2017massive} or multicasting scenarios \cite{park2009outage}, \cite{park2010capacity, park2008capacity}, transmit antenna subset selection involves modeling the distribution of top-$k$ \ac{SNR}/\ac{SINR} \ac{RVs}. Although simple Rayleigh fading is considered, we encounter more  \ac{RVs} like chi-square and Pareto due to the specific structure of MIMO schemes. \ac{EVT} is beneficial in deriving the performance metrics like average sum rate, scaling laws, and channel capacity bounds.

\subsection{URLLC Applications}

\begin{figure*}[h]
\center
\includegraphics[scale=0.5]
{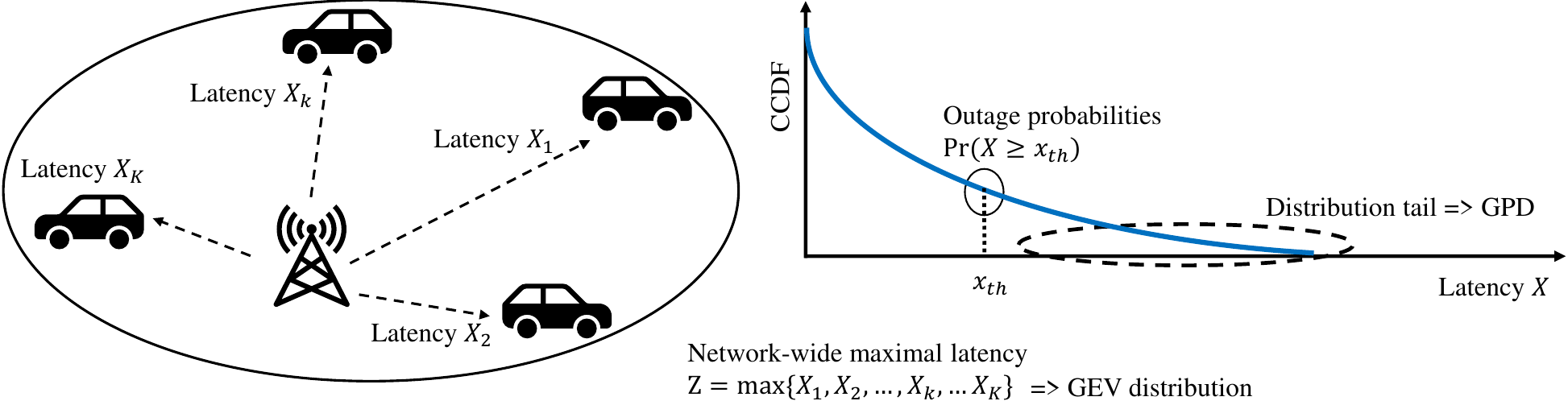}
\caption{URLLC illustration.}
\label{URLLC_model}
\end{figure*}

\begin{figure}[b]
\center
\includegraphics[scale=0.5]{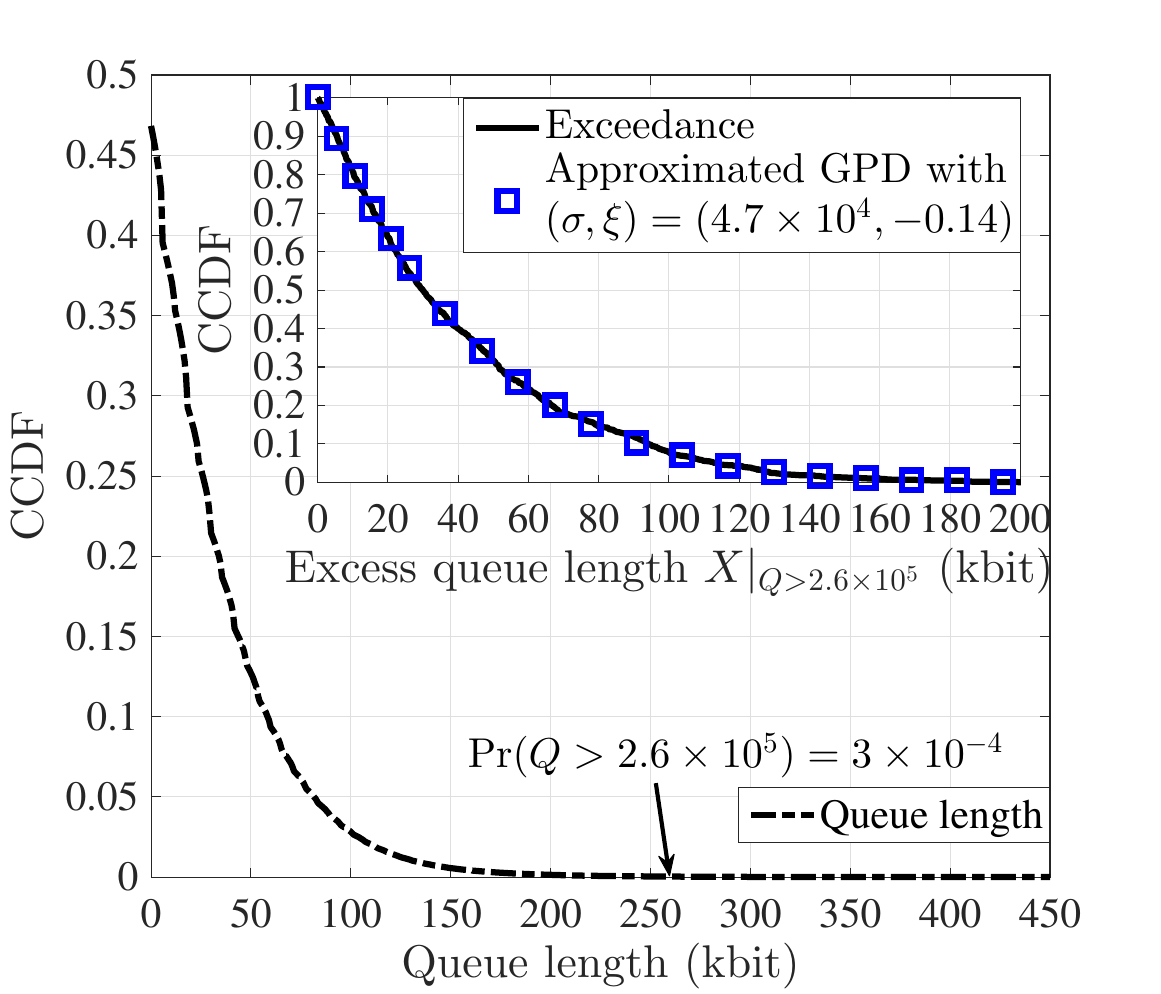}
\caption{Approximation of the delay distribution tail by the GPD \cite{LiuBenPoo17}}
\label{URLLC_model2}
\end{figure}
\begin{figure}
\center
\includegraphics[scale=0.5]
{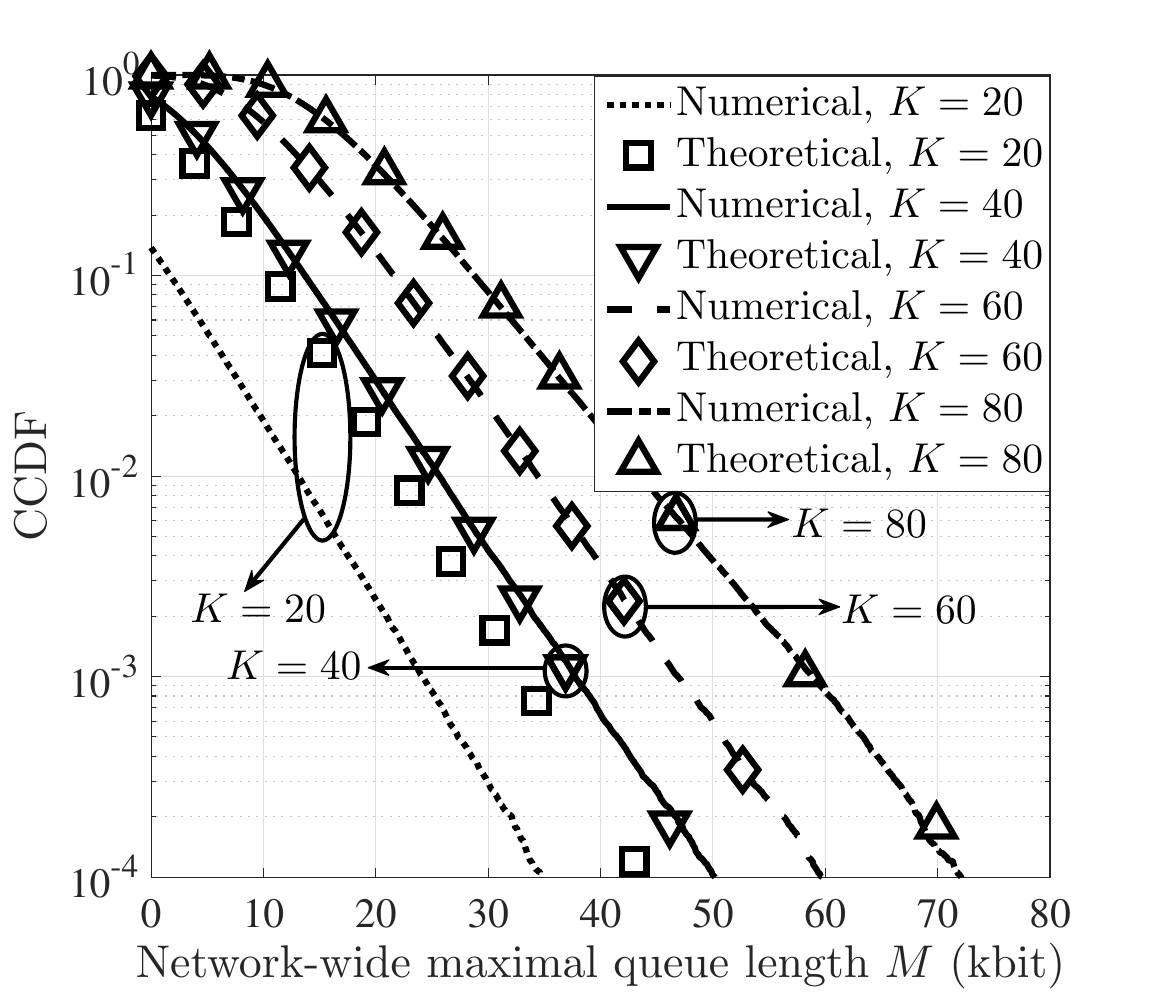}
\caption{Approximation of the maximal delay distribution by the GEV distribution \cite{liu2018ultra}.}
\label{URLLC_model3}
\end{figure}
As one of the pillars in 5G communication, \ac{URLLC} aims for ensuring timely information delivery for mission-critical and latency-sensitive applications, which have rather stringent \ac{E2E} delay requirements. In this regard, the \ac{3GPP} Release 15 specified the \ac{URLLC} requirement for the \ac{E2E} delay (measured at the ingress and egress points between the data link and network layers) of a small  packet, e.g., 32 bytes, as the 1 ms target with the outage probability less than $10^{-5}$ \cite{LiUusShaSin18}.
In addition to specifying the delay outage probability as the URLLC requirement, further analyzing the threshold-violating event, i.e., the delay outage part, and minimizing its effect help remedy the system performance degradation. For example, in immersive services, retaining the threshold-violating delayed data at the head-mounted display than plainly dropping them imposes lower degradation of the user's quality of experience.

The \ac{E2E} latency is also related to the maximal latency and can affect the \ac{AoI}. Its relationship with the maximal latency can be intuitively understood. In real life, the disasters such as car accidents and fire accidents usually occur in worst-case situations. Hence, investigating the performance degradation imposed by the maximal latency in the spatial (i.e., in the network) or temporal (i.e., over a time period) context can provide further insights for \ac{URLLC} system design.
In some applications of \ac{V2V} communication and of the industrial \ac{IoT}, up-to-date environment information is desired at the receiver regarding the safety concerns and production failures. In this situation, maintaining the freshness of the available information at the receiver is crucial for the transmitter even at the expense of using excessive communication resources. Explicitly, more frequent transmissions result in more up-to-date information, but dissipate more energy. We can evaluate the information freshness by the \ac{AoI} metric \cite{KauGruRaiKen11}, which is measured at the receiver and defined as the time elapsed since the data was generated until the current time instant. Briefly, when the receiver obtains the new data, the \ac{AoI} is equivalent to the \ac{E2E} latency. Then, before the next data is successfully received, the \ac{AoI} increases by the storage time. Since the \ac{E2E} latency is incorporated in the \ac{AoI}, the \ac{URLLC} performance directly affects the \ac{AoI}.

\subsubsection*{System Description}
Let us consider a system in which multiple transmitters send their own data to the respective receivers. For a single link, we denote the \ac{E2E} latency of a data packet as a random variable $X$. Then, the aforementioned \ac{URLLC} requirement (defined by \ac{3GPP}) can be mathematically expressed as
\begin{equation}\label{Eq: URLLC outage}
\mathbb{P}(X>x_{\rm th})\leq \epsilon,
\end{equation}
in which $x_{\rm th}$ and $\epsilon$ are the delay threshold and tolerable probability, respectively. Thus, with respect to the threshold $x_{\rm th}$, the threshold deviation\footnote{Threshold deviation refers to the difference between a measured value and a preset threshold.}  value of the \ac{E2E} delay $X$  is given by
\begin{equation}\label{Eq: threshold deviation}
Y|_{X>x_{\rm th}}=X-x_{\rm th}>0,
\end{equation}
whose mean, variance, and higher-order statistics require a dedicated focus as motivated above.
We further denote all transmitter-receiver links as a set $\mathcal{K}$ and index the delay of the $k$th transmitter-receiver link with the subscript $k$, i.e., $X_k$. The maximal latency over the network is accordingly defined as
\begin{equation}\label{Eq: URLLC maximum}
Z=\max _{k \in \mathcal{K}}\left\{X_k\right\}.
\end{equation}

Note that the \ac{AoI} (denoted by a function $A(t),\forall\,t\geq 0$) measures the age of the receiver's latest available information. Let us index the transmitter's sequentially-updated data by $n\in\mathbb{Z}^{+}$. When the successful reception of the $n^{th}$ data occurs at time instant $t^n$, the \ac{AoI} is the \ac{E2E} delay of the latest data, i.e., $A(t^n) = X^n$. Here, we add the superscript $n$ to the \ac{E2E} delay. Then before the next data is delivered, the \ac{AoI} increases linearly with time. The general \ac{AoI} function can be expressed as
\begin{equation}
A(t)=X^n+t-t^n,~\forall\,t\in[t^n,t^{n+1}),n\in\mathbb{Z}^{+}.
\end{equation}
Depending on the transmission and scheduling schemes of the network concerned, the \ac{E2E} delay $X^n$ may include the transmission delay, propagation delay, queuing delay, processing delay, computation delay, and/or other delay sources.

\subsubsection*{Advantages of \ac{EVT}}
The tolerable outage probability $\epsilon = 10^{-5}$ in the \ac{URLLC} regime is rather small, as specified previously. In some use cases, the \ac{URLLC} requirements are even more stringent, e.g., $\epsilon = 10^{-9}$ \cite{Yil16}. Given these values, we are mainly concerned about the distribution tail for the delay outage and threshold deviation. Additionally, the network density is expected to increase drastically. While studying the worst-case metric, we can focus on the asymptotic performance.
However, since the communication network becomes more complex, we may not be able to tractably find the tail distribution of the \ac{E2E} delay $X$ and to further derive the asymptotic distribution of the maximal delay $Z$ from $F_Z( z)=\prod_{k\in\mathcal{K}}F_{X_k}(z)$.
Since \ac{EVT} provides a powerful framework of characterizing the tail behavior of the generic distribution and of analyzing the asymptotic statistics of the maximum and minimum, we can resort to \ac{EVT} to address these dilemmas in the performance analysis and system design of \ac{URLLC}.

\subsubsection*{Applicability Illustration}

To characterize the tail distribution of an arbitrary random variable $X$ in \eqref{Eq: URLLC outage} by \ac{EVT}, we first stipulate a constant value $d\leq x_{\rm th}$ associated with $\mathbb{P}(X> d)= \delta\geq \epsilon$. Then we derive
\begin{align}\label{Eq: URLLC aux 1}
    \mathbb{P}(X>x_{\rm th})
    &=\frac{\mathbb{P}(X>d)\mathbb{P}(X>x_{\rm th},X>d)}{\mathbb{P}(X>d)}\nonumber\\
    &=\delta \mathbb{P}(X>x_{\rm th}|X>d).
\end{align}
Assume that for the selected $d$, the conditional value $Y|_{X>d}=X-d>0$ satisfies the Pickands\allowbreak–Balkema–de Haan theorem \cite{de2006extreme}, i.e.,
\begin{align}\label{Eq: URLLC aux 2}
 \bar{F}_{Y | X>d}(y)=\mathbb{P}(X-d>y |X> d)=\bar{G}(y;\sigma,\xi).
\end{align}
Note that the values of the characteristic parameters $\sigma$ and $\xi$ vary with the selected $d$. Thus, by incorporating \eqref{Eq: URLLC aux 1} and \eqref{Eq: URLLC aux 2}, the tail distribution of the delay can be approximated by a \ac{GPD} as follows:
\begin{align}
\mathbb{P}(X>x_{\rm th})\approx \delta \bar{G}(x_{\rm th}-d;\sigma,\xi),
\end{align}
even though the tractable derivation of the delay is not available. 
Moreover, threshold deviation \eqref{Eq: threshold deviation} can also be characterized by the statistics of the \ac{GPD} by setting $d$ as $x_{\rm th}$ in \eqref{Eq: URLLC aux 2}.
Regarding the maximal delay, for simplicity we assume that all links' delay statistics are identical. As the number of the network entities is ever increasing, the maximal delay $Z$ in \eqref{Eq: URLLC maximum} approaches the asymptotic regime, where the maximum can be characterized by a \ac{GEV} distribution by referring to the Fisher–Tippett–Gnedenko theorem \cite{de2006extreme}. 
Finally, the performance analysis and system design of \ac{URLLC} are based on the closed-form expressions of the statistics of the \ac{GPD} and \ac{GEV} distribution or their characteristic parameters.

\subsubsection*{State of the Art}

Mouradian \cite{mouradian2016extreme} studied the excess delays in wireless networks with the aid of \ac{EVT}. Specifically,  considering the block maxima and the peak over the threshold of the inter-beacon delay in vehicular ad hoc networks as a case study, the fitness between the \ac{GEV} distribution/\ac{GPD} and the empirical data samples was investigated.
Mehrnia and Coleri \cite{mehrnia2021wireless} proposed a new methodology for characterizing the extreme events of the wireless channel model in \ac{URLLC} communication based on \ac{EVT}. Given the sequence of received power samples, they will be converted to \ac{i.i.d.} samples by removing the dependency. Lower tail statistics are derived for the \ac{i.i.d.} received samples using the \ac{GPD}. The authors also derived the optimum threshold for characterizing the lower tail statistics and further specified the stopping conditions to find the minimum number of samples required for the tail estimation using \ac{GPD}.
In \cite{mehrnia2022extreme}, \ac{EVT} determines the transmit rate selection for \ac{URLLC}. Assuming that the receiver receives a packet in the presence of an unknown channel, the received power values at the receiver are used to fit the \ac{GPD} distribution to the channel's tail distribution. Using this, the optimum transmission rate is selected, which satisfies the \ac{URLLC} constraints.
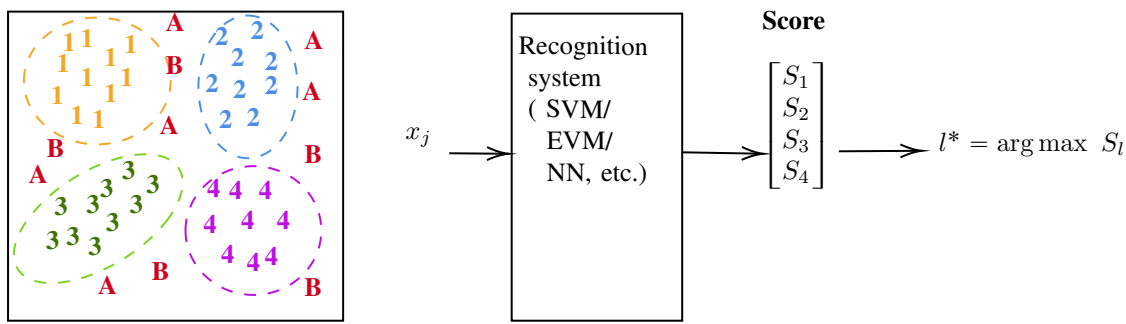
\begin{figure*}
    \centering
        \tikzset{every picture/.style={line width=0.75pt}} 
        \begin{tikzpicture}[x=0.75pt,y=0.75pt,yscale=-1,xscale=1]
        
        \draw   (26,89) -- (195,89) -- (195,245) -- (26,245) -- cycle ;
        \draw   (280,90) -- (366,90) -- (366,245) -- (280,245) -- cycle ;
        \draw    (366,160) -- (405,160.71) ;
        \draw [shift={(407,160.75)}, rotate = 181.05] [color={rgb, 255:red, 0; green, 0; blue, 0 }  ][line width=0.75]    (10.93,-3.29) .. controls (6.95,-1.4) and (3.31,-0.3) .. (0,0) .. controls (3.31,0.3) and (6.95,1.4) .. (10.93,3.29)   ;
        \draw    (445,159.5) -- (485,159.26) ;
        \draw [shift={(487,159.25)}, rotate = 179.66] [color={rgb, 255:red, 0; green, 0; blue, 0 }  ][line width=0.75]    (10.93,-3.29) .. controls (6.95,-1.4) and (3.31,-0.3) .. (0,0) .. controls (3.31,0.3) and (6.95,1.4) .. (10.93,3.29)   ;
        \draw    (248.5,160.5) -- (276.5,160.73) ;
        \draw [shift={(278.5,160.75)}, rotate = 180.48] [color={rgb, 255:red, 0; green, 0; blue, 0 }  ][line width=0.75]    (10.93,-3.29) .. controls (6.95,-1.4) and (3.31,-0.3) .. (0,0) .. controls (3.31,0.3) and (6.95,1.4) .. (10.93,3.29)   ;
        \draw  [color={rgb, 255:red, 74; green, 144; blue, 226 }  ,draw opacity=1 ][dash pattern={on 4.5pt off 4.5pt}] (147,91) .. controls (160.81,91) and (172,107.12) .. (172,127) .. controls (172,146.88) and (160.81,163) .. (147,163) .. controls (133.19,163) and (122,146.88) .. (122,127) .. controls (122,107.12) and (133.19,91) .. (147,91) -- cycle ;
        \draw  [color={rgb, 255:red, 245; green, 166; blue, 35 }  ,draw opacity=1 ][dash pattern={on 4.5pt off 4.5pt}] (34,124) .. controls (34,106.33) and (50.57,92) .. (71,92) .. controls (91.43,92) and (108,106.33) .. (108,124) .. controls (108,141.67) and (91.43,156) .. (71,156) .. controls (50.57,156) and (34,141.67) .. (34,124) -- cycle ;
        \draw  [color={rgb, 255:red, 126; green, 211; blue, 33 }  ,draw opacity=1 ][dash pattern={on 4.5pt off 4.5pt}] (106.8,193.5) .. controls (93.65,211.45) and (67.2,226) .. (47.7,226) .. controls (28.21,226) and (23.06,211.45) .. (36.2,193.5) .. controls (49.35,175.55) and (75.8,161) .. (95.3,161) .. controls (114.79,161) and (119.94,175.55) .. (106.8,193.5) -- cycle ;
        \draw  [color={rgb, 255:red, 189; green, 16; blue, 224 }  ,draw opacity=1 ][dash pattern={on 4.5pt off 4.5pt}] (115.5,199.5) .. controls (115.5,181.55) and (130.83,167) .. (149.75,167) .. controls (168.67,167) and (184,181.55) .. (184,199.5) .. controls (184,217.45) and (168.67,232) .. (149.75,232) .. controls (130.83,232) and (115.5,217.45) .. (115.5,199.5) -- cycle ;
        
        \draw (282,99.74) node [anchor=north west][inner sep=0.75pt]  [rotate=-0.34,xslant=-0.01] [align=left] {Recognition \\ \ system\\ \ ( SVM/\\ \ \ \ EVM/\\ \ \ \ NN, etc.)};
        \draw (405,113) node [anchor=north west][inner sep=0.75pt]   [align=left] {$\displaystyle \begin{bmatrix}
        S_{1}\\
        S_{2}\\
        S_{3}\\
        S_{4}
        \end{bmatrix}$};
        \draw (494,148) node [anchor=north west][inner sep=0.75pt]    {$l^{*} =\arg\max\ S_{l} \ $};
        \draw (225,146) node [anchor=north west][inner sep=0.75pt]    {$x_{j}$};
        \draw (49,109) node [anchor=north west][inner sep=0.75pt]   [align=left] {\textbf{\textcolor[rgb]{0.96,0.65,0.14}{1}}};
        \draw (129,95) node [anchor=north west][inner sep=0.75pt]   [align=left] {\textbf{\textcolor[rgb]{0.29,0.56,0.89}{2}}};
        \draw (124,119) node [anchor=north west][inner sep=0.75pt]   [align=left] {\textbf{\textcolor[rgb]{0.29,0.56,0.89}{2}}};
        \draw (154,120) node [anchor=north west][inner sep=0.75pt]   [align=left] {\textbf{\textcolor[rgb]{0.29,0.56,0.89}{2}}};
        \draw (145,136) node [anchor=north west][inner sep=0.75pt]   [align=left] {\textbf{\textcolor[rgb]{0.29,0.56,0.89}{2}}};
        \draw (131,137) node [anchor=north west][inner sep=0.75pt]   [align=left] {\textbf{\textcolor[rgb]{0.29,0.56,0.89}{2}}};
        \draw (137,106) node [anchor=north west][inner sep=0.75pt]   [align=left] {\textbf{\textcolor[rgb]{0.29,0.56,0.89}{2}}};
        \draw (154,108) node [anchor=north west][inner sep=0.75pt]   [align=left] {\textbf{\textcolor[rgb]{0.29,0.56,0.89}{2}}};
        \draw (147,92) node [anchor=north west][inner sep=0.75pt]   [align=left] {\textbf{\textcolor[rgb]{0.29,0.56,0.89}{2}}};
        \draw (138,122) node [anchor=north west][inner sep=0.75pt]   [align=left] {\textbf{\textcolor[rgb]{0.29,0.56,0.89}{2}}};
        \draw (65,201) node [anchor=north west][inner sep=0.75pt]   [align=left] {\textbf{\textcolor[rgb]{0.25,0.46,0.02}{3}}};
        \draw (43.8,198) node [anchor=north west][inner sep=0.75pt]   [align=left] {\textbf{\textcolor[rgb]{0.25,0.46,0.02}{3}}};
        \draw (71,172) node [anchor=north west][inner sep=0.75pt]   [align=left] {\textbf{\textcolor[rgb]{0.25,0.46,0.02}{3}}};
        \draw (54,196) node [anchor=north west][inner sep=0.75pt]   [align=left] {\textbf{\textcolor[rgb]{0.25,0.46,0.02}{3}}};
        \draw (82,163) node [anchor=north west][inner sep=0.75pt]   [align=left] {\textbf{\textcolor[rgb]{0.25,0.46,0.02}{3}}};
        \draw (74,189) node [anchor=north west][inner sep=0.75pt]   [align=left] {\textbf{\textcolor[rgb]{0.25,0.46,0.02}{3}}};
        \draw (64,180) node [anchor=north west][inner sep=0.75pt]   [align=left] {\textbf{\textcolor[rgb]{0.25,0.46,0.02}{3}}};
        \draw (48,181) node [anchor=north west][inner sep=0.75pt]   [align=left] {\textbf{\textcolor[rgb]{0.25,0.46,0.02}{3}}};
        \draw (86,180) node [anchor=north west][inner sep=0.75pt]   [align=left] {\textbf{\textcolor[rgb]{0.25,0.46,0.02}{3}}};
        \draw (94.37,170) node [anchor=north west][inner sep=0.75pt]   [align=left] {\textbf{\textcolor[rgb]{0.25,0.46,0.02}{3}}};
        \draw (125,172) node [anchor=north west][inner sep=0.75pt]   [align=left] {\textbf{\textcolor[rgb]{0.74,0.06,0.88}{4}}};
        \draw (142,188) node [anchor=north west][inner sep=0.75pt]   [align=left] {\textbf{\textcolor[rgb]{0.74,0.06,0.88}{4}}};
        \draw (136,173) node [anchor=north west][inner sep=0.75pt]   [align=left] {\textbf{\textcolor[rgb]{0.74,0.06,0.88}{4}}};
        \draw (145,209) node [anchor=north west][inner sep=0.75pt]   [align=left] {\textbf{\textcolor[rgb]{0.74,0.06,0.88}{4}}};
        \draw (160,188) node [anchor=north west][inner sep=0.75pt]   [align=left] {\textbf{\textcolor[rgb]{0.74,0.06,0.88}{4}}};
        \draw (154,206) node [anchor=north west][inner sep=0.75pt]   [align=left] {\textbf{\textcolor[rgb]{0.74,0.06,0.88}{4}}};
        \draw (132,205) node [anchor=north west][inner sep=0.75pt]   [align=left] {\textbf{\textcolor[rgb]{0.74,0.06,0.88}{4}}};
        \draw (123,190) node [anchor=north west][inner sep=0.75pt]   [align=left] {\textbf{\textcolor[rgb]{0.74,0.06,0.88}{4}}};
        \draw (151,173) node [anchor=north west][inner sep=0.75pt]   [align=left] {\textbf{\textcolor[rgb]{0.74,0.06,0.88}{4}}};
        \draw (73,124) node [anchor=north west][inner sep=0.75pt]   [align=left] {\textbf{\textcolor[rgb]{0.96,0.65,0.14}{1}}};
        \draw (56,135) node [anchor=north west][inner sep=0.75pt]   [align=left] {\textbf{\textcolor[rgb]{0.96,0.65,0.14}{1}}};
        \draw (52,98) node [anchor=north west][inner sep=0.75pt]   [align=left] {\textbf{\textcolor[rgb]{0.96,0.65,0.14}{1}}};
        \draw (81,116) node [anchor=north west][inner sep=0.75pt]   [align=left] {\textbf{\textcolor[rgb]{0.96,0.65,0.14}{1}}};
        \draw (46,125) node [anchor=north west][inner sep=0.75pt]   [align=left] {\textbf{\textcolor[rgb]{0.96,0.65,0.14}{1}}};
        \draw (61,116) node [anchor=north west][inner sep=0.75pt]   [align=left] {\textbf{\textcolor[rgb]{0.96,0.65,0.14}{1}}};
        \draw (73,106) node [anchor=north west][inner sep=0.75pt]   [align=left] {\textbf{\textcolor[rgb]{0.96,0.65,0.14}{1}}};
        \draw (61,96) node [anchor=north west][inner sep=0.75pt]   [align=left] {\textbf{\textcolor[rgb]{0.96,0.65,0.14}{1}}};
        \draw (67,137) node [anchor=north west][inner sep=0.75pt]   [align=left] {\textbf{\textcolor[rgb]{0.96,0.65,0.14}{1}}};
        \draw (83,99) node [anchor=north west][inner sep=0.75pt]   [align=left] {\textbf{\textcolor[rgb]{0.96,0.65,0.14}{1}}};
        \draw (35,165) node [anchor=north west][inner sep=0.75pt]   [align=left] {\textbf{\textcolor[rgb]{0.82,0.01,0.11}{A}}};
        \draw (44,152) node [anchor=north west][inner sep=0.75pt]   [align=left] {\textbf{\textcolor[rgb]{0.82,0.01,0.11}{B}}};
        \draw (104,88) node [anchor=north west][inner sep=0.75pt]   [align=left] {\textbf{\textcolor[rgb]{0.82,0.01,0.11}{A}}};
        \draw (101,140) node [anchor=north west][inner sep=0.75pt]   [align=left] {\textbf{\textcolor[rgb]{0.82,0.01,0.11}{A}}};
        \draw (174,97) node [anchor=north west][inner sep=0.75pt]   [align=left] {\textbf{\textcolor[rgb]{0.82,0.01,0.11}{A}}};
        \draw (173,123) node [anchor=north west][inner sep=0.75pt]   [align=left] {\textbf{\textcolor[rgb]{0.82,0.01,0.11}{A}}};
        \draw (174,222) node [anchor=north west][inner sep=0.75pt]   [align=left] {\textbf{\textcolor[rgb]{0.82,0.01,0.11}{B}}};
        \draw (104,110) node [anchor=north west][inner sep=0.75pt]   [align=left] {\textbf{\textcolor[rgb]{0.82,0.01,0.11}{B}}};
        \draw (174,155) node [anchor=north west][inner sep=0.75pt]   [align=left] {\textbf{\textcolor[rgb]{0.82,0.01,0.11}{B}}};
        \draw (97,214) node [anchor=north west][inner sep=0.75pt]   [align=left] {\textbf{\textcolor[rgb]{0.82,0.01,0.11}{B}}};
        \draw (405,87) node [anchor=north west][inner sep=0.75pt]   [align=left] {\textbf{Score}};
        \draw (70,221) node [anchor=north west][inner sep=0.75pt]   [align=left] {\textbf{\textcolor[rgb]{0.82,0.01,0.11}{A}}};
        \end{tikzpicture}
        \caption{OSR in Machine learning. 1,2,3 and 4 are known class samples for training the network. Along with the known class samples, Unknown classes like A and B will also appear in testing phase. OSR should identify these unknown classes apart from classifying the known classes. }
        \label{fig:OSR}
    \end{figure*}

Liu {\it et al.}~\cite{LiuBenPoo17,liu2019dynamic} focused their attention on task offloading as well as the transmit power and computation frequency allocation in a multi-user multi-server \ac{MEC} system, where the servers are deployed at the network edge to provide computation services for mobile users. Each user has two queuing buffers, one for local computation and one for offloading, respectively, while each server has the queuing buffers for the offloaded tasks from all users. By optimizing over each user's transmit power and computation frequency allocation along with each server's computation resource scheduling, the authors aimed for minimizing the users' transmit and computation power consumption subject to specific statistical constraints on each queue length. Specifically, the constraint on the queue length violation probability was first considered. Furthermore, taking into account the event of queue length violation, the authors characterized the violation events by the \ac{GPD} and imposed constraints on the \ac{GPD} parameters. The resource allocation and scheduling approaches relied on using Lyapunov optimization in \cite{LiuBenPoo17,liu2019dynamic}. Moreover, the authors considered the load balancing problem among servers and proposed a user-server association approach by resorting to matching theory \cite{liu2019dynamic}.
In \cite{zhou2020learning}, the task offloading problem is formulated by Zhou {\it et al.}~with the help of \ac{MAB} in the Internet of health things. To characterize the \ac{URLLC} constraint, the tail distribution of extreme events is approximated by \ac{GPD}.
In \cite{zhu2021reliability}, Zhu {\it et al.}~considered a multi-server \ac{MEC} network with the transmission/offloading of the information in the form of finite block-length codes. \ac{EVT} is applied in characterizing the extreme event of queue length violation.

Samarakoon {\it et al.}~\cite{samarakoon2019distributed} studied the transmit power allocation problem in \ac{V2V} communications in which the \ac{GPD} is used for characterizing the tail distribution of the queue length. The authors considered centralized learning and federated learning frameworks to characterize the \ac{GPD} parameters. In the centralized framework, all \ac{VUEs} send their local queue length samples to the \ac{RSU}. The \ac{RSU} learns the \ac{GPD} parameters using all samples and feeds back the parameters to all \ac{VUEs}. In the federated learning framework, each VUE learns the local \ac{GPD} parameters with the aid of its own queue length samples. Then, instead of the queue length samples, the \ac{VUEs} share their own learned parameters with the RSU. The RSU then finds the global \ac{GPD} parameters by aggregating all local parameters and sends the global parameters to all \ac{VUEs}.
Liu and Bennis \cite{LiuBen21} further considered the correlation of the data samples in federated learning for the \ac{GPD} parameters.

Considering the V2V network, Liu and Bennis \cite{liu2018ultra} took into account the network-wide maximal queue length among all \ac{VUEs} and studied the transmit power allocation problem. The objective of the optimization problem was to minimize the sum of all \ac{VUEs}' time-averaged transmit power subject to specific time-averaged constraints on the mean and variance of the maximal queue length. The authors proposed an \ac{EVT}-based power allocation approach in which first the \ac{GEV} distribution is leveraged for characterizing the maximal queue length. Then the estimation of the \ac{GEV} distribution parameters is aided by the Pickands–Balkema–de Haan theorem  \cite{de2006extreme}. Based on the estimated \ac{GEV} distribution, the \ac{VUEs} are allocated the transmit power if they have the chance to reach the maximal queue length.
Liu {\it et al.}~\cite{LiuWicSurBenDeb23} focused on the maximal transmission latency in the \ac{UAV} deployment problem. They characterized the maximal latency by the \ac{GEV} distribution and incorporated both \ac{EVT} and Gaussian process regression to predict the maximal latency statistics for any arbitrary position of the \ac{UAV}.

In\cite{AbdLiuSamBenSaa18,abdel2019optimized,HsuLiuSumWeiBen21, HsuLiuWeiBen22}, the authors minimized the transmit power, respectively, in \ac{V2V} communication and in industrial \ac{IoT}s subject to a probabilistic constraint on \ac{AoI} threshold violation events, which are further characterized by the \ac{GPD}.
Hsu {\it et al.}~\cite{HsuLiuSumWeiBen21,HsuLiuWeiBen22} further leveraged federated learning for \ac{GPD} characterization.
The maximal \ac{AoI} evaluated over a time period in industrial \ac{IoT} was considered in \cite{liu2019taming}. Therein, Liu and Bennis characterized the maximal \ac{AoI} as the \ac{GEV} distribution and focused their attention on taming the tail behavior of the maximal \ac{AoI}, while minimizing the sensor's transmit power and transmission block length.

\subsubsection*{Summary/Takeaway}

Since the tolerable delay outage probability is very small in the \ac{URLLC} regime, we focus on the tail distribution of the delay. Thus, the \ac{GPD} can be leveraged for characterizing the outage probability. Similarly, the \ac{GEV} distribution assists in characterizing the maximal delay within the network. Then we can design the \ac{URLLC} system around these characteristic parameters.

\subsection{Extreme Value Theory and Machine Learning}
    In traditional machine learning classification tasks, the focus is on closed-set recognition, where both training and testing samples come from known or seen classes. However, in real-world applications, classification tasks may involve samples from unknown or unseen classes, a scenario known as \ac{OSR}. The core challenge in \ac{OSR} is to correctly identify samples from these unknown/unseen classes during testing. \ac{EVT} is applied in the analysis of post-recognition scores. Moreover, the \ac{MAB} framework involves selecting the arm with the highest reward to maximize cumulative gain over multiple trials. \ac{EVT} is useful for analyzing the distribution of maxima and minima in the \ac{MAB} context.

\subsubsection*{System Description}
    The dataset provided to the recognition system includes both known and unknown classes. Only a subset of classes is used during training, while the remaining classes are reserved for testing. As illustrated in Fig.~\ref{fig:OSR}, classes 1, 2, 3, and 4 are treated as known classes for training, while classes A and B are considered unknown. In closed-set recognition, training and testing are performed using only the known classes. In contrast, \ac{OSR} systems are tested with both known and unknown classes. The recognition system can be implemented using various algorithms in \ac{ML} such as \ac{SVM}, \ac{EVM}, or \ac{NN}, which output scores or probabilities  \cite{scheirer2014probability}, \cite{yu2021deep} for a test sample relative to all the trained classes. The test sample is then assigned to the class with the highest score. Some models also account for reconstruction errors \cite{zhang2016sparse},  \cite{oza2019c2ae}, \cite{yang2021conditional} or distance distributions \cite{rudd2017extreme}, \cite{vignotto2020extreme} to identify unknown classes.
    
    
\subsubsection*{Advantages of \ac{EVT}}  
    In general, if the test sample is taken from the trained classes, the score of the class to which it belongs would be higher (match score), and the rest of the classes' scores (non-match score) would be lower. If an unknown class sample is taken for testing, the score for all the trained classes should be very low. Hence, an appropriate threshold on this score will differentiate the unknown from a known class. One can observe that in the distribution of scores, the unknown class sample either falls in the upper tail of the non-match score or the lower tail of the match score. Thus, the \ac{GPD} or \ac{GEV} distributions from \ac{EVT} help analyze the score distributions to identify the unknown classes in the \ac{OSR} context. As we can identify the unknown classes during the testing phase, one can update/retrain the models (incremental learning) with the unknown classes. In this case, retraining is carried out by considering all the samples of the new classes and only a few percentages of samples of previously trained classes. 
\subsubsection*{Applicability Illustration}
    The extreme values of a score distribution obtained by any recognition task can be modeled as the reverse Weibull distribution/Weibull distribution depending upon the data boundedness. If the data is bounded above, it can be modeled as a reverse Weibull distribution. If the data is bounded below, it can be modeled as a Weibull distribution. The lower tail of matched scores can be modeled as Weibull distribution, and the upper tail of non-match scores can be modeled as reverse Weibull distribution \cite{scheirer2014probability}. In \cite{rudd2017extreme}, \cite{vignotto2020extreme}, margin distances were modeled as Weibull/\ac{GEV} distributions in proposing new classifiers like \ac{EVM} and \ac{GEV}. As the \ac{GPD} can model the tail distribution, reconstruction errors in \cite{zhang2016sparse}, \cite{oza2019c2ae}, \cite{yang2021conditional}  were modeled as \ac{GPD}. In the \ac{MAB}s scenario, each arm should give the maximum. Therefore, assuming the Gumbel/\ac{GEV} distribution for each arm is more suitable as mentioned in \cite{cicirello2005max}, \cite{streeter2006asymptotically}.  
\subsubsection*{State of the Art}
   Initially, \ac{EVT} was applied to post-recognition score analysis \cite{scheirer2010robust}, \cite{scheirer2011meta}, where the lower tail of the match score was modeled as a Weibull distribution, and the upper tail of the non-match score as a Reverse Weibull distribution. However, \ac{OSR} was not considered in these studies. Scheirer \textit{et al.} proposed an \ac{OSR} model having a \ac{CAP}, where the probability of class membership decreases as we move from known to unknown classes. Using \ac{EVT}, they introduced a  Weibull-calibrated \ac{SVM} (W-SVM), where training samples are split into match samples (for class $y$, $x \in K^{+}$) and non-match samples (for class $\neq y$, $x \in K^{-}$). For a given sample $x_{i} \in K = K^{+} \cup K^{-}$, the SVM decision score $s_{i}$ is computed. Match and non-match sample scores are collected separately, with non-match scores modeled as a reverse Weibull distribution and match scores modeled as a Weibull distribution. During testing, probabilities      
   \begin{equation}
      \mathbb{P}_\psi(y \mid f(x))=e^{-\left(\frac{f(x)-v_\psi}{\lambda_\psi}\right)^{\kappa_\psi}},
    \end{equation}
    \begin{equation}
      \mathbb{P}_\eta(y \mid f(x))=1-e^{-\left(\frac{-f(x)-v_\eta}{\lambda_\eta}\right)^{\kappa_\eta}}, 
    \end{equation}
     are calculated for each test sample.
    They used these two estimates to determine the probability of class membership. The product $\mathbb{P}_\psi \times \mathbb{P}_\eta$ represents ``the probability that the input belongs to the match class and is NOT from any known non-match classes''.

    \par Zhang and Patel \cite{zhang2016sparse} introduced an \ac{OSR} method using \ac{SRC}, where reconstruction errors for test samples are smaller for matched classes than for non-matched ones. They applied \ac{EVT} to model the tail distribution of reconstruction errors for matched classes. The authors also found that sparse coefficients for open-set examples differ significantly from closed-set ones. Analyzing the non-matched reconstruction error sums revealed distinct distributions for closed-set and open-set samples. To enhance \ac{OSR} performance, they combined the right tail of the matched class distribution and the left tail of the non-matched distribution using \ac{GPD} to identify unknown classes.       
    \begin{table*}[h]
        \centering
        \caption{Applications of \ac{EVT} in \ac{ML}. }\label{Table: ML}
         \resizebox{\textwidth}{!}{%
         \begin{tabular}{|l|l|l|l|l|}
         \hline
          Ref & Application~ & Score/Reconstruction error   & Weibull/GEV/GPD  \\
           \hline
         \cite{scheirer2014probability}  & Weibull- calibrated SVM (W-SVM)      & Score     & Weibull/Reverse Weibull   \\
        \cite{zhang2016sparse} & Sparse-representation       & Reconstruction error           & GPD                 \\
        \cite{bendale2016towards} & Open-Max algorithm    & Distance            &  Weibull \\
          \cite{oza2019c2ae},\cite{yang2021conditional} & Class-conditioned autoencoder      & Reconstruction error          &  GPD \\
         \cite{ni2021open}  & Deep discriminative representation network       & Distance              & Reverse Weibull \\
         \cite{yu2021deep} & CNN        & Score          & Weibull \\
          \cite{rudd2017extreme} & EVM       & Distance          & Weibull \\
          \cite{vignotto2020extreme} & GPD and GEV Classifiers       & Distance          & GPD and GEV \\
          \hline
         \end{tabular}%
         }
    \end{table*}
    
    \par As a further advance, Bendale and Boult \cite{bendale2016towards} introduced the Open Max algorithm to reject open-set samples. They used values from the penultimate layer of deep networks as \ac{AVs} and proposed an \ac{EVT}-based meta-recognition algorithm to detect outliers. Each class is represented by a \ac{MAV}, computed as the mean of correctly classified samples. For an input image, the distances between the \ac{MAV} and correctly classified samples are calculated. Weibull fitting is applied to the largest distance, which is then used to estimate the probability of outlier detection.    
    
    
    \par Oza and Patel \cite{oza2019c2ae} proposed an \ac{OSR} method using a class-conditioned autoencoder with a two-stage training process: closed-set and open-set. Closed-set training involves a general encoder and classifier with traditional loss functions, while open-set training focuses on training the decoder by freezing the autoencoder's parameters. The decoder minimizes reconstruction errors for matched labels and poorly reconstructs non-matched labels. Reconstruction errors are modeled with \ac{EVT}, where matched and non-matched errors follow a \ac{GPD}. During testing, if the reconstruction error is below a threshold, the sample is considered a known class; otherwise, it is labeled as unknown.
    

    \par  Bridge {\it et al.} \cite{bridge2020introducing} proposed the \ac{GEV} activation function, based on \ac{EVT}, to improve performance on unbalanced data. Unlike the sigmoid function, the \ac{GEV} activation, implemented as a trainable layer, enhanced results in tasks like COVID-19 image classification.
    

    \par Ni and Huang \cite{ni2021open} used a \ac{DDRN} to embed human gait features, applying \ac{EVT} to find the \ac{CIP}. Known identities cluster in the embedding space, while unknown ones are distant. They modeled class-specific distances with reversed Weibull distribution and used a threshold to identify unknown samples.
    

    Yang {\it et al.} \cite{yang2021conditional} developed an intrusion detection system using class-conditioned autoencoders, where reconstruction errors for known attacks are smaller than those for unknown ones. They applied \ac{EVT} to model the distribution of these errors, identifying unknown attacks by comparing the test sample’s reconstruction error to the tail of the training error distribution, modeled by \ac{GPD}.
    
    \par  As a further development, Yu {\it et al.} \cite{yu2021deep} proposed a framework for \ac{OSFD}, addressing unpredictable fault modes during testing. \ac{OSFD} is split into \ac{SOSFD} and \ac{COSFD}, where \ac{SOSFD} uses a 1-D \ac{CNN} trained on identical training and test distributions, while \ac{COSFD} employs a bilateral weighted adversarial network for different distributions. Recognition scores from these models are fitted to a Weibull distribution using \ac{EVT}. The probability of a test sample belonging to a known class is calculated using the Weibull \ac{CDF}, with samples surpassing a threshold classified as unknown.   
    

    \par Gong {\it et al.} \cite{gong2021multi} proposed a deep neural network for open-set wireless signal identification, consisting of three parts: \ac{MC}, an improved counterfactual \ac{GAN}, and an \ac{OSC}. The \ac{MC}, with a \ac{CNN} feature extractor, learns general input representations. After training, the classification error is modeled with \ac{EVT} to identify boundary samples, which are then fed into the counterfactual \ac{GAN} to generate synthetic open-set samples. The \ac{OSC} uses both synthetic and original samples to perform \ac{OSR}.    
    

    \par Rudd {\it et al.} \cite{rudd2017extreme} introduced the EVM classifier, leveraging EVT concepts. For each training sample $(x_{i}, y_{i})$, the margin distance to the closest sample from a different class is calculated as $m_{ij} = ||x_i - x_j|| / 2$. The authors focus on finding the maximum margin distance and its distribution. Using the margin distribution theorem, they showed that the distribution of minimal margin distances follows a Weibull distribution. Based on this, the probability that an unknown sample $x'$ belongs to the boundary of $x_i$ is:
   \begin{equation}
        \Psi (x_i,x':k_i,\lambda_i) =e^{-\left (\frac{\left \| x_i-x' \right \|}{\lambda_i}  \right )^{k_i}}.
    \end{equation}
   The probability that sample $x'$ belongs to class $\mathcal{C}_l$ is $\hat{\mathbb{P}}(\mathcal{C}_l|x')=\arg\max_{i:y_i=\mathcal{C}_l}\Psi (x_i,x';k_i,\lambda_i)$. The final classification decision is given by
    \begin{equation}
        y^* = \begin{cases}
      \arg\max_{l\in\{1,\dots,M\}}\hat{\mathbb{P}}(\mathcal{C}_l|x'), & \hat{\mathbb{P}}(\mathcal{C}_l|x') \geq \delta, \\
      \text{Unknown}, & \text{otherwise},
    \end{cases}
    \end{equation}
    where $\delta$ is a threshold for classifying known versus unknown classes.

    \par Vignotto and Engelke \cite{vignotto2020extreme} introduced the GPD and GEV classifiers for anomaly detection based on \ac{EVT}. In the \ac{GPD} classifier, distances $D_i = |x_i - x_0|$ between a new sample and training data are fitted to a distribution. If the lower tail matches the normal class, $x_0$ is normal; otherwise, it's abnormal. The shape parameter $\xi$ is estimated using Hill's estimator and compared to a threshold.
    The nearest neighbor distance $D_i^{\min}$ is computed for each training sample in the \ac{GEV} classifier. A negative \ac{GEV} distribution is fitted, and if ${\mathbb{P}}(-D^{\min} < -d_0^{\min})$ is small, $x_0$ is abnormal.

    \begin{figure}[h]
        \centering
        \includegraphics[scale=0.6]{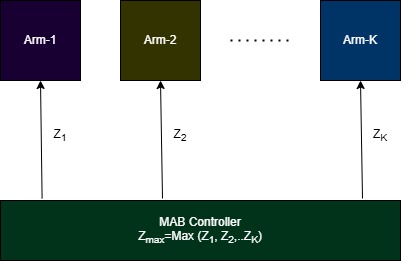}
        \caption{\ac{MAB} with \ac{EVT} }
        \label{fig:mab}
    \end{figure}
    The authors proposed a new approach for outlier/anomaly detection in time series data with the help of \ac{EVT} in \cite{siffer2017anomaly}. Dang \textit{et al.} \cite{dang2019open} introduced the \ac{OSmIL} framework, which consists of two parts. In the first part, an exemplar selection method is used to identify the boundary samples for each class, representing the class with these edge samples. In the second part, these edge samples are used for training and testing with the \ac{EVM} algorithm proposed by Rudd {\it et al.} \cite{rudd2017extreme}.
    Wilson {\it et al.} \cite{wilson2022deepgpd} focused on modeling Geospatio-Temporal data to predict extreme events using deep learning. They applied the \ac{GPD} to model excesses over a threshold, which fits well with the data. The parameters of the \ac{GPD} are learned through deep learning techniques, while \ac{GAN} are used to generate artificial data samples that match a known distribution.    
    Authors \cite{bhatia2021exgan} proposed ExGAN, known as Extreme-GAN, for generating extreme samples of known distribution with the help of \ac{EVT}.  \textcolor{black}{Table \ref{Table: ML} presents different ways in which \ac{EVT} is applied in \ac{ML} scenarios.}  
     \par The framework of \ac{MAB} allocates limited resources to multiple competing options to maximize the cumulative gain. In a one-armed bandit, pulling the lever gives a specific reward. Consider a slot machine with $n$ arms, where each arm has its own unknown prior distribution. Each arm's selection has its merits/demerits, so the challenge here is to select the arms sequentially, one by one, to maximize the total cumulative gain. Hence, the best arm may be selected from the set of  available arms in each time slot. Here the concept of exploration and exploitation is used, where the first few trials are used to explore which arms are giving the maximum rewards. The arms selected after the exploration will be exploited in the subsequent trials to maximize the cumulative gain.

     \ac{MAB} problems involve selecting arms to maximize cumulative rewards over time, with each arm having an unknown reward distribution. \ac{EVT} is applied in \ac{MAB} to model and analyze the tail behavior of reward distributions, focusing on identifying the maximum rewards across multiple trials. By incorporating \ac{EVT}, \ac{MAB} algorithms can better estimate the extreme outcomes, optimizing decision-making in uncertain environments. \textcolor{black}{As illustrated in Fig.~\ref{fig:mab}, the \ac{MAB} problem involves selecting from multiple arms to maximize cumulative rewards, balancing exploration (trying new arms) and exploitation (choosing the best-known arm).}
    \par Cicirello and Smith \cite{cicirello2005max} proposed a new variation of \ac{MAB} called Max K-armed bandit with the idea to allocate the trials to maximize the expected best single sample reward. Here the authors assumed that each arm draws a sample from the Gumbel distribution. To maximize the expected max single sample reward in $N$ trials, the authors proved that the best arm observed should be sampled at a rate increasing double exponentially relative to the number of samples given to the other $k-1$ arms.
    Streeter and Smith \cite{streeter2006asymptotically} presented an asymptotically optimal algorithm for the max-k armed bandit problem. Here, the authors considered the issue of allocating the trials to a set of $k$ slot machines, where each gives its reward from an unknown distribution. They considered each arm reward to be drawn from a \ac{GEV} distribution with unknown parameters. Consider the $i^{th}$ arm and pull it $n$ times, furthermore let $m_{n}^{i} $ denote the expected maximum reward for $i^{th}$ arm. 
    The authors explored the samples of the $i^{th}$ arm to identify the estimate of $m_{n}^{i} $ as $\widehat{m}_{n}^{i} $ along with property that 
    \begin{equation}
       \mathbb{P}\left [ \left | \widehat{m}_{n}^{i}-m_{n}^{i} \right |< \epsilon  \right ]\geq 1-\delta .
    \end{equation}
    In exploitation $\widehat{i}$ will be selected for the remaining trials, where we have    
    \begin{equation}       
         \widehat{i}= \underset{1\leq i\leq k}{\operatorname{argmax}} \,     
        \widehat{m}_{n}^{i}.
    \end{equation}
    The authors also derived the so called \ac{PAC} bounds on the sample complexity of estimating $m_{n}$.
    Carpentier and Valke \cite{carpentier2014extreme} proposed the Extreme Hunter algorithm for extreme bandits and provided their analysis. The classical bandits focus on the arm with the highest mean, but the extreme bandits focus on the arm with the heaviest tail. Given k-armed bandits, in each timestamp $t$, a sample $X_{k,t}$  is emitted from the $k^{th}$ arm with an unknown probability distribution $P_{k}$. The learner selects the arm $I_{t}$, and the corresponding sample is $X_{I_{t},t}$. The reward for the learner is defined as 
    \begin{equation}
      G_{n}^{L}=\max_{t\leq n}X_{I_{t},t}.
    \end{equation}
    The optimal strategy is to select the arm with the heaviest tail, and the expected reward is 
    \begin{equation}
    \mathbb{E}\left[G_n^*\right]=\max _{k \leq K} \mathbb{E}\left[\max _{t \leq n} X_{k, t}\right].
    \end{equation}
    They have considered  a second-order Pareto distribution for  $P$  and defined the terminology extreme regret as
    \begin{align}    \mathbb{E}\left[R_n^L\right]=&\mathbb{E}\left[G_n^*\right]-\mathbb{E}\left[G_n^L\right]\nonumber \\
        =&\max _{k \leq K} \mathbb{E}\left[\max _{t \leq n} X_{k, t}\right]-\mathbb{E}\left[\max _{t \leq n} X_{I_t, t}\right].
    \end{align}
    So the minimization of extreme regret is achieved by pulling the specific arm with the heaviest tail.
\subsubsection*{Summary/Takeaway}
    The task of a recognition system is to identify the class belongingness of unknown/known test samples. In general, closed-set/open-set recognition systems will specify the score/probability of the test sample with respect to all the trained class samples. Post-recognition scores of the match and non-match samples are helpful for identifying unknown classes. Various recognition systems were analyzed in the context of \ac{OSR}, starting from \ac{SVM} to more complex deep neural networks. Either score  \cite{scheirer2014probability}, \cite{yu2021deep} or reconstruction errors \cite{zhang2016sparse},  \cite{oza2019c2ae}, \cite{yang2021conditional} were considered in fitting the Weibull/reverse Weibull distributions to analyze the tail behavior in deciding upon unknown samples. The authors of \cite{zhang2016sparse}, \cite{oza2019c2ae},\cite{yang2021conditional} considered the \ac{GPD} for tail analysis. The key takeaway is one can analyze the tail behavior of the post recognition system score in identifying the unknown/known test sample with the help of \ac{EVT}. Moreover, the \ac{GEV}/Gumbel distribution helps in getting the maximum cumulative gain for \ac{MAB}s. \\

 \subsection{Various other applications}

\subsubsection{Energy Efficiency}
The number of mobile users escalates, leading to substantial energy consumption in wireless systems. The number of bits transmitted per unit of power consumption is referred to as energy efficiency. One must use energy efficiently to cater for the ever-increasing wireless traffic demands. The following literature survey will present how \ac{EVT} is used for improving energy efficiency.
Considering the double Nakagami fading channel of wireless sensor networks \cite{bahl2015asymptotic}, Bahl {\it et al.}~focused on the energy efficiency of multicast systems. The asymptotic expression of the \ac{CDF} of the minimum \ac{SNR} was derived with the aid of \ac{EVT}. The authors also proved that the minimum \ac{SNR} distribution converges to a Weibull distribution, and they analyzed the throughput. Furthermore, the network energy efficiency is given by the ratio of the total throughput divided by the total power consumption. 
Considering the uncertain nature of wireless channels and the network queue size, the authors in \cite{ji2022energy} focused on the energy efficiency of \ac{MEC} systems. Therein, \ac{EVT} was used for deriving the probability of uncertain events.

\subsubsection{Signal Detection}
Milstein {\it et al.}~\cite{milstein1969robust} studied the detection problem of noisy binary signal. Furthermore, the detection error probability was estimated by using \ac{EVT}. Moreover, in \cite{jeruchim1984techniques}, \ac{BER} estimation was carried out by using \ac{EVT}. In \cite{guida1988comparative}, the performance of \ac{EVT} and \ac{GEV} estimators considering factors like the estimation methods, sample size, and initial distribution is carried out with the aid of Monte Carlo simulations.

\subsubsection{{OFDM}}
By exploiting that the band-limited \ac{OFDM} signal converges weakly to a specific Gaussian random process, the expression for the distribution of the \ac{PMEPR} was derived by Wei {\it et al.}~in \cite{wei2002modern}. Then an approximation of the \ac{PMEPR} distribution of the \ac{OFDM} signal is expressed by applying \ac{EVT}. Additionally, considering an \ac{OFDM} system suffering for high fade rates, Kalyani and Giridhar \cite{kalyani2006extreme} proposed an \ac{EVT}-based Huber-M estimator for decision-directed channel tracking. 
In \cite{jiang2008derivation}, Jiang {\it et al.}~proposed an analytical expression for \ac{PAPR} distribution in \ac{OFDM} systems under the constraint of unequal power allocation to the individual subcarriers using \ac{EVT}. In \cite{hung2014papr}, \ac{EVT} is used to derive the complementary \ac{CDF} of the \ac{PAPR} for the \ac{MRT} and \ac{EGT} techniques conceived for \ac{OFDM} systems.
Considering the transmit antenna correlation in \cite{park2008performance}, the authors derived the \ac{CDF} of the received \ac{SNR}  for the transmit diversity, antenna selection, and spatial multiplexing modes of multi-antenna transmission schemes. Furthermore, \ac{EVT} was used in the asymptotic analysis of maximum throughput transmission.

\subsubsection{Decision Directed Channel Tracking}
\ac{EVT} has been effectively employed for channel estimation using decision directed channel tracking \cite{4151173, kalyani2006extreme, kalyani2006leverage, kalyani2007mitigation}. It has also been used for mitigating narrow-band interference in \ac{OFDM} systems \cite{4151173}. The presence of multiple narrow-band interferences in \ac{OFDM} systems leads to an outlier-contaminated Gaussian regression problem, and Kalyani and Giridhar \cite{4151173} used EVT-based weights to update their M estimator-aided solution. The proposed estimator outperforms both the Gaussian \ac{PDF}-based \ac{ML} estimator and an M estimator based only on Huber’s cost function \cite{4151173}. The authors of \cite{kalyani2006leverage, kalyani2007mitigation} also demonstrated how EVT can be used for reducing error propagation induced by wrong symbol decisions during decision directed channel tracking. While conventional estimators like the 2D-minimum mean square error channel estimator and the expectation maximization based Kalman channel estimators cannot handle such scenarios, the EVT-based solution of \cite{kalyani2006leverage, kalyani2007mitigation} has a significantly better error rate performance. 
\color{black}
\subsubsection{Network Throughput}
In \cite{gesbert2010rate}, \ac{EVT} is used by Gesbert and Kountouris to derive the scaling laws of the network's sum rates by assuming that the number of users per cell increases.
Considering the cases of both perfect and imperfect interference \ac{CSI}, Ji {\it et al.~}\cite{ji2010capacity} derived the asymptotic capacity of a multicast network in a spectrum-sharing system by using \ac{EVT}.

\subsubsection{User Scheduling}
A resource redistributive opportunistic scheduler was designed in \cite{cho2009resource} to enhance resource utilization relying on weighted fairness. The authors used \ac{EVT} to prove that as the number of users becomes large, the throughput loss is linear by proportional to the degree of weighted fairness.
In \cite{low2010optimized}, the opportunistic multicast scheduling scheme was proposed to strike a trade-off between multi-user diversity and multicast gain in cellular networks. Briefly, a subset of users are selected at an optimal rate for delay minimization in each time slot. \ac{EVT} was used for their system delay analysis in both homogeneous and heterogeneous networks. 
In \cite{al2012asymptotic}, Al-Ahmadi studied the asymptotic capacity of opportunistic scheduling in the face of shadowed Nakagami fading using \ac{EVT}. The author showed that the maximum of \ac{i.i.d.} generalized-$K$ \ac{RVs} converges to the Gumbel distribution.
In \cite{kalyani2012analysis}, Kalyani and Karthik utilized \ac{EVT} in deriving the equations of scheduling gain and spectral efficiency in \ac{OFDM} systems for both the proportional fair and maximum rate scheduling algorithms

\subsubsection{Energy Harvesting}

Considering the \ac{WPS} systems in \cite{xia2015efficiency}, Xia and Aissa focused on the power transfer efficiency of the users on the average output \ac{DC} power. Furthermore, they derived the limiting distribution of maximum instantaneous output \ac{DC} power among users in two scenarios namely for both symmetric and asymmetric users. Their result showed Gumbel-type and Frechet-type distributions for the symmetric and asymmetric cases.
In \cite{ding2021harvesting}, Ding proposed a pair of efficient transmission strategies,  i.e., wireless power transfer-assisted non-orthogonal multiple
access and \ac{BAC-NOMA} for \ac{IoT} devices. The asymptotic analysis of the outage probability error floor of \ac{BAC-NOMA} is considered with the aid of \ac{EVT}.
 \subsubsection{Threshold Selection}
 In \cite{m2021efficient}, Abdelmoniem {\it et al.}~focused on designing an efficient gradient compression technique of minimal overhead for distributed training systems. Here, the sparsity of gradients is exploited to model the gradients with the aid of some sparsity-inducing distributions. They used \ac{EVT} to propose a threshold-based sparsification method termed as, Sparsity-Inducing Distribution based Compression.

\subsubsection{Target Detection}
In most target detection problem, \ac{CFAR} detectors assume that the interference probability distribution belongs to a distribution family, e.g.,  generalized gamma or K-distribution. However, without any specific assumptions conceiving the interference, Piotrkowski \cite{Piotrkowski06} proposed a \ac{CFAR} detector by using \ac{EVT}. As a further development,
Broadwater and Chellappa \cite{BroChe10} incorporated the features of the \ac{GPD} and the Kolmogorov–Smirnov statistical test in their algorithm for determining the detection threshold. The proposed approach can adaptively maintain low false alarm rates in the presence of targets.

\subsubsection{Localization}
An indoor localization problem was studied in \cite{YanGuoGuoZhaZha20} by Yang {\it et al.}~based on the received signal
strength indication. The authors proposed a trilateration algorithm for localization by using \ac{EVT}.

\subsubsection{Spectrum Utilization}
Rocke and Wyglinski \cite{RocWyg11} investigated the modeling of the geographically varying spectrum utilization. Specifically, the \ac{GPD} was used to characterize the statistics of the overall spectrum occupancy.
Rocke \cite{Rocke17} focused the attention on the harm claim threshold, which is a regulatory approach conceived for controlling the interference. Accordingly, the receivers are designed to operate within acceptable performance levels controlled by thresholds. The author used \ac{EVT} to characterize the receiver environment.

\subsubsection{Packet Error Rate analysis} \ac{EVT} was also used to approximate the \ac{PER} of uncoded schemes in \ac{AWGN} channels in \cite{mahmood2016packet}.  The approximation holds for any finite \ac{SNR} values and packet lengths instead of the asymptotic case. A generic \ac{BER} approximation expression is presented first by using \ac{EVT}, and then the \ac{PER} expressions are derived. Here, the \ac{PER} is asymptotically approximated by the Gumbel distribution, as the packet length tends to infinity.
\subsubsection{Antenna Array Performance} In \cite{krishnamurthy2019peak}, \ac{EVT} was used by Krishnamurthy {\it et al.}~to find the maximal side lobe value of an antenna array. First, at each angle of the array field, the beampattern distribution expression was approximated by an exponential distribution, and the authors proved that the maximal sidelobe follows the Gumbel distribution.
\subsubsection{Backscatter Communication} \ac{EVT} was used in \cite{li2019capacity} by Li {\it et al.}~to derive an approximate expression for the ergodic capacity of a backscatter communication system, where the tags, e.g., of radio frequency identification (RFID), harvest energy from externally-generated carriers for tag circuit operation and signal backscattering. 
Therein, the tag with the maximum \ac{SNR} is selected for combating double fading.
In \cite{al2020performance}, Al-Badarneh {\it et al.}~studied an RFID backscatter communication system consisting of a monostatic RFID reader and multiple tags. The authors used order statistics to analyze system performance by selecting the $k^{th}$ best \ac{SNR} value among all read-tag links and derived asymptotic expressions for average and effective throughputs.
\subsubsection{Applications for \ac{i.n.i.d.} RVs} Kalyani and Karthik \cite{kalyani2012analysis} used \ac{EVT} to characterize the scheduling gain and spectral efficiency of opportunistic scheduling algorithms in \ac{OFDM} systems. In case of opportunistic scheduling, the scheduling gain and the spectral efficiency are functions of the maximum SNR across users and the paper demonstrates how  we can handle the non-identical nature of the users and harness EVT to study the asymptotic maximum order statistics, depending upon the applications. In \cite{subhash2022asymptotic}, Subhash {\it et al.}~used order statistics to derive the $k^{th}$ maximum of a sequence of non-central chi-square random variables with two degrees of freedom assuming non-identical centrality parameters. Furthermore, the results were used in applications like \ac{UAV}-assisted \ac{IoT} systems and selection combining receivers. EVT was also used to derive the maximum SNR statistics over non-identical relay links in \cite{subhash2021cooperative}. Sagar and Kalyani \cite{sagar2023multi} utilized \ac{EVT} in the asymptotic analysis of best \ac{RIS} selection in a Multi-\ac{RIS} communication system having non-identical links across the RISs.

\section{Lessons and Future Directions}

In the previous sections, we discussed how EVT has been used for simplifying the analysis of different communication systems. While substantial advances have been made, it is also important to keep in mind the limitations of this theory.
\begin{itemize}
    \item First of all, one has to understand that like any asymptotic solutions, these results are only accurate for very high system dimensions and the accuracy of these approximations in finite regimes depends on the statistical nature of the RV sequences considered. For many of the applications, it can be observed that the asymptotic distributions are very close to the exact distributions of the extremes evaluated over finite-length sequences, but this does not guarantee that the same rate of convergence can be expected for all the distributions.
    \item Secondly, as discussed in Section II, the underlying uniformity assumptions are very important and one should carefully evaluate their validity to make use of EVT for the analysis of an extreme RV.
    \item Finally, in the case of extremes over sequences of i.n.i.d. RVs, the limiting distributions may assume more general forms than the EVDs and hence one has to be careful with the assumptions/approximations in their analysis. 
\end{itemize}

\par All the results in the previous sections assume that the extremes are evaluated over sequences of independent RVs. However, in communication theory, we also encounter practical scenarios, where we are interested in extreme order statistics over sequences of correlated RVs. \ac{EVT} also has recently received attention for correlated RVs but has not yet been used for communication theoretic applications. Majumdar \textit{et al.} \cite{majumdar2020extreme} discuss the extreme order statistics over both weakly and strongly correlated sequences of RVs. In applications like best antenna selection or best \ac{UAV} link selection in non-terrestrial networks, the links may be correlated, and hence it is important to take this into account for system design. Studies involving correlated RVs constitute an exciting direction for more applications using \ac{EVT} in communications.  

\par To analyze and optimize the performance of wireless communications technologies such as \ac{URLLC} and massive \ac{MIMO} using \ac{EVT}, the parameter characterizations of the \ac{GEV} distribution and \ac{GPD} are inefficient in terms of samples utilization. In the former, a maximal value data is obtained have a set of $M$ data, where $M$ is the  block size. In other words, the fraction of $\frac{M-1}{M}$ of the data is not utilized. In the latter, if we consider the value $\bar{F}^{-1}(1/N)$ as the threshold over which the associated \ac{GPD} is efficient characterization, $\frac{N-1}{N}$ fraction of the data will not be beneficially leveraged. Specifically, $N=100$ with $\frac{N-1}{N}=99\%$ was considered in some works.  
Higher $M$ and $N$ values give more accurate characterization, but they incur higher delay before reaching convergence.
Moreover, the higher training overheads will significantly degrade data transmission efficiency in massive \ac{MIMO} schemes and the delay of \ac{URLLC}, respectively. Hence, the tradeoff between sample complexity and performance convergence/guarantee deserves dedicated studies for effectively applying \ac{EVT} in \ac{NG} networks.
\begin{figure*}
    \centering

\tikzset{every picture/.style={line width=0.75pt}} 

\begin{tikzpicture}[x=0.75pt,y=0.75pt,yscale=-1,xscale=1]

\draw   (248.27,148.8) .. controls (248.27,126.71) and (291.02,108.8) .. (343.77,108.8) .. controls (396.51,108.8) and (439.27,126.71) .. (439.27,148.8) .. controls (439.27,170.89) and (396.51,188.8) .. (343.77,188.8) .. controls (291.02,188.8) and (248.27,170.89) .. (248.27,148.8) -- cycle ;
\draw    (341.77,108.8) -- (342.25,54.8) ;
\draw [shift={(342.27,52.8)}, rotate = 90.51] [color={rgb, 255:red, 0; green, 0; blue, 0 }  ][line width=0.75]    (10.93,-3.29) .. controls (6.95,-1.4) and (3.31,-0.3) .. (0,0) .. controls (3.31,0.3) and (6.95,1.4) .. (10.93,3.29)   ;
\draw    (434.27,137.07) -- (479.53,111.06) ;
\draw [shift={(481.27,110.07)}, rotate = 150.12] [color={rgb, 255:red, 0; green, 0; blue, 0 }  ][line width=0.75]    (10.93,-3.29) .. controls (6.95,-1.4) and (3.31,-0.3) .. (0,0) .. controls (3.31,0.3) and (6.95,1.4) .. (10.93,3.29)   ;
\draw    (341.77,188.8) -- (341.29,229.8) ;
\draw [shift={(341.27,231.8)}, rotate = 270.67] [color={rgb, 255:red, 0; green, 0; blue, 0 }  ][line width=0.75]    (10.93,-3.29) .. controls (6.95,-1.4) and (3.31,-0.3) .. (0,0) .. controls (3.31,0.3) and (6.95,1.4) .. (10.93,3.29)   ;
\draw    (248.27,148.8) -- (191.27,149.77) ;
\draw [shift={(189.27,149.8)}, rotate = 359.03] [color={rgb, 255:red, 0; green, 0; blue, 0 }  ][line width=0.75]    (10.93,-3.29) .. controls (6.95,-1.4) and (3.31,-0.3) .. (0,0) .. controls (3.31,0.3) and (6.95,1.4) .. (10.93,3.29)   ;
\draw    (404.77,116.8) -- (431.27,70.8) ;
\draw [shift={(432.27,69.07)}, rotate = 119.95] [color={rgb, 255:red, 0; green, 0; blue, 0 }  ][line width=0.75]    (10.93,-3.29) .. controls (6.95,-1.4) and (3.31,-0.3) .. (0,0) .. controls (3.31,0.3) and (6.95,1.4) .. (10.93,3.29)   ;
\draw    (401.77,179.8) -- (442.76,215.49) ;
\draw [shift={(444.27,216.8)}, rotate = 221.04] [color={rgb, 255:red, 0; green, 0; blue, 0 }  ][line width=0.75]    (10.93,-3.29) .. controls (6.95,-1.4) and (3.31,-0.3) .. (0,0) .. controls (3.31,0.3) and (6.95,1.4) .. (10.93,3.29)   ;
\draw    (293.77,115.8) -- (246.87,80.99) ;
\draw [shift={(245.27,79.8)}, rotate = 36.59] [color={rgb, 255:red, 0; green, 0; blue, 0 }  ][line width=0.75]    (10.93,-3.29) .. controls (6.95,-1.4) and (3.31,-0.3) .. (0,0) .. controls (3.31,0.3) and (6.95,1.4) .. (10.93,3.29)   ;
\draw    (279.77,177.8) -- (236.86,210.59) ;
\draw [shift={(235.27,211.8)}, rotate = 322.62] [color={rgb, 255:red, 0; green, 0; blue, 0 }  ][line width=0.75]    (10.93,-3.29) .. controls (6.95,-1.4) and (3.31,-0.3) .. (0,0) .. controls (3.31,0.3) and (6.95,1.4) .. (10.93,3.29)   ;
\draw   (290,28.07) -- (407.27,28.07) -- (407.27,54.07) -- (290,54.07) -- cycle ;
\draw   (423,38.07) -- (546.27,38.07) -- (546.27,69) -- (423,69) -- cycle ;
\draw   (482,99.07) -- (530.27,99.07) -- (530.27,123.07) -- (482,123.07) -- cycle ;
\draw   (422,219) -- (522.27,219) -- (522.27,265.07) -- (422,265.07) -- cycle ;
\draw   (131.27,52.07) -- (246.27,52.07) -- (246.27,98.07) -- (131.27,98.07) -- cycle ;
\draw   (117.27,126.07) -- (194.27,126.07) -- (194.27,175.07) -- (117.27,175.07) -- cycle ;
\draw   (145,211.07) -- (290.27,211.07) -- (290.27,238.07) -- (145,238.07) -- cycle ;
\draw   (315.27,230) -- (375.27,230) -- (375.27,257.07) -- (315.27,257.07) -- cycle ;
\draw    (433.27,162.07) -- (502.29,172.76) ;
\draw [shift={(504.27,173.07)}, rotate = 188.81] [color={rgb, 255:red, 0; green, 0; blue, 0 }  ][line width=0.75]    (10.93,-3.29) .. controls (6.95,-1.4) and (3.31,-0.3) .. (0,0) .. controls (3.31,0.3) and (6.95,1.4) .. (10.93,3.29)   ;
\draw   (504,152) -- (558.27,152) -- (558.27,186.07) -- (504,186.07) -- cycle ;

\draw (271,122) node [anchor=north west][inner sep=0.75pt]   [align=left] {{\Large  Possible future}\\{\Large  \ \ applications}};
\draw (293,29) node [anchor=north west][inner sep=0.75pt]   [align=left] {Green networks};
\draw (425,44) node [anchor=north west][inner sep=0.75pt]   [align=left] {Cell-free systems};
\draw (141,55) node [anchor=north west][inner sep=0.75pt]   [align=left] {Non-terrestrial\\ \ \ \ networks};
\draw (322,234) node [anchor=north west][inner sep=0.75pt]   [align=left] {AI/ML};
\draw (486,100) node [anchor=north west][inner sep=0.75pt]   [align=left] {V2X};
\draw (421,219) node [anchor=north west][inner sep=0.75pt]   [align=left] {Fluid antenna\\ \ \ \ \ systems};
\draw (151,214) node [anchor=north west][inner sep=0.75pt]   [align=left] {Holographic MIMO};
\draw (125,129) node [anchor=north west][inner sep=0.75pt]   [align=left] {Wireless\\security};
\draw (510,159) node [anchor=north west][inner sep=0.75pt]   [align=left] {ISAC};

\end{tikzpicture}

\caption{Possible future directions for applying EVT}
    \label{future_applications}
\end{figure*}
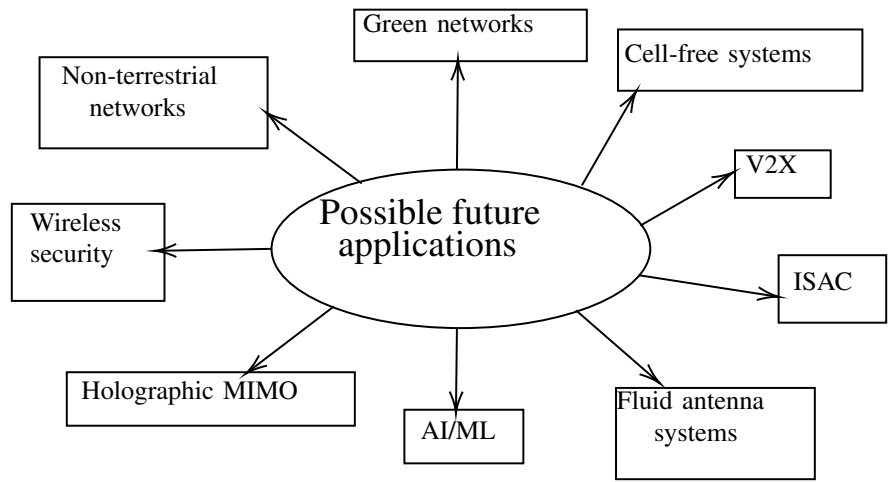
\par EVT can also be used both for the performance analysis and for the design of \ac{NG} wireless communication systems. In the following paragraphs we discuss some further topics of interest for \ac{NG} wireless networks, where EVT can be a potent tool for both design and analysis. 
\subsubsection*{Green Networks}
Considering environmental concerns, activating hundreds of antennas simultaneously in massive MIMO systems is inefficient due to the considerable energy consumption escalation caused by the active antennas' power thirsty radio frequency (RF) chains \cite{marinello2020antenna}. Selecting the $K$ best antennas in terms of the received signal power is an efficient choice in such scenarios. The statistical analysis of such systems involves characterizing the joint distribution of the SNR of the selected antennas. Investigating multivariate EVT as a potential research area could provide solutions in such scenarios, making it an interesting proposition for future exploration. 
\subsubsection*{Cell-Free MIMO Systems} Cell-free massive MIMO breaks traditional cell boundaries using distributed antennas for seamless connectivity and improved performance. It consists of numerous \ac{AP}s and a smaller number of active users spread across a large area \cite{biswas2021ap}. Each user is assigned a subset of the available APs for service and the specific AP having the strongest link is selected to serve that particular user. In \cite{duc2023power}, the authors employed multiple aerial relay stations (ARSs) of a cell-free system, where each ARS is equipped with multiple antennas. Each user is connected to a subset of ARSs, and the ARS having the strongest link is selected for communication. Hence, EVT can be applied in the performance analysis of such systems.
\subsubsection*{Non-Terrestrial Networks} Enhanced coverage is one of the pressing demands of the future wireless networks and integrated ground-air-space wireless networks are expected to play a major role in accomplishing this need. These new heterogeneous networks bring in a new set of challenges and characterizing these systems is extremely important \cite{liu2022operation, wu2023energy,azari2022evolution,rinaldi2020non}. EVT can be useful for studying such large systems supporting a large number of nodes/links determining the system performance. 
\par Similarly, EVT can be used for analyzing the performance along the strongest user link selection in non-terrestrial networks. Multiple UAV systems are deployed for tasks such as surveillance, delivery, and communication, providing improved coverage, flexibility, and resilience compared to single UAV systems. In \cite{singh2020uav}, Singh and Agrawal employed multiple UAVs as relays in a UAV-assisted cooperative communication network between ground users. They proposed two UAV selection strategies, namely the Best Harmonic Mean (HM) and the Best Downlink SNR (BDS), based schemes. In a different study Bansal and Agrawal \cite{bansal2023ris} explored the integration of \ac{RIS} in UAV-based multiuser downlink communications. In this scenario, a flying UAV provides service to several single-antenna users through multiple RISs installed on different buildings. In all these endeavors, selecting the optimal UAV is paramount, and EVT can offer useful insights into the performance of such systems. 
\subsubsection*{Wireless Security} Security in wireless networks is a problem of growing concern and the security of even the weakest user is critical. EVT can be used in the performance study of such scenarios. Such studies for the case of UAV assisted systems have been explored in \cite{LeiWanParAnsJiaPanAlo20, SubKalBadAlo22}.  
\subsubsection*{Vehicle-to-Everything communication} Autonomous driving and \ac{V2V} communication critically rely on \ac{NG} wireless networks \cite{noor20226g,zhu2022intelligent}. EVT has already been explored for optimizing vehicular communication networks \cite{liu2018ultra,mehrnia2021wireless,mehrnia2024multivariate} and EVT can be instrumental in designing these future vehicular networks. 
\subsubsection*{Holographic MIMO/Active \ac{RIS}} Large intelligent reflecting surfaces capable of controlling the scattering, reflection, and refraction characteristics of the radio waves are studied as an enabler for improving the quality of \ac{NG} wireless networks. The performance tradeoffs in different systems relying on a \ac{RIS} have gained much attention recently. Furthermore, there is a growing interest in exploring the feasibility of tunable and active metasurfaces to replace passive \ac{RIS} systems. The benefits of such active systems over traditional wireless networks as well as over networks enhanced by \ac{RIS}s, inspire substantial research in the community. EVT can be used to derive scaling laws and characterize the asymptotic performance to examine the viability of \ac{RIS} as a conducive technology for future networks. 
\subsubsection*{Integrated Sensing and Communication}
{EVT can be used to analyze and design \ac{ISAC} systems. \ac{ISAC} systems integrate sensing and communication into a unified framework, enhancing resource efficiency and enabling advanced applications such as autonomous driving, and environmental monitoring. For instance, in multistatic sensing and communication scenarios, selecting a subset of base stations having the best collective sensing and communication capabilities is crucial for both robust and efficient target detection, as well as for high-quality user service. EVT supports the analysis and design of such selection processes by modeling the distribution of extreme performance metrics, such as the highest \ac{SNR}, detection probabilities, or data rates among the BSs available. Hence, by applying EVT, \ac{ISAC} systems can optimize their configuration to achieve superior sensing and communication performance, even in dynamic and extreme propagation environments.}
\subsubsection*{Artificial Intelligence/ML} With \ac{AI} permeating all fields of engineering and sciences, telecommunication engineers are also exploring the benefits of using large \ac{AI} models for \ac{NG} communication systems. These new models have to ensure that the each worst-case system performance still complies with the minimum quality of service requirements. We believe that EVT can be extremly useful for the performance analysis of these systems. Furthermore, EVT has recently been harnessed for improving machine learning algorithms \cite{rudd2017extreme,geng2020recent}, and for characterizing the performance of machine learning algorithms \cite{tizpaz2024worst,li2024generalized}.  
\par In most of the literature on \ac{ML}-aided solutions \ac{EVT} was used in post-recognition score analysis \cite{scheirer2014probability, zhang2016sparse, oza2019c2ae,yang2021conditional}. Incorporating \ac{EVT} into deep learning frameworks would be a more challenging task. Integrating \ac{EVT} techniques into deep learning models will enhance the model's robustness and can be an interesting area of future research. It would be a very interesting problem to study how the knowledge of the extremes can be incorporated into the training of \ac{ML} models for improving their robustness to outliers and unknown classes. This would also involve extending univariate \ac{EVT} analysis to multivariate \ac{EVT}.
\par Apart from the applications mentioned above, EVT can be useful in any scenario, where the asymptotic order statistics of a sequence of RVs are of interest. All the literature discussed provides great inspiration for future exploration of this compelling theory. 

\bibliographystyle{IEEEtran}

 \bibliography{IEEEabrv,Elzanaty_bibliography.bib,LISBIB.bib,references.bib}
\end{document}